\makeatletter\@addtoreset {equation}{section}\makeatother
\newtheorem{theorem}{Theorem}
\newtheorem{lemma}{Lemma}
\newtheorem{remark}{Remark}
\newtheorem{definition}{Definition}
\newtheorem{example}{Example}
\begin{document}

\title{\bf Periodic travelling waves of the modified KdV equation and
rogue waves on the periodic background}

\author{Jinbing Chen$^1$ and Dmitry E. Pelinovsky$^{2}$ \\
{\small \it $^1$ School of Mathematics, Southeast University, Nanjing, Jiangsu 210096, P.R. China} \\
{\small \it $^2$ Department of Mathematics, McMaster University, Hamilton, Ontario, Canada, L8S 4K1 } }

\date{\today}
\maketitle

\begin{abstract}
We address the most general periodic travelling wave of the modified Korteweg--de Vries (mKdV) equation
written as a rational function of Jacobian elliptic functions. By applying an algebraic method
which relates the periodic travelling waves and the squared periodic
eigenfunctions of the Lax operators, we characterize {\em explicitly} the location of eigenvalues
in the periodic spectral problem away from the imaginary axis.
We show that Darboux transformations with the periodic eigenfunctions remain in the class
of the same periodic travelling waves of the mKdV equation.
In a general setting, there exist three symmetric pairs of simple eigenvalues
away from the imaginary axis, and we give a new representation of the second non-periodic
solution to the Lax equations for the same eigenvalues. We show that Darboux transformations
with the non-periodic solutions to the Lax equations produce rogue waves on the periodic background, which are
either brought from infinity by propagating algebraic solitons or formed in a finite region of the time-space plane.
\end{abstract}


\section{Introduction}

We address periodic travelling waves of the modified Korteweg--de Vries (mKdV) equation,
which we take in the normalized form:
\begin{equation}\label{mKdV}
u_t+6u^2u_x+u_{xxx}=0.
\end{equation}
As is well-known since the pioneer paper \cite{AKNS},
the mKdV equation (\ref{mKdV}) is a compatibility condition of the following
pair of two linear equations written for the vector $\varphi = (\varphi_1,\varphi_2)^t$:
\begin{equation}\label{3.2}
\varphi_x = U(\lambda,u) \varphi, \quad
U(\lambda,u) =\left(\begin{array}{cc} \lambda&u\\ -u&-\lambda\\ \end{array}\right),
\end{equation}
and
\begin{equation}\label{3.3}
\varphi_t = V(\lambda,u) \varphi, \quad
V(\lambda,u) = \left(\begin{array}{cc}
-4\lambda^3-2\lambda u^2&-4\lambda^2u-2\lambda u_x-2u^3-u_{xx}\\
4\lambda^2u-2\lambda u_x+2u^3+u_{xx}&4\lambda^3+2\lambda u^2\\
\end{array}\right).
\end{equation}
Assuming $\varphi(x,t) \in C^{2,2}(\mathbb{R} \times \mathbb{R})$ and $u(x,t) \in C^{3,1}(\mathbb{R} \times \mathbb{R})$,
the compatibility condition $\varphi_{xt} = \varphi_{tx}$ is equivalent to the mKdV equation
(\ref{mKdV}) satisfied in the classical sense.

Among the periodic travelling wave solutions, the mKdV equation (\ref{mKdV})
admits the normalized constant wave $u(x,t) = 1$ and two families of the normalized periodic waves given by
\begin{equation}
\label{dn-intro}
u(x,t) = {\rm dn}(x-ct;k), \quad c = 2-k^2
\end{equation}
and
\begin{equation}
\label{cn-intro}
u(x,t) = k {\rm cn}(x-ct;k), \quad c = 2k^2 - 1,
\end{equation}
where ${\rm dn}$ and ${\rm cn}$ are Jacobian elliptic functions
and $k \in (0,1)$ is the elliptic modulus (see Chapter 8.1 in \cite{Grad}
for review of elliptic functions and integrals).

The normalized constant wave $u(x,t) = 1$ is linearly and nonlinearly stable
in the time evolution of the mKdV equation (\ref{mKdV}) in the sense that any small
perturbation to the constant wave in the energy space $H^1(\mathbb{R})$ remains
small in the $H^1(\mathbb{R})$ norm globally in time, see, e.g., \cite{KPV,KT}.
Among the exact solutions to the mKdV equation (\ref{mKdV}) on the normalized constant wave,
we note the following {\em algebraic soliton}
\begin{equation}
\label{alg-soliton}
u(x,t) = 1 - \frac{4}{1 + 4 (x-6t-x_0)^2},
\end{equation}
where $x_0 \in \mathbb{R}$ is arbitrary. The algebraic
soliton propagates on the normalized constant background with the speed $c_0 = 6$.

In our previous work \cite{CPkdv}, we have constructed new solutions on the periodic background
given by the normalized periodic waves (\ref{dn-intro}) and (\ref{cn-intro}). In doing so,
we have adopted the formal algebraic method from \cite{Cao1,Cao2,Cao3} and elaborated
the following algorithm for constructing new solutions to the mKdV equation (\ref{mKdV}):

\begin{samepage}
\begin{framed}
\begin{enumerate}
\item Impose a constraint between a solution $u$ to the mKdV equation (\ref{mKdV})
and a solution $\varphi = (p_1,q_1)^t$ to the Lax system (\ref{3.2})--(\ref{3.3}) with $\lambda = \lambda_1$ and deduce
closed differential equations on $u$. These equations are satisfied if $u$ is a periodic travelling wave.
\item Characterize the set of admissible values for $\lambda_1$ and the relations between $u$ and
the squared components $p_1^2 + q_1^2$, $p_1^2 - q_1^2$, and $p_1 q_1$, The solution $\varphi = (p_1,q_1)^t$
is periodic in $x$ and is travelling in $t$.
\item Obtain the second solution $\varphi = (\hat{p}_1,\hat{q}_1)^t$ to the Lax system (\ref{3.2})--(\ref{3.3})
for the same values of $\lambda_1$. The second solution is non-periodic, it grows linearly in $x$ and $t$
almost everywhere as $|x| + |t| \to \infty$.
\item Apply Darboux transformation with the second solution $\varphi = (\hat{p}_1,\hat{q}_1)^t$ and obtain new solutions
to the mKdV equation (\ref{mKdV}) on the periodic background $u$.
\end{enumerate}
\end{framed}
\end{samepage}

As the main outcome of step 2 in \cite{CPkdv}, we obtained two pairs of admissible values
for $\lambda_1$ with ${\rm Re}(\lambda_1) \neq 0$. For the {\rm dn}-periodic wave (\ref{dn-intro}), the two pairs are real
$\pm \lambda_+$ and $\pm \lambda_-$ with
\begin{equation}
\label{eig-dn-intro}
\lambda_{\pm} = \frac{1}{2} (1 \pm \sqrt{1-k^2}).
\end{equation}
For the {\rm cn}-periodic wave (\ref{cn-intro}), the two pairs are complex-conjugate
$\pm \lambda_+$ and $\pm \lambda_-$ with
\begin{equation}
\label{eig-cn-intro}
\lambda_{\pm} = \frac{1}{2} (k \pm i \sqrt{1-k^2}).
\end{equation}

As the main outcome of step 4 in \cite{CPkdv}, we constructed an algebraic soliton propagating on the background
of the {\rm dn}-periodic wave (\ref{dn-intro}) and a fully localized rogue wave on the $(x,t)$ plane growing and decaying
on the background of the {\rm cn}-periodic wave (\ref{cn-intro}). Since the ${\rm dn}$-periodic wave (\ref{dn-intro})
converges to the constant wave $u(x,t) = 1$ as $k \to 0$, the algebraic soliton on the ${\rm dn}$-periodic wave background
generalizes the exact solution (\ref{alg-soliton}). The rogue wave on the ${\rm cn}$-periodic wave
background satisfies the following mathematical definition of a rogue wave.

\begin{definition}
Let $u$ be a periodic travelling wave of the mKdV equation (\ref{mKdV}) with the period $L$
and $\tilde{u}$ be another solution to the mKdV equation (\ref{mKdV}). We say that
$\tilde{u}$ is a rogue wave on the background $u$ if $\tilde{u}$ is different from
the orbit $\{ u(x-x_0) \}_{x_0 \in [0,L]}$ for $t \in \mathbb{R}$ and it satisfies
\begin{equation}
\label{rogue-wave-def}
\inf_{x_0 \in [0,L]} \sup_{x \in \mathbb{R}} \left| \tilde{u}(x,t) - u(x-x_0) \right| \to 0 \quad \mbox{\rm as} \quad t \to \pm \infty.
\end{equation}
\label{def-rogue}
\end{definition}

Definition \ref{def-rogue} correspond to the physical interpretation of a rogue wave as {\em the wave that
comes from nowhere and disappears without any trace}.
Rogue waves in physics are associated with the gigantic waves on the ocean's surface
and in optical fibers which arise due to the modulation instability of the background wave \cite{Charif,Wabnitz}.
Several recent publications were devoted to numerical and analytical studies of rogue waves
on the background of periodic \cite{AZ2,GS}, quasi-periodic \cite{Tovbis1,Tovbis2,CalSch},
and multi-soliton \cite{BM1,BM2,BM3} wave patterns in the framework of the nonlinear Schr\"{o}dinger
(NLS) equation. Since internal waves are modeled by
the mKdV equation \cite{Grimshaw}, formation of rogue internal waves was also studied in the framework of
the mKdV equation (\ref{mKdV}) as a result of multi-soliton interactions \cite{Sh1,PelSh,Sh2,PelSl}.

The difference between the two outcomes of the algorithm applied in \cite{CPkdv} to the normalized
periodic waves (\ref{dn-intro}) and (\ref{cn-intro})  in the mKdV equation is related to the fact that
the {\rm dn}-periodic waves are modulationally stable with respect to perturbations
of long periods, whereas the {\rm cn}-periodic waves are modulationally unstable \cite{BJK,BHJ}.

The purpose of this work is to consider the most general periodic travelling wave of the mKdV equation (\ref{mKdV})
and to characterize {\em explicitly} location of eigenvalues $\lambda$ with ${\rm Re}(\lambda) \neq 0$ in the periodic spectral problem (\ref{3.2}).
Although it may seem to be an incremental goal, advancement from the normalized periodic waves (\ref{dn-intro}) and (\ref{cn-intro})
to the most general periodic wave of the mKdV equation (\ref{mKdV}) require us to consider Riemann Theta functions
of genus $g = 2$, which are expressed as rational functions of Jacobian elliptic functions.
As is well known \cite{MatveevBook,Gestezy}, Riemann Theta functions of genus $g$ represent quasi-periodic
solutions to many integrable evolution equations including the mKdV equation (\ref{mKdV}). Hence, having successfully solved
the problem for $0 \leq g \leq 2$, we can move to the next goal of solving this problem for general $g$.

The algebraic method developed here is different from construction of multi-soliton solutions
on the background of quasi-periodic solutions developed in \cite{GSvirsky} by using commutation methods.
It is also different from other analytical techniques for explicit characterization of eigenvalues
related to the periodic solutions of the mKdV equation in the Whitham modulation theory \cite{K1,K2}
(see also \cite{K,P} for earlier works).

Let us now present the main results of our work. The travelling wave to the mKdV equation (\ref{mKdV})
has the form $u(x,t) = u(x-ct)$, where $c$ is wave speed. The wave profile $u$
satisfies the third-order differential equation:
\begin{eqnarray}
\label{third-order-intro}
\frac{d^3 u}{dx^3} + 6 u^2 \frac{du}{dx} - c \frac{du}{dx} = 0.
\end{eqnarray}
Integrating it once yields the second-order differential equation:
\begin{eqnarray}
\label{second-order-intro}
\frac{d^2 u}{dx^2} + 2 u^3 - c u = e,
\end{eqnarray}
where $e$ is the integration constant. Integrating it once again yields
the first-order invariant:
\begin{eqnarray}
\label{first-order-intro}
\left( \frac{du}{dx} \right)^2 + u^4 - c u^2 + d = 2 e u,
\end{eqnarray}
where $d$ is another integration constant. Thus, the most general periodic travelling
wave in the mKdV equation (\ref{mKdV}) is characterized by the parameters $(c,d,e)$.
The previous case considered in \cite{CPkdv} corresponds to $e = 0$.

Our first result is about classification of the most general periodic travelling
wave solution to the mKdV equation (\ref{mKdV}). As is well-known
(see, e.g., \cite{Vasiljev}), there exists two explicit families of the periodic solutions to
equations (\ref{second-order-intro}) and (\ref{first-order-intro})
depending on parameters $(c,d,e)$. When the polynomial
\begin{equation}
P(u) := u^4 - c u^2 + d - 2 e u,
\label{polynomial}
\end{equation}
admits four real roots ordered as $u_4 \leq u_3 \leq u_2 \leq u_1$, where $(u_1,u_2,u_3,u_4)$
are related to the parameters $(c,d,e)$,
the exact periodic solution to the system (\ref{second-order-intro}) and (\ref{first-order-intro}) is given by
\begin{equation}
u(x) = u_4 + \frac{(u_1-u_4) (u_2-u_4)}{(u_2-u_4) + (u_1 - u_2) {\rm sn}^2(\nu x;\kappa)},
\label{Jacob-1-intro}
\end{equation}
where $\nu > 0$ and $\kappa \in (0,1)$ are parameters given by
\begin{equation}
\label{roots-1-intro}
\left\{ \begin{array}{l}
4 \nu^2 = (u_1 - u_3) (u_2 - u_4), \\
4 \nu^2 \kappa^2 = (u_1 - u_2) (u_3 - u_4). \end{array} \right.
\end{equation}
When the polynomial $P(u)$ in (\ref{polynomial}) admits two real roots $b \leq a$
and two complex-conjugate roots $\alpha \pm i \beta$, where $(a,b,\alpha,\beta)$ are related to parameters
$(c,d,e)$, the exact periodic solution to the system (\ref{second-order-intro}) and (\ref{first-order-intro})
is given by
\begin{equation}
u(x) = a + \frac{(b-a) (1 - {\rm cn}(\nu x; \kappa))}{1 + \delta + (\delta - 1) {\rm cn}(\nu x;\kappa)},
\label{Jacob-2-intro}
\end{equation}
where $\delta > 0$, $\nu > 0$, and $\kappa \in (0,1)$ are parameters given by
\begin{equation}
\label{roots-2-intro}
\left\{ \begin{array}{l}
\delta^2 = \frac{(b-\alpha)^2 + \beta^2}{(a-\alpha)^2 + \beta^2}, \\
\nu^2 = \sqrt{\left[ (a-\alpha)^2 + \beta^2 \right] \left[ (b-\alpha)^2 + \beta^2 \right]}, \\
2 \kappa^2 = 1 - \frac{(a-\alpha) (b - \alpha) + \beta^2}{\sqrt{\left[ (a-\alpha)^2 + \beta^2 \right] \left[ (b-\alpha)^2 + \beta^2 \right]}}. \end{array} \right.
\end{equation}
The trivial case when the polynomial $P(u)$ in (\ref{polynomial})
admits no real roots does not produce any real solution to the system (\ref{second-order-intro}) and (\ref{first-order-intro}).
The following theorem characterizes the periodic travelling waves to the mKdV equation.

\begin{theorem}
\label{theorem-wave}
Fix $c > 0$ and $e \in (-e_0,e_0)$ with $e_0 := 2 \sqrt{c^3}/(3 \sqrt{6})$.
There exist $-\infty < d_1 < d_2 < \infty$ such that for every $d \in (d_1,d_2)$,
the system (\ref{second-order-intro}) and (\ref{first-order-intro})
admits the exact periodic solution in the form (\ref{Jacob-1-intro}) with (\ref{roots-1-intro})
and three other periodic solutions of the same period obtained with the
following three symmetry transformations
\begin{eqnarray}
\label{symm-intro}
\mbox{\rm (S1)} \quad u_1 \leftrightarrow u_2, \quad u_3 \leftrightarrow u_4, \quad
\mbox{\rm (S2)} \quad u_1 \leftrightarrow u_3, \quad u_2 \leftrightarrow u_4, \quad
\mbox{\rm (S3)} \quad u_1 \leftrightarrow u_4, \quad u_2 \leftrightarrow u_3.
\end{eqnarray}
In addition, if $e \neq 0$, there exists $d_3 > d_2$ such that for every $d \in (-\infty,d_1) \cup (d_2,d_3)$
the system  (\ref{second-order-intro}) and (\ref{first-order-intro})
admits the exact periodic solution in the form (\ref{Jacob-2-intro}) with (\ref{roots-2-intro})
and another periodic solution of the same period obtained with the symmetry transformation
\begin{eqnarray}
\label{symm-4-intro}
\mbox{\rm (S0)} \quad a \leftrightarrow b.
\end{eqnarray}
For every other value of $(c,e)$, there exists $d_1 > 0$ such that for every $d \in (-\infty,d_1)$
only the periodic solution in the form (\ref{Jacob-2-intro})--(\ref{roots-2-intro}) exists
together with another solution obtained by the symmetry transformation (\ref{symm-4-intro}).
All the solutions are unique up to the translational symmetry $u(x) \mapsto u(x+x_0)$, $x_0 \in \mathbb{R}$.
\end{theorem}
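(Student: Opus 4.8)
The plan is to reduce the classification to an elementary analysis of the quartic $P$ in (\ref{polynomial}). By (\ref{first-order-intro}) every travelling-wave profile solves the autonomous equation $(u')^2=-P(u)$, so a bounded non-constant $u$ must oscillate inside a \emph{potential well} of $P$, i.e. a maximal interval $[\beta,\alpha]$ with $P(\alpha)=P(\beta)=0$ and $P<0$ on $(\beta,\alpha)$. Conversely, since $\alpha$ and $\beta$ are simple zeros of $P$ on such a well, the period integral $T=2\int_\beta^\alpha du/\sqrt{-P(u)}$ converges and the usual phase-plane and separation-of-variables argument produces exactly one periodic orbit in the well up to the translation $u(x)\mapsto u(x+x_0)$, pinned down by requiring $u(0)=\alpha$. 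Hence everything reduces to counting the real roots of $P$ and locating its wells as $(c,d,e)$ vary, then matching each well to one of the two elliptic-function families and checking that the symmetry transformations permute the resulting solutions.

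First I would fix $(c,e)$ and vary $d$, using that $P'(u)=2\,(2u^3-cu-e)=:2Q(u)$ and that $d\mapsto P(u;d)$ merely shifts $P$ rigidly upward at unit rate. For $c>0$ the cubic $Q$ has a local maximum at $-\sqrt{c/6}$ and a local minimum at $\sqrt{c/6}$, with values $Q(\mp\sqrt{c/6})=\pm e_0-e$ where $e_0=2\sqrt{c^3}/(3\sqrt6)$; hence $Q$ has three simple real roots $v_1<v_2<v_3$ exactly when $-e_0<e<e_0$ and a single real root otherwise, while for $c\le 0$ it is monotone with one real root. In the three-critical-point case $v_1,v_3$ are local minima and $v_2$ a local maximum of $P$, and since all three critical values rise at unit rate in $d$ while preserving $P(v_2)>\max\{P(v_1),P(v_3)\}$, the conditions $P(v_2)=0$, $\max\{P(v_1),P(v_3)\}=0$, $\min\{P(v_1),P(v_3)\}=0$ define $d_1<d_2\le d_3$ with $d_2=d_3$ iff $P(v_1)=P(v_3)$ iff $e=0$; one then reads off two real roots (one well) for $d<d_1$, four real roots (two wells) for $d_1<d<d_2$, two real roots (one well) for $d_2<d<d_3$, and no real root for $d>d_3$. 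In the single-critical-point case $P$ has a global minimum at $v$, giving two real roots for $d<d_1$ (with $d_1$ fixed by $P(v)=0$) and none beyond; substituting $e=2v^3-cv$ yields $d_1=v^2(3v^2-c)$, and since $v\notin(-\sqrt{c/3},\sqrt{c/3})$ in this regime one gets $d_1>0$. This reproduces the parameter dichotomy of the theorem.

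It remains to verify the explicit formulas and the action of the symmetries. For four real roots $u_4\le u_3\le u_2\le u_1$ the two wells are $[u_2,u_1]$ and $[u_4,u_3]$; differentiating the rational-in-${\rm sn}^2$ ansatz (\ref{Jacob-1-intro}) via ${\rm sn}'={\rm cn}\,{\rm dn}$, squaring and simplifying shows that $(u')^2$ is a rational function of ${\rm sn}^2$ coinciding with $-P(u)$ once $\nu,\kappa$ are prescribed by (\ref{roots-1-intro}) and $(c,d,e)$ are the elementary symmetric functions of $u_1,\ldots,u_4$ (alternatively one cites \cite{Vasiljev}); the period equals $2K(\kappa)/\nu$. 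A short computation shows the right-hand sides of (\ref{roots-1-intro}), hence $\nu$, $\kappa$ and the period, are invariant under each of (S1), (S2), (S3), which are permutations of the roots of $P$ and therefore fix $P$ and $(c,d,e)$, so inserting a permuted root-tuple into (\ref{Jacob-1-intro}) again yields a solution of $(u')^2=-P(u)$ of the same period; geometrically (S2) and (S3) carry the orbit in $[u_2,u_1]$ onto the orbit in $[u_4,u_3]$ (so both wells carry orbits of the same period) while (S1) acts as a half-period shift. The case of two real and two complex roots is treated identically with (\ref{Jacob-2-intro}), (\ref{roots-2-intro}) and the involution (S0) $a\leftrightarrow b$. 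Global uniqueness up to translation then follows since any bounded periodic profile lives in one of these at most two wells and, by the in-well uniqueness above, equals one of the listed solutions after a shift. I expect the bifurcation analysis, namely ordering $d_1<d_2\le d_3$, establishing $d_2=d_3$ iff $e=0$, and identifying the threshold $e_0$ from the extrema of $Q$, to be the main obstacle; verifying the elliptic formulas and the invariance of (\ref{roots-1-intro}), (\ref{roots-2-intro}) under the symmetries is mechanical.
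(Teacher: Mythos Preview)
Your proposal is correct and follows essentially the same approach as the paper: a phase-plane/potential-well analysis based on the quartic $P$ (the paper phrases it via the equivalent energy $\mathcal{H}(u,0)=P(u)-d$), location of the critical points through the cubic $Q=P'/2$ with the threshold $e_0$, identification of $d_1,d_2,d_3$ from the critical values, and then direct verification of the Jacobian elliptic formulas together with the action of the symmetries (S0)--(S3). The only cosmetic difference is that the paper verifies (\ref{Jacob-1-intro})--(\ref{roots-1-intro}) and (\ref{Jacob-2-intro})--(\ref{roots-2-intro}) via a short chain of substitutions reducing to the standard ${\rm sn}^2$ and ${\rm cn}$ equations (Lemmas~\ref{lemma-dn} and~\ref{lemma-cn}) rather than by differentiating the ansatz, and your explicit argument that $d_1>0$ in the single-critical-point regime in fact goes slightly beyond what the paper spells out.
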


\begin{remark}
\label{remark-2}
Periodic solutions of the third-order equation (\ref{third-order-intro}) are invariant
with respect to the reflection $u \mapsto -u$. The reflection corresponds to
the transformation $e \mapsto -e$ in the second-order equation (\ref{second-order-intro}).
\end{remark}

\begin{remark}
The proof of Theorem \ref{theorem-wave} is elementary.
It is based on the phase-plane analysis and properties of the Jacobian elliptic functions.
We included Theorem \ref{theorem-wave} for clarity of our presentation.
\end{remark}

\begin{remark}
The periodic travelling wave of the mKdV equation (\ref{mKdV}) in Theorem \ref{theorem-wave}
is also the periodic travelling wave of the following Gardner equation:
\begin{equation}\label{Gardner}
v_t + 12 a v v_x + 6 v^2 v_x + v_{xxx}=0,
\end{equation}
where $a \in \mathbb{R}$ is arbitrary. Indeed, if $u(x,t) \in C^{3,1}(\mathbb{R} \times \mathbb{R})$ satisfies the mKdV
equation (\ref{mKdV}) and is represented by $u(x,t) = a + v(x-6a^2 t,t)$ with $a \in \mathbb{R}$,
then $v(x,t) \in C^{3,1}(\mathbb{R} \times \mathbb{R})$ satisfies the Gardner equation (\ref{Gardner}).
The Gardner equation is commonly used in modelling of internal waves \cite{Grimshaw}.
\end{remark}

Next, we use the algebraic method from \cite{Cao1,Cao2,Cao3} and relate the solutions $u$
in Theorem \ref{theorem-wave} with squared eigenfunctions of the periodic spectral problem (\ref{3.2}).
Compared to our previous work in \cite{CPkdv}, we have to use two squared eigenfunctions
for two different eigenvalues $\lambda$ in order to obtain periodic solutions
of the third-order differential equation (\ref{third-order-intro}).
The following theorem represents the outcome of the algebraic method.

\begin{theorem}
\label{theorem-eig}
The spectral problem (\ref{3.2}) with $u$ given by  the periodic waves (\ref{Jacob-1-intro})
and (\ref{Jacob-2-intro}) admits three pairs $\pm \lambda_1$, $\pm \lambda_2$, $\pm \lambda_3$ of eigenvalues
with ${\rm Re}(\lambda) \neq 0$ that corresponds to the periodic eigenfunctions $\varphi$ of the same period.
For the periodic wave (\ref{Jacob-1-intro}),
the eigenvalues are located at
\begin{eqnarray}
\label{eig-real-intro}
\lambda_1 = \frac{1}{2} (u_1 + u_2), \quad \lambda_2 = \frac{1}{2} (u_1+u_3), \quad
\lambda_3 = \frac{1}{2} (u_2 + u_3).
\end{eqnarray}
For the periodic wave (\ref{Jacob-2-intro}), the eigenvalues are located at
\begin{eqnarray}
\label{eig-complex-intro}
\lambda_1 = \frac{1}{4} (a-b) + \frac{i}{2} \beta, \quad \lambda_2 = \frac{1}{4} (a-b) - \frac{i}{2} \beta,
\quad \lambda_3 = \frac{1}{2} (a+b).
\end{eqnarray}
\end{theorem}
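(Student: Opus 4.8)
\medskip
\noindent\emph{Sketch of the argument.}
The plan is to run the algebraic method of \cite{Cao1,Cao2,Cao3}, in the same spirit as \cite{CPkdv}, directly on the profile $u$ from Theorem~\ref{theorem-wave}. For a solution $\varphi=(p,q)^t$ of the $x$-equation~(\ref{3.2}) at a fixed $\lambda$, introduce
\[
\phi_1 := 2pq,\qquad \phi_2 := p^2-q^2,\qquad \phi_3 := p^2+q^2,
\]
which from~(\ref{3.2}) obey the closed linear system
\[
\phi_1'=-2u\phi_2,\qquad \phi_2'=2\lambda\phi_3+2u\phi_1,\qquad \phi_3'=2\lambda\phi_2,
\]
together with the pointwise algebraic identity $\phi_3^2=\phi_2^2+\phi_1^2$. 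The first step is to seek a periodic solution of this system in which $\phi_3$ is an affine function of the wave profile, $\phi_3=u-\mu$ for a constant $\mu$; this is the minimal nontrivial ansatz compatible with the degree count, and in the degenerate limits~(\ref{dn-intro}) and~(\ref{cn-intro}) it reduces to the squared-eigenfunction relations of~\cite{CPkdv}.

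Under this ansatz the first and third equations integrate at once to $\phi_2=u'/(2\lambda)$ and $\phi_1=C-u^2/(2\lambda)$ with an integration constant $C$; substituting into $\phi_2'=2\lambda\phi_3+2u\phi_1$ and using the second-order profile equation~(\ref{second-order-intro}) forces
\[
C=\frac{c-4\lambda^2}{4\lambda},\qquad \mu=-\frac{e}{4\lambda^2}.
\]
It remains to impose $\phi_3^2=\phi_2^2+\phi_1^2$. Using the first-order invariant~(\ref{first-order-intro}) in the form $(u')^2=-P(u)$ with $P$ from~(\ref{polynomial}), this becomes a polynomial identity in $u$; the coefficients of $u^4,u^3,u^2,u^1$ are then satisfied automatically, and only the constant term produces a new constraint,
\[
16\lambda^6-8c\lambda^4+(c^2-4d)\lambda^2-e^2=0,
\]
which is a cubic in $\lambda^2$ and so yields three symmetric pairs $\pm\lambda_1,\pm\lambda_2,\pm\lambda_3$ (distinct for generic parameters).

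To locate them, write $P(u)=\prod_{j=1}^{4}(u-\rho_j)$ with $\rho_1+\rho_2+\rho_3+\rho_4=0$ (no cubic term in $P$). One checks that the elementary symmetric functions of the three numbers $\tfrac14(\rho_i+\rho_j)^2$ — taken over the three partitions of $\{1,2,3,4\}$ into two pairs — coincide with those of the cubic above; this reduces to Newton's identities together with $\sum_j\rho_j=0$ (for instance, their sum equals $c/2$ and their product equals $e^2/16$). Hence the three pairs are $\lambda=\tfrac12(\rho_i+\rho_j)$, which gives~(\ref{eig-real-intro}) in the four-real-root case $u_4\le u_3\le u_2\le u_1$ and, using $\alpha=-\tfrac12(a+b)$, gives~(\ref{eig-complex-intro}) in the case of two real roots $b\le a$ and a conjugate pair $\alpha\pm i\beta$. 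Finally one promotes $(\phi_1,\phi_2,\phi_3)$ to a genuine vector solution by putting $p^2=\tfrac12(\phi_3+\phi_2)$, $q^2=\tfrac12(\phi_3-\phi_2)$, $pq=\tfrac12\phi_1$, fixing consistent branches of the square roots and checking directly from the profile equations that $\varphi=(p,q)^t$ solves~(\ref{3.2}); since $\phi_1,\phi_2,\phi_3$ depend only on $u$ and $u'$ they are periodic with the period of $u$, and one verifies that $\varphi$ returns to itself (not to $-\varphi$) over one period by following the signs of $p$ and $q$ through the turning points $u'=0$. This exhibits $\pm\lambda_j$ as eigenvalues with eigenfunctions of the same period; ${\rm Re}(\lambda_j)\neq0$ then follows by inspection of the formulas, while for $e=0$ one pair falls onto the imaginary axis, recovering the count in~\cite{CPkdv}.

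I expect the last step to be the main obstacle: converting the scalar data $(\phi_1,\phi_2,\phi_3)$ into an honest vector eigenfunction of the \emph{same} period. The delicate points are the non-vanishing and the sign of $p^2=\tfrac12(u-\mu+u'/(2\lambda))$ and $q^2=\tfrac12(u-\mu-u'/(2\lambda))$ along the orbit — in particular for the real eigenvalue $\lambda_3$ of the two-complex-root wave and for the real eigenvalues of the four-real-root wave, where one must check that $\mu$ lies outside the oscillation interval of $u$ — and the control of the Floquet multiplier over one period; both use the explicit formulas~(\ref{Jacob-1-intro})--(\ref{roots-2-intro}) and the root ordering from Theorem~\ref{theorem-wave}. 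If ``three pairs'' is to be read as ``exactly three'', one must also exclude periodic eigenfunctions with $\phi_3$ not affine in $u$, which follows from the underlying spectral curve having genus at most two.
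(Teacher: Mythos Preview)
Your route is correct and genuinely different from the paper's. The paper does not make a single-eigenfunction ansatz; instead it imposes the \emph{two}-eigenfunction constraint $u=p_1^2+q_1^2+p_2^2+q_2^2$ (equation~(\ref{potential})), builds the associated Liouville-integrable Hamiltonian system, and after degenerating the Dubrovin variables ($\mu_3=0$) arrives at the system~(\ref{system-parameters}) linking $(c,d,e)$ to $(\lambda_1,\lambda_2,E_0)$. Eliminating $\lambda_1,\lambda_2$ yields a cubic in $E_0$, $(E_0^2-d)(2E_0+c)=e^2$, whose three roots (Lemmas~\ref{lemma-eig-dn}--\ref{lemma-eig-cn}) give the three eigenvalue pairs. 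Your affine ansatz $\phi_3=u-\mu$ bypasses all of this and produces directly a cubic in $\lambda^2$; the squared-eigenfunction formulas you derive are, up to normalisation, precisely the relations~(\ref{relation-1-simple}), (\ref{relation-2-simple}), (\ref{relation-3-simple-better}) that the paper only reaches later in Section~\ref{sec-5} by solving linear systems. Your argument is shorter, needs no $W(\lambda)$ matrix or Dubrovin machinery, and makes the count of three pairs transparent; the paper's heavier framework, on the other hand, supplies the ingredients ($H_1$, $W(\lambda)$, the degeneration picture) that are reused in the Darboux computations and in Section~\ref{sec-6}. On your acknowledged gap --- promoting $(\phi_1,\phi_2,\phi_3)$ to a genuine $L$-periodic $(p,q)$ --- the paper is equally informal: Section~\ref{sec-4} verifies only the algebraic relations, with periodicity of $\varphi$ inherited implicitly from the constraint picture rather than proved by monodromy; the positivity of $\phi_3$ you worry about is effectively checked in formulas~(\ref{formula-1}), (\ref{formula-3}), (\ref{formula-5}), which show $e+4\lambda_j^2u$ has constant sign along the orbit.
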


\begin{remark}
The algebraic method in the proof of Theorem \ref{theorem-eig} gives also
explicit characterization of the periodic eigenfunctions $\varphi$ of the spectral problem
(\ref{3.2}) at the three pairs of eigenvalues $\lambda$. If $u(x,t) = u(x-ct)$ is a travelling wave
solution to the mKdV equation (\ref{mKdV}), the time evolution
problem (\ref{3.3}) is also satisfied with the solution $\varphi(x,t) = \varphi(x-ct)$.
\end{remark}

\begin{remark}
The algebraic method does not allow us to conclude that no other eigenvalues $\lambda$
with ${\rm Re}(\lambda) \neq 0$ exist in the periodic spectral problem (\ref{3.2}).
\end{remark}

Next, we proceed with the multi-fold Darboux transformations \cite{GuHuZhou,Matveev} by using
the general form proven with explicit computations in Appendix A of \cite{CPkdv}.
Since we only have up to three pairs of simple eigenvalues in Theorem \ref{theorem-eig} in a generic
case, we should only use one-fold, two-fold, and three-fold Darboux transformations.
The corresponding Darboux transformations are given by the explicit expressions:
\begin{equation}
\label{one-fold-general}
\tilde{u} = u + \frac{4\lambda_1 p_1 q_1}{p_1^2 + q_1^2},
\end{equation}
\begin{equation}
\label{two-fold-general}
\tilde{u} = u + \frac{4 (\lambda_1^2-\lambda_2^2) \left[ \lambda_1 p_1 q_1 (p_2^2 + q_2^2) - \lambda_2 p_2 q_2 (p_1^2 + q_1^2)\right]}{
(\lambda_1^2 + \lambda_2^2) (p_1^2 + q_1^2)(p_2^2 + q_2^2) - 2 \lambda_1 \lambda_2 [ 4 p_1 q_1 p_2 q_2 + (p_1^2 - q_1^2)(p_2^2 - q_2^2)]},
\end{equation}
and
\begin{equation}
\label{three-fold-general}
\tilde{u} = u + \frac{4 N}{D},
\end{equation}
with
\begin{eqnarray*}
N & := & (\lambda_3^2 - \lambda_1^2) (\lambda_3^2 - \lambda_2^2) \lambda_3 p_3 q_3 \left[ (\lambda_1^2 + \lambda_2^2) (p_1^2 + q_1^2) (p_2^2 + q_2^2)
- 2 \lambda_1 \lambda_2 (p_1^2 - q_1^2) (p_2^2 - q_2^2) \right] \\
& \phantom{t} & + (\lambda_2^2 - \lambda_1^2) (\lambda_2^2 - \lambda_3^2) \lambda_2 p_2 q_2
\left[ (\lambda_1^2 + \lambda_3^2) (p_1^2 + q_1^2) (p_3^2 + q_3^2)
- 2 \lambda_1 \lambda_3 (p_1^2 - q_1^2) (p_3^2 - q_3^2) \right] \\
& \phantom{t} & + (\lambda_1^2 - \lambda_2^2) (\lambda_1^2 - \lambda_3^2) \lambda_1 p_1 q_1
\left[ (\lambda_2^2 + \lambda_3^2) (p_2^2 + q_2^2) (p_3^2 + q_3^2)
- 2 \lambda_2 \lambda_3 (p_2^2 - q_2^2) (p_3^2 - q_3^2) \right] \\
& \phantom{t} & -8 (\lambda_1^4 + \lambda_2^4 + \lambda_3^4 - \lambda_1^2 \lambda_2^2
- \lambda_1^2 \lambda_3^2 - \lambda_2^2 \lambda_3^2 )\lambda_1 \lambda_2 \lambda_3 p_1 p_2 p_3 q_1 q_2 q_3
\end{eqnarray*}
and
{\small \begin{eqnarray*}
&& D := (\lambda_1 + \lambda_2)^2 (\lambda_1 + \lambda_3)^2 (\lambda_2 - \lambda_3)^2 (p_1^2 q_2^2 q_3^2 + q_1^2 p_2^2 p_3^2)
+ (\lambda_1 + \lambda_2)^2 (\lambda_2 + \lambda_3)^2 (\lambda_1 - \lambda_3)^2 (p_2^2 q_1^2 q_3^2 + q_2^2 p_1^2 p_3^2) \\
\phantom{t} && \phantom{t} + (\lambda_1 + \lambda_3)^2 (\lambda_2 + \lambda_3)^2 (\lambda_1 - \lambda_2)^2 (p_3^2 q_1^2 q_2^2 + q_3^2 p_1^2 p_2^2)
+ (\lambda_1 - \lambda_2)^2 (\lambda_1 - \lambda_3)^2 (\lambda_2 - \lambda_3)^2 (p_1^2 p_2^2 p_3^2 + q_1^2 q_2^2 q_3^2)  \\
\phantom{t} && \phantom{t} - 8 (\lambda_3^2 - \lambda_1^2) (\lambda_3^2 - \lambda_2^2) \lambda_1 \lambda_2 p_1 p_2 q_1 q_2 (p_3^2 + q_3^2)
- 8 (\lambda_2^2 - \lambda_1^2) (\lambda_2^2 - \lambda_3^2) \lambda_1 \lambda_3 p_1 p_3 q_1 q_3 (p_2^2 + q_2^2) \\
\phantom{t} && \phantom{t} - 8 (\lambda_1^2 - \lambda_2^2) (\lambda_1^2 - \lambda_3^2) \lambda_2 \lambda_3 p_2 p_3 q_2 q_3 (p_1^2 + q_1^2),
\end{eqnarray*}}where $\tilde{u}$ stands for a new solution to the mKdV equation (\ref{mKdV})
and $\varphi_j = (p_j,q_j)^t$, $j = 1,2,3$ stands for a nonzero solution to the Lax system
(\ref{3.2})--(\ref{3.3}) with potential $u$ and eigenvalue $\lambda_j$ assuming $\lambda_i \neq \pm \lambda_j$ for $i \neq j$.

The following theorems represent the outcomes of the Darboux transformations (\ref{one-fold-general}),
(\ref{two-fold-general}), and (\ref{three-fold-general}) with the periodic solutions to the Lax system (\ref{3.2})--(\ref{3.3}).

\begin{theorem}
\label{theorem-DT-dn}
Assume $u_4 < u_3 < u_2 < u_1$ such that $u_1 + u_2 + u_3 + u_4 = 0$,
$u_1 + u_2 \neq 0$, $u_1 + u_3 \neq 0$, and $u_2 + u_3 \neq 0$.
The one-fold Darboux transformation with the periodic eigenfunctions for each eigenvalue in (\ref{eig-real-intro})
transforms the periodic wave  (\ref{Jacob-1-intro}) to the periodic wave of the same period
obtained after the corresponding symmetry transformation in (\ref{symm-intro})
and the reflection $u \mapsto -u$.
The two-fold Darboux transformation with the periodic eigenfunctions
for any two eigenvalues from (\ref{eig-real-intro})
transforms the periodic wave (\ref{Jacob-1-intro}) to the periodic wave of the same period
obtained after the complementary third symmetry transformation in (\ref{symm-intro}).
The three-fold Darboux transformation with the periodic eigenfunctions
for all three eigenvalues in (\ref{eig-real-intro})
maps the periodic wave (\ref{Jacob-1-intro}) to itself reflected with $u \mapsto -u$.
\end{theorem}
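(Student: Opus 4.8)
The plan is to feed the explicit periodic eigenfunctions produced in the proof of Theorem~\ref{theorem-eig} into the Darboux formulas (\ref{one-fold-general})--(\ref{three-fold-general}) and to verify that the resulting profiles are the claimed members of the same family. I would first record, from the algebraic method, the squared combinations $p_j^2+q_j^2$, $p_j^2-q_j^2$, $p_jq_j$ for each eigenvalue $\lambda_j$ in (\ref{eig-real-intro}) as explicit functions of $u$ and $u'$, rational in ${\rm sn}(\nu x;\kappa)$; these are the only quantities entering (\ref{one-fold-general})--(\ref{three-fold-general}), and the hypotheses $u_4<u_3<u_2<u_1$ with $u_1+u_2+u_3+u_4=0$ together with $u_1+u_2\neq0$, $u_1+u_3\neq0$, $u_2+u_3\neq0$ ensure that all $\lambda_j$ are nonzero and that $\lambda_i\neq\pm\lambda_j$ for $i\neq j$, so the formulas are applicable. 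A useful structural observation at the outset is that, writing $w_j:=q_j/p_j$, the Lax equation (\ref{3.2}) gives the Riccati equation $w_j'=-u(1+w_j^2)-2\lambda_j w_j$, whence $\frac{4\lambda_j p_j q_j}{p_j^2+q_j^2}=\frac{4\lambda_j w_j}{1+w_j^2}=-2u-2\frac{d}{dx}\arctan w_j$, so the one-fold transformation (\ref{one-fold-general}) takes the compact form $\tilde u=-u-2\frac{d}{dx}\arctan(q_j/p_j)$. Since $p_j^2+q_j^2>0$ everywhere (a nonzero solution of a linear ODE cannot vanish) and $q_j/p_j$ is $L$-periodic, $\tilde u$ is smooth, bounded and $L$-periodic; being built from a travelling wave $u$ and a co-travelling eigenfunction, it is again a travelling wave with the same speed $c$, hence it solves (\ref{second-order-intro})--(\ref{first-order-intro}) with the same $c$ and some $(\tilde d,\tilde e)$, so by Theorem~\ref{theorem-wave} it is one of the periodic waves in that theorem.

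For the one-fold case I would then identify $\tilde u$ precisely. Substituting the explicit $p_j,q_j$ into $\tilde u=u+4\lambda_j p_jq_j/(p_j^2+q_j^2)$ and simplifying the elliptic expression, $\tilde u$ becomes a M\"obius function of ${\rm sn}^2(\nu x;\kappa)$ with the same $\nu,\kappa$, i.e.\ a wave of the form (\ref{Jacob-1-intro}); reading off its two turning values and its integration constants shows that the four roots of the associated quartic are $\{-u_1,-u_2,-u_3,-u_4\}$ (equivalently $\tilde e=-e$, $\tilde d=d$, consistently with Remark~\ref{remark-2}) and that the pair of roots between which $\tilde u$ oscillates is exactly the one prescribed by applying the symmetry (S$j$) in (\ref{symm-intro}) and then the reflection $u\mapsto-u$. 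As a consistency check, the compact formula above gives $\langle\tilde u\rangle_L=-\langle u\rangle_L-\tfrac{2\pi}{L}n_j$ with $n_j\in\mathbb Z$ the winding number of $q_j/p_j$ in $\mathbb{RP}^1$ over one period, a spectral invariant of $\lambda_j$ whose value the identification fixes.

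For the two-fold and three-fold transformations I would use that (\ref{two-fold-general}) and (\ref{three-fold-general}) are, by the general construction in Appendix~A of \cite{CPkdv}, the compositions of one-fold transformations in which $\varphi_2$ (resp.\ $\varphi_2,\varphi_3$) are replaced by their Darboux-transformed counterparts. Those transformed eigenfunctions are again $L$-periodic --- the intermediate potential is $L$-periodic and the transformation rule is rational in $L$-periodic data --- and they are nonzero eigenfunctions at the same points $\lambda_2$ (resp.\ $\lambda_3$), which by Theorem~\ref{theorem-eig} applied to the intermediate wave of the form (\ref{Jacob-1-intro}) are still among its three off-axis eigenvalues, since the eigenvalue set $\{\pm\lambda_1,\pm\lambda_2,\pm\lambda_3\}$ is invariant under the reflection $u\mapsto-u$ and under the symmetries (\ref{symm-intro}). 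Iterating the one-fold result, the net effect on the profile is the composition of the maps ``(reflection)$\circ$(S$j$)''; since $\{\mathrm{id},S1,S2,S3\}$ is the Klein four-group with $S_iS_j=S_k$ for $\{i,j,k\}=\{1,2,3\}$ and since two reflections cancel, the two-fold transformation produces the wave (\ref{Jacob-1-intro}) after the complementary symmetry $S_k$, and the three-fold transformation, for which $S1S2S3=\mathrm{id}$ and an odd number of reflections survives, produces $u$ reflected by $u\mapsto-u$. Along the way one checks that the denominators in (\ref{two-fold-general})--(\ref{three-fold-general}) do not vanish, which follows because each stage is a legitimate one-fold step whose denominator $p^2+q^2$ is strictly positive.

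The main obstacle is the explicit elliptic-function reduction in the one-fold step: recognizing, after substitution, that $\tilde u$ is again of the exact form (\ref{Jacob-1-intro}) and correctly reading off the permuted roots together with the sign change $e\mapsto-e$ --- equivalently, verifying that the one-fold Darboux transformation at $\lambda_j$ realizes precisely the symmetry $S_j$ composed with the reflection, rather than some other element of the eight-element symmetry group of the quartic. A secondary bookkeeping difficulty in the multi-fold steps is tracking, through each reflection, which of the three ordered pairs of roots the next eigenvalue corresponds to, so that the correct $S_j$ is applied at that stage; here the group-theoretic structure --- the fact that the three off-axis eigenvalues are in bijection with the three partitions of $\{u_1,u_2,u_3,u_4\}$ into pairs, on which the relabeling group acts through the Klein four-group --- does the essential work.
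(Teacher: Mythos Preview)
Your one-fold argument is essentially the paper's: derive closed expressions for $p_j^2+q_j^2$, $p_j^2-q_j^2$, $p_jq_j$ in terms of $u$ and $u_x$, substitute into (\ref{one-fold-general}), and reduce the resulting rational function of ${\rm sn}^2(\nu x;\kappa)$ to the form (\ref{Jacob-1-intro}) with permuted roots. The paper in fact obtains the very clean intermediate formula
\[
\tilde u=\frac{e\,u+2\lambda_j^2(c-4\lambda_j^2)}{e+4\lambda_j^2 u},
\]
a linear fractional transformation whose explicit evaluation on (\ref{Jacob-1-intro}) immediately yields (\ref{Jacob-1-symm-a}), (\ref{Jacob-1-symm}), (\ref{Jacob-1-symm-b}) with the sign flip. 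Your Riccati identity $\tilde u=-u-2\frac{d}{dx}\arctan(q_j/p_j)$ is a nice way to see $L$-periodicity and $\tilde e=-e$ without computation, but the actual identification still hinges on the elliptic reduction you flag as the main obstacle.

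Where you genuinely diverge from the paper is in the two- and three-fold cases. The paper does \emph{not} compose one-fold steps; it substitutes the squared-eigenfunction formulas directly into (\ref{two-fold-general}) and (\ref{three-fold-general}), obtaining for the two-fold case another closed linear fractional map
\[
\tilde u=\frac{eE_0-4\lambda_1^2\lambda_2^2\,u}{e\,u+4\lambda_1^2\lambda_2^2},
\]
and for the three-fold case a brute-force simplification to $\tilde u=-u$. Your route---factor the multi-fold transformation as iterated one-fold steps, observe that the Darboux-transformed eigenfunction at $\lambda_k$ is periodic and hence (by simplicity) proportional to \emph{the} periodic eigenfunction of the intermediate wave at $\lambda_k$, and then compose the maps ``reflection${}\circ S_j$'' using the Klein four-group structure---is correct and conceptually cleaner; indeed the paper notes this composition principle as a remark after the two-fold lemma. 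The trade-off is that the paper's direct substitution is mechanical and self-contained, whereas your argument requires the extra bookkeeping you mention: after each reflection the ordered labeling of the roots changes (the reflection sends the ordered quadruple $(u_1,u_2,u_3,u_4)$ to $(-u_4,-u_3,-u_2,-u_1)$), so one must track which of the three $S$-labels the remaining eigenvalue carries \emph{relative to the new ordering} before invoking the one-fold result again. This is exactly the check that makes the Klein-group calculation land on the correct complementary symmetry, and it should be written out rather than left implicit.
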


\begin{remark}
Under the term ``complementary third transformation" in Theorem \ref{theorem-DT-dn}, we mean that if
$(\lambda_1,\lambda_2)$ is selected in (\ref{eig-real-intro}),
then (S3) is selected in (\ref{symm-intro}), and so on.
\end{remark}

\begin{remark}
The three eigenvalues in (\ref{eig-real-intro}) with the three periodic eigenfunctions
used in the one-fold transformation (\ref{one-fold-general}) recover
the three symmetry transformations in (\ref{symm-intro}). It is interesting to note that
due to the constraint $u_1 + u_2 + u_3 + u_4 = 0$, the choice of each eigenvalue indicate
explicitly the choice of the transformation between $u_1$, $u_2$, $u_3$, and $u_4$,
e.g., $\lambda_1 = (u_1+u_2)/2$ corresponds to (S1) with
$u_1 \leftrightarrow u_2$ and $u_3 \leftrightarrow u_4$, and so on.
\end{remark}

\begin{theorem}
\label{theorem-DT-cn}
Assume $a \neq \pm b$.
The one-fold Darboux transformation with the periodic eigenfunction
for eigenvalue $\lambda_3$ in (\ref{eig-complex-intro})
transforms the periodic wave  (\ref{Jacob-2-intro}) to the periodic wave of the same period
obtained after the symmetry transformation in (\ref{symm-4-intro})
and the reflection $u \mapsto -u$.
The two-fold Darboux transformation with the periodic eigenfunctions
for two eigenvalues $\lambda_1$ and $\lambda_2$ in (\ref{eig-complex-intro})
transforms the periodic wave (\ref{Jacob-2-intro}) to the periodic wave of the same period
obtained after the symmetry transformation in (\ref{symm-4-intro}).
The three-fold Darboux transformation with the periodic eigenfunctions
for all three eigenvalues in (\ref{eig-complex-intro})
maps the periodic wave (\ref{Jacob-2-intro}) to itself reflected with $u \mapsto -u$.
\end{theorem}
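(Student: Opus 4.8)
The plan is to reduce Theorem \ref{theorem-DT-cn} to explicit algebraic identities involving the periodic eigenfunctions $\varphi_j = (p_j,q_j)^t$ at the eigenvalues $\lambda_1,\lambda_2,\lambda_3$ in (\ref{eig-complex-intro}), using the characterization of these eigenfunctions produced by the algebraic method in the proof of Theorem \ref{theorem-eig}. First I would recall from that proof the closed-form expressions for the squared combinations $p_j^2+q_j^2$, $p_j^2-q_j^2$, and $p_jq_j$ as rational functions of the Jacobian elliptic functions appearing in (\ref{Jacob-2-intro}); since the eigenfunctions are periodic of the same period $L$ as $u$, these squared combinations are also $L$-periodic. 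For the one-fold case with $\lambda_3 = \tfrac12(a+b)$, I would substitute into (\ref{one-fold-general}) and simplify, using the first-order invariant (\ref{first-order-intro}) together with the relation between $(a,b,\alpha,\beta)$ and $(c,d,e)$, to show that $\tilde u = u + 4\lambda_3 p_3 q_3/(p_3^2+q_3^2)$ coincides with $-u_{\rm S0}$, where $u_{\rm S0}$ is the profile (\ref{Jacob-2-intro}) after $a\leftrightarrow b$. The assumption $a\neq\pm b$ guarantees $\lambda_3\neq 0$ (so the transformation is nontrivial) and $\lambda_i\neq\pm\lambda_j$ (so the multi-fold formulas are well defined).

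For the two-fold case with $\lambda_1,\lambda_2 = \tfrac14(a-b)\pm\tfrac i2\beta$, I would plug the squared eigenfunction formulas into (\ref{two-fold-general}). Here the key simplification is that $\lambda_1$ and $\lambda_2$ are complex conjugates, so $\lambda_1^2+\lambda_2^2$, $\lambda_1\lambda_2$, and $\lambda_1^2-\lambda_2^2$ are real (the last purely imaginary times a real factor), and the corresponding eigenfunctions are related by complex conjugation; this forces the numerator and denominator in (\ref{two-fold-general}) to collapse to real rational functions of the elliptic functions. I would then verify that the resulting $\tilde u$ equals $u_{\rm S0}$ exactly — note the absence of the reflection here, in contrast with the one-fold case — by matching it against (\ref{Jacob-2-intro}) with $a\leftrightarrow b$. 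The three-fold case is then the composition: since applying S0 twice is the identity and the reflection $u\mapsto -u$ composes with S0 to give the one-fold result, the three-fold transformation (\ref{three-fold-general}) with all of $\lambda_1,\lambda_2,\lambda_3$ should map $u$ to $-u$. Rather than expanding the cumbersome $N$ and $D$ directly, I would argue by the permutability (Bianchi) property of Darboux transformations: the three-fold transformation equals the one-fold transformation at $\lambda_3$ applied to the two-fold transformation at $(\lambda_1,\lambda_2)$, i.e. $u \xrightarrow{\,(\lambda_1,\lambda_2)\,} u_{\rm S0} \xrightarrow{\,\lambda_3\,} -(u_{\rm S0})_{\rm S0} = -u$, provided one checks that the pushed-forward eigenfunction at $\lambda_3$ after the first step is still the periodic one.

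The main obstacle I expect is precisely this last verification — that the eigenfunction at $\lambda_3$ for the potential $u_{\rm S0}$, obtained by pushing $\varphi_3$ through the two-fold Darboux matrix, is the \emph{periodic} eigenfunction of the spectral problem (\ref{3.2}) with potential $u_{\rm S0}$ rather than the non-periodic second solution. This requires tracking how the explicit periodic eigenfunction transforms under the Darboux matrix and confirming it stays in the periodic class; alternatively, one can sidestep the composition argument altogether and grind through the direct substitution into (\ref{three-fold-general}), using the conjugation symmetry $\varphi_2 = \overline{\varphi_1}$ and the reality of the symmetric functions of $\lambda_1^2,\lambda_2^2,\lambda_3^2$ to reduce $N$ and $D$ to manageable elliptic expressions. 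Either way, the computations parallel those for Theorem \ref{theorem-DT-dn} with complex eigenvalues in place of real ones, and the structural statement follows once the one-fold and two-fold identities are established; I would present the one-fold and two-fold computations in detail and invoke permutability for the three-fold case, remarking that a direct check is also possible.
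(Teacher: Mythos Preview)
Your one-fold and two-fold arguments are essentially the paper's approach: the paper substitutes the squared-eigenfunction relations (\ref{relation-1-simple}), (\ref{relation-2-simple}), (\ref{relation-3-simple-better}) into (\ref{one-fold-general}) and (\ref{two-fold-general}) to obtain the linear fractional transformations
\[
\tilde u = \frac{e u + 2\lambda_3^2(c-4\lambda_3^2)}{e+4\lambda_3^2 u}, \qquad
\tilde u = \frac{eE_0 - 4\lambda_1^2\lambda_2^2 u}{eu + 4\lambda_1^2\lambda_2^2},
\]
and then plugs the explicit profile (\ref{Jacob-2-intro}) into these to identify $-u_{\rm S0}$ and $u_{\rm S0}$. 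Your sketch is the same in spirit, though you do not single out this intermediate step; the paper does not use the conjugation $\varphi_2=\overline{\varphi_1}$ as an organizing principle for the two-fold case, since the fractional form already makes the result transparent.

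The genuine difference is the three-fold case. The paper does \emph{not} argue by permutability; it performs a direct substitution of (\ref{14.4})--(\ref{14.5}) into the full expressions for $N$ and $D$ in (\ref{three-fold-general}) and, after using (\ref{zero-order}) and (\ref{represent-c})--(\ref{represent-e}), obtains $N=-\tfrac12 uD$ and hence $\tilde u=-u$. This computation (carried out for the real case in Lemma \ref{lemma-DT-3}) involves only the symmetric functions of $\lambda_1^2,\lambda_2^2,\lambda_3^2$, so it is literally the same for (\ref{eig-complex-intro}); no separate analysis is needed. Your permutability route is conceptually appealing, and the obstacle you flag is real: one must check that the Darboux-transformed $\varphi_3$ is the periodic (rather than the second, non-periodic) eigenfunction for $u_{\rm S0}$, which is extra work. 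The paper's direct computation sidesteps this entirely and simultaneously handles both Theorems \ref{theorem-DT-dn} and \ref{theorem-DT-cn} in one stroke.
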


\begin{remark}
Other possible one-fold and two-fold Darboux transformations missing in the formulation of Theorem \ref{theorem-DT-cn}
returns complex-valued solutions $u$ to the mKdV equation (\ref{mKdV}).
\end{remark}

Finally, we use the algorithm above and construct new solutions
to the mKdV equation (\ref{mKdV}). To do so, we obtain the closed form expression for
the second solution $\varphi = (\hat{p}_1,\hat{q}_1)^t$
of the Lax system (\ref{3.2})--(\ref{3.3}) with $\lambda = \lambda_1$ given by an eigenvalue
in Theorem \ref{theorem-eig}. The explicit representation
of the second solution has been already obtained in our previous work \cite{CPkdv}, however,
the expression obtained there is singular at any point of $(x,t)$ where one
of the two components of the periodic solution $\varphi = (p_1,q_1)^t$ vanishes. As a result, the corresponding
expression was only used for the eigenvalue $\lambda_+$ in (\ref{eig-dn-intro}) but not
for the eigenvalue $\lambda_-$. Here we obtain a different explicit
representation of the second solution $\varphi$ which is free of singularities for every
eigenvalue $\lambda$ in Theorem \ref{theorem-eig}.
As a result, we are able to construct new solutions to the mKdV equation
by using the Darboux transformations for every second solution to the Lax system
(\ref{3.2})--(\ref{3.3}) corresponding to eigenvalues in Theorem \ref{theorem-eig}.
The following two theorems describe the new solutions as a function of $(x,t)$.

\begin{theorem}
\label{theorem-rogue-dn}
Assume $u_4 < u_3 < u_2 < u_1$ such that $u_1 + u_2 + u_3 + u_4 = 0$,
$u_1 + u_2 \neq 0$, $u_1 + u_3 \neq 0$, and $u_2 + u_3 \neq 0$.
Under three non-degeneracy conditions (\ref{nondeg1}), (\ref{nondeg2}), and (\ref{nondeg3}) below,
there exist $c_1,c_2,c_3 \neq c$ such that the second solutions to the Lax system (\ref{3.2})--(\ref{3.3})
for the eigenvalues $\lambda_1,\lambda_2,\lambda_3$ in (\ref{eig-real-intro}) are linearly growing in $x$ and $t$ everywhere
on the $(x,t)$ plane except for the straight lines $x - c_{1,2,3} t = \xi_{1,2,3}$, where
$\xi_1,\xi_2,\xi_3$ are phase parameters which are not uniquely defined.
The one-fold Darboux transformation with the second solution
for each eigenvalue $\lambda_{1,2,3}$ in (\ref{eig-real-intro}) adds an algebraic soliton
with the corresponding speed $c_{1,2,3}$ on the background of the periodic wave (\ref{Jacob-1-intro}) transformed by
the symmetry $(S_{1,2,3})$ and reflection $u \mapsto -u$.
The two-fold Darboux transformation with the
second solutions for any two eigenvalues from (\ref{eig-real-intro})
adds two algebraic solitons with the corresponding two wave speeds
on the background of the periodic wave (\ref{Jacob-1-intro}) transformed by
the complementary symmetry.
The three-fold Darboux transformation with the second solutions
for all three eigenvalues in (\ref{eig-real-intro}) adds
all three algebraic solitons with the three wave speeds
on the background of the periodic wave (\ref{Jacob-1-intro}) reflected with $u \mapsto -u$.
\end{theorem}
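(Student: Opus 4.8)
We follow the four-step algorithm recalled in the Introduction. For the periodic objects, Steps 1--2 were settled in Theorem \ref{theorem-eig} and the periodic Darboux transformations in Theorem \ref{theorem-DT-dn}; the one genuinely new ingredient is a singularity-free closed form for the \emph{second}, non-periodic solution of the Lax system (\ref{3.2})--(\ref{3.3}) at the eigenvalues (\ref{eig-real-intro}), after which everything follows from an asymptotic analysis of the Darboux formulas (\ref{one-fold-general})--(\ref{three-fold-general}).

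\smallskip
\noindent\textbf{Step 1 (the second solution).} Fix $j\in\{1,2,3\}$, let $\lambda_j$ be as in (\ref{eig-real-intro}), and let $\varphi_j=(p_j,q_j)^t$ be the real, bounded, $x$-periodic eigenfunction produced by the algebraic method behind Theorem \ref{theorem-eig}; by the remark following that theorem $\varphi_j=\varphi_j(x-ct)$ is itself a travelling wave. From the explicit squared-eigenfunction formulas one reads off that $P_j:=p_j^2+q_j^2$ is strictly positive on the whole $(x,t)$ plane, which is exactly the positivity that fails for the representation used in \cite{CPkdv}. Since $\operatorname{tr}U=\operatorname{tr}V=0$, the Wronskian of any two solutions is a nonzero constant, so the second solution, normalized to unit Wronskian, must have the form
\begin{equation*}
\hat\varphi_j=\theta_j\,\varphi_j+\frac{1}{P_j}\,\varphi_j^{\perp},\qquad \varphi_j^{\perp}:=(-q_j,\,p_j)^t,
\end{equation*}
for a scalar function $\theta_j(x,t)$. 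Substituting this ansatz into (\ref{3.2}) and (\ref{3.3}) and pairing with $\varphi_j$ yields a compatible pair $\partial_x\theta_j=f_j$, $\partial_t\theta_j=g_j$ with $f_j,g_j$ explicit rational functions of $p_j,q_j,u,\lambda_j$, hence functions of $x-ct$; compatibility forces $g_j=-cf_j+\gamma_j$ for a constant $\gamma_j$, and integrating gives
\begin{equation*}
\theta_j(x,t)=m_j\,(x-c_jt)+\Theta_j^{\mathrm{per}}(x-ct)+\mathrm{const},\qquad m_j:=\langle f_j\rangle,\quad c_j:=-\langle g_j\rangle/m_j,
\end{equation*}
where $\langle\cdot\rangle$ denotes the mean over a spatial period and $\Theta_j^{\mathrm{per}}$ is periodic. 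The three non-degeneracy conditions (\ref{nondeg1}), (\ref{nondeg2}), (\ref{nondeg3}) are exactly the inequalities $m_j\neq0$, which I would convert into explicit algebraic inequalities on $(u_1,u_2,u_3,u_4)$; the strict inequality $c_j\neq c$ I expect to follow from the same computation, since $c_j=c$ would make $\hat\varphi_j$ a travelling wave of speed $c$, contradicting the growth. Consequently $\hat\varphi_j$ is bounded (and quasi-periodic in $t$) along each line $x-c_jt=\xi_j$ --- the phase $\xi_j$ being non-unique because translating it is absorbed into $\Theta_j^{\mathrm{per}}$ --- and grows linearly in $x$ and $t$ off those lines, i.e.\ wherever $|x-c_jt|\to\infty$.

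\smallskip
\noindent\textbf{Step 2 (the one-fold transformation).} Insert $\hat\varphi_j$ into (\ref{one-fold-general}). Since the right-hand side of (\ref{one-fold-general}) is homogeneous of degree zero in $(p_1,q_1)$, on the region $|x-c_jt|\to\infty$ one has $\hat\varphi_j=\theta_j\bigl(\varphi_j+O(\theta_j^{-1})\bigr)$ and hence
\begin{equation*}
\tilde u\;\longrightarrow\;u+\frac{4\lambda_j p_j q_j}{p_j^2+q_j^2},
\end{equation*}
which by Theorem \ref{theorem-DT-dn} is the periodic wave (\ref{Jacob-1-intro}) acted on by the symmetry $(S_j)$ of (\ref{symm-intro}) and by the reflection $u\mapsto-u$. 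In the strip $|x-c_jt|=O(1)$ the correction carried by $\hat\varphi_j$ survives and is localized --- decaying like $(x-c_jt)^{-2}$ --- in the variable $x-c_jt$: this is the advertised algebraic soliton of speed $c_j$, and since $c_j\neq c$ it moves relative to the periodic background, so it is carried in from spatial infinity as $t\to-\infty$ and carried back out as $t\to+\infty$. Global smoothness of $\tilde u$ is immediate because $\hat p_j^2+\hat q_j^2=0$ with real entries would force $\hat\varphi_j=0$, contradicting the nonzero Wronskian.

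\smallskip
\noindent\textbf{Step 3 (multi-fold transformations, and the main obstacle).} For the two- and three-fold transformations I would insert $\hat\varphi_i=\theta_i\varphi_i+P_i^{-1}\varphi_i^{\perp}$, $i$ ranging over the chosen index set, into (\ref{two-fold-general}) and (\ref{three-fold-general}) and analyze the asymptotic regimes according to which of the growing scalars $\theta_i$ are large. When all of them are large, the two formulas collapse to the multi-fold Darboux transformation of the purely periodic eigenfunctions, which by Theorem \ref{theorem-DT-dn} is the complementary symmetry in (\ref{symm-intro}) (two-fold) or the reflection $u\mapsto-u$ (three-fold); when only a proper subset of the $\theta_i$ stays $O(1)$, the corresponding algebraic solitons persist on top of that transformed periodic background and superpose with their respective speeds $c_i$. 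Equivalently, one may iterate the one-fold transformation and check inductively that a Darboux transformation sends a ``linear growth $+$ periodic'' solution at $\lambda_i$ to a solution of the same type on the new background, with the soliton speed unchanged. The step I expect to be the main obstacle is proving that the denominators in (\ref{two-fold-general}) and (\ref{three-fold-general}) never vanish on the $(x,t)$ plane, so that $\tilde u$ is a globally smooth, real, bounded solution: this should follow by combining the strict positivity of the $P_i$, the non-degeneracy conditions $m_i\neq0$, the distinctness $\lambda_i\neq\pm\lambda_j$ (guaranteed by $u_1+u_2\neq0$, $u_1+u_3\neq0$, $u_2+u_3\neq0$), and a compactness argument in the periodic variable $x-ct$ together with explicit control along the finitely many growth directions $x-c_it$. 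Once the denominators are controlled, matching the regimes above against Definition \ref{def-rogue} and (\ref{rogue-wave-def}) completes the proof.
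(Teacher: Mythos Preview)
Your approach is essentially the same as the paper's. Your ansatz $\hat\varphi_j=\theta_j\varphi_j+P_j^{-1}\varphi_j^{\perp}$ is exactly the representation (\ref{represent-new}) of Lemma \ref{lemma-time-second} (with $\theta_j\leftrightarrow\phi_j$ up to the Wronskian normalization), and your decomposition $\theta_j=m_j(x-c_jt)+\Theta_j^{\mathrm{per}}$ is precisely (\ref{tilde-phi-growth}) in Lemma \ref{lemma-phi-growth-dn}; the paper integrates (\ref{phi-der-x})--(\ref{phi-der-t}) explicitly to obtain the closed form (\ref{phi-final}), from which $m_j=M_j$ is read off as the period-mean in (\ref{nondeg1})--(\ref{nondeg3}) and $c_j=c-M_j^{-1}$, so $c_j\neq c$ is immediate rather than requiring the indirect argument you sketch. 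Your Step 2 limit is the first limit of Lemma \ref{lemma-phi}, and the paper also records the second limit $\hat u\to 2u-\tilde u$ as $|\phi_j|\to 0$, which gives the extremal amplitudes of the algebraic soliton tabulated in Table \ref{Table1}.

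The one place where you are more careful than the paper is the global non-vanishing of the multi-fold denominators: the paper does not prove this analytically but relies on the structure of the Darboux formulas (for the one-fold case $D_1=(p_1^2+q_1^2)^2\phi_1^2+4>0$ is manifest) together with the numerical plots in Figures \ref{fig-2-fold-1}--\ref{fig-3-fold-1}. So the ``main obstacle'' you flag is real, but the paper leaves it at the same level of rigor you do.
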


\begin{theorem}
\label{theorem-rogue-cn}
Assume $a \neq \pm b$. Under two non-degeneracy conditions (\ref{nondeg4}) and (\ref{nondeg5}) below,
the second solutions to the Lax system (\ref{3.2})--(\ref{3.3}) for
the eigenvalues $\lambda_1$ and $\lambda_2$ in (\ref{eig-complex-intro}) are linearly
growing in $x$ and $t$ everywhere, whereas there exists $c_0 \neq c$ such that the second solution
to the Lax system (\ref{3.2})--(\ref{3.3}) for the eigenvalue $\lambda_3$ in (\ref{eig-complex-intro}) is linearly growing in $x$ and $t$ everywhere
except for the straight line $x - c_0 t = \xi_0$, where $\xi_0$ is the phase parameter
which is not uniquely defined. The one-fold Darboux transformation with the second solution
for eigenvalue $\lambda_3$ in (\ref{eig-complex-intro}) adds an algebraic soliton
with the wave speed $c_0$ on the background of the periodic wave (\ref{Jacob-2-intro}) transformed by
the symmetry $(S_0)$ and reflection $u \mapsto -u$.
The two-fold Darboux transformation with the second solutions for two eigenvalues
$\lambda_1$ and $\lambda_2$ in (\ref{eig-complex-intro}) adds a rogue wave
on the background of the periodic wave (\ref{Jacob-2-intro}) transformed by
the symmetry $(S_0)$. The three-fold Darboux transformation with the second solutions
for all three eigenvalues in (\ref{eig-complex-intro}) adds
both the algebraic soliton with the wave speed $c_0$ and the
rogue wave on the background of the periodic wave (\ref{Jacob-2-intro}) reflected with $u \mapsto -u$.
\end{theorem}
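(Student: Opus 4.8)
The plan is to carry out the four-step algorithm of the Introduction for the periodic wave $(\ref{Jacob-2-intro})$ with the eigenvalues $(\ref{eig-complex-intro})$ supplied by Theorem \ref{theorem-eig}, and then to read off the large-$(x,t)$ structure of the Darboux-transformed potentials $(\ref{one-fold-general})$, $(\ref{two-fold-general})$, $(\ref{three-fold-general})$ built from the {\em non-periodic} second solutions of the Lax system $(\ref{3.2})$--$(\ref{3.3})$. The comparison point throughout is Theorem \ref{theorem-DT-cn}, which already identifies what these transformations do to the periodic eigenfunctions; the new content is that replacing the periodic eigenfunction by the linearly growing second solution superposes a localized structure — a propagating algebraic soliton for $\lambda_3$ or a rogue wave for the pair $\lambda_1,\lambda_2$ — on top of the periodic outcome of Theorem \ref{theorem-DT-cn}.

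First I would construct, for $j=1,2,3$, the second solution $\hat\varphi_j = (\hat p_j,\hat q_j)^t$ of $(\ref{3.2})$--$(\ref{3.3})$ at $\lambda=\lambda_j$ from the periodic eigenfunction $\varphi_j=(p_j,q_j)^t$ of (the proof of) Theorem \ref{theorem-eig}. Since ${\rm tr}\,U = {\rm tr}\,V = 0$, the Wronskian of any two solutions is constant, so I use the reduction-of-order ansatz
\[
\hat\varphi_j \;=\; \theta_j\,\varphi_j \;+\; \frac{1}{p_j^2+q_j^2}\begin{pmatrix} -q_j \\ p_j \end{pmatrix},
\]
which automatically has Wronskian $1$ with $\varphi_j$ and, because only the combination $p_j^2+q_j^2$ occurs in the denominator, is free of the singularities of the representation used in \cite{CPkdv}. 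Substituting into $(\ref{3.2})$--$(\ref{3.3})$ reduces the problem to two compatible scalar equations $\partial_x\theta_j = f_j$ and $\partial_t\theta_j = g_j$, with $f_j,g_j$ explicit rational functions of Jacobian elliptic functions of the travelling variable $\zeta = x-ct$; integrating yields
\[
\theta_j(x,t) \;=\; \mu_j\,(x-c_j t) \;+\; \Theta_j(\zeta),
\]
where $\mu_j = \langle f_j\rangle$, $c_j$ is fixed by the mean values of $f_j$ and $g_j$, and $\Theta_j$ is bounded. The non-degeneracy conditions $(\ref{nondeg4})$ and $(\ref{nondeg5})$ are precisely the requirements that $\mu_{1,2}$ and $\mu_3$ be nonzero. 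For the real eigenvalue $\lambda_3$ the pair $(\mu_3,c_3)$ is real; setting $c_0:=c_3$ and $\xi_0$ an arbitrary constant, $\theta_3$ stays $O(1)$ along $x-c_0 t = \xi_0$ (there $\zeta = (c_0-c)t+\xi_0$, so $\Theta_3$ remains bounded while the linear term is constant) and grows linearly off it, so $\hat\varphi_3$ is bounded on that line and linearly growing elsewhere. For the complex-conjugate pair $\lambda_1,\lambda_2$ the parameters $(\mu_{1,2},c_{1,2})$ are complex, hence $|\theta_{1,2}|\to\infty$ in {\em every} direction of the real $(x,t)$ plane and $\hat\varphi_{1,2}$ grow linearly everywhere.

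Next I would substitute $\hat\varphi_j$ into the Darboux formulas. Writing $(r_j,s_j):=(-q_j,p_j)/(p_j^2+q_j^2)$ one has $p_j r_j+q_j s_j = 0$, which gives the clean identity $\hat p_j^2+\hat q_j^2 = \theta_j^2(p_j^2+q_j^2) + (p_j^2+q_j^2)^{-1}$, together with polynomial-in-$\theta_j$ expressions for $\hat p_j\hat q_j$ and $\hat p_j^2-\hat q_j^2$. Since each of $(\ref{one-fold-general})$, $(\ref{two-fold-general})$, $(\ref{three-fold-general})$ depends on $\varphi_j$ only through the scale-invariant ratios $p_j q_j/(p_j^2+q_j^2)$ and $(p_j^2-q_j^2)/(p_j^2+q_j^2)$ — after dividing numerator and denominator by the appropriate products of $p_j^2+q_j^2$ — replacing $\hat\varphi_j$ by its leading part $\theta_j\varphi_j$ reproduces exactly the transformation built from the periodic eigenfunction $\varphi_j$, whose outcome is Theorem \ref{theorem-DT-cn}. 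For the one-fold transformation at $\lambda_3$ a direct computation gives
\[
\tilde u \;=\; u + \frac{4\lambda_3 p_3 q_3}{p_3^2+q_3^2} \;+\; \frac{4\lambda_3\bigl[\theta_3(p_3^2-q_3^2) - 2p_3 q_3/(p_3^2+q_3^2)\bigr]}{\theta_3^2(p_3^2+q_3^2)^2 + 1},
\]
so that $\tilde u$ equals the periodic wave produced by the periodic eigenfunction at $\lambda_3$ — namely $(\ref{Jacob-2-intro})$ transformed by $(S_0)$ and the reflection $u\mapsto -u$, by Theorem \ref{theorem-DT-cn} — plus a correction that decays like $\theta_3^{-1} = O(|x-c_0 t|^{-1})$ off the line $x-c_0 t = \xi_0$ and is $O(1)$ near it; expanding this correction about the line identifies it with an algebraic soliton of speed $c_0$. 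For the two-fold transformation at $\lambda_1,\lambda_2$ the analogous expansion in powers of $\theta_1,\theta_2$ shows that $\tilde u$ equals the $(S_0)$-transformed periodic wave plus a correction that is $O(1)$ in a bounded region of the $(x,t)$ plane and decays like $(|x|+|t|)^{-1}$ away from it — here $|\theta_{1,2}|\ge |\mu_{1,2}|\,|{\rm Im}\,c_{1,2}|\,|t| + O(1)$ uniformly in $x$ — while the reality of $\tilde u$ follows from $\hat\varphi_2$ being the Lax-symmetry image of $\hat\varphi_1$; this correction is the rogue wave, and the uniform-in-$x$ bound yields $\sup_{x\in\mathbb{R}} |\tilde u(x,t) - u_{(S_0)}(x-x_0)| \to 0$ as $t\to\pm\infty$ for a suitable phase $x_0$, so $(\ref{rogue-wave-def})$ holds. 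For the three-fold transformation at $\lambda_1,\lambda_2,\lambda_3$ I would combine the two analyses: far from the line $x-c_0 t=\xi_0$ and from the bounded region, $\tilde u$ is the periodic wave $(\ref{Jacob-2-intro})$ reflected by $u\mapsto -u$ (the three-fold periodic outcome of Theorem \ref{theorem-DT-cn}); near the line it carries the algebraic soliton of speed $c_0$; in the bounded region it carries the rogue wave; and since for large $|t|$ these two features are separated ($x\approx c_0 t$ versus $x=O(1)$), both appear at once. The companion Theorem \ref{theorem-rogue-dn} follows by the same scheme applied to $(\ref{Jacob-1-intro})$ and the real eigenvalues $(\ref{eig-real-intro})$.

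The main obstacle I anticipate is the bookkeeping in the two- and three-fold cases: substituting the explicit elliptic-function formulas for $p_j,q_j$ and $\theta_j$ into the degree-six rational expression $N/D$ of $(\ref{three-fold-general})$ and showing that the leading (growing) monomials in $\theta_1,\theta_2,\theta_3$ cancel in exactly the right way to leave the reflected periodic background while the subleading terms reassemble precisely into the algebraic soliton plus the rogue wave. Controlling the decay of the correction uniformly in $x$ for large $|t|$ — which is what actually delivers the rogue-wave limit $(\ref{rogue-wave-def})$ — and pinning down $c_0$ together with the arbitrariness of $\xi_0$ are the other delicate points; the non-degeneracy conditions $(\ref{nondeg4})$, $(\ref{nondeg5})$ and the exact values $(\ref{eig-complex-intro})$ of $\lambda_1,\lambda_2,\lambda_3$ enter decisively to ensure that these cancellations and estimates go through.
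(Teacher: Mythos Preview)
Your proposal is correct and follows essentially the same approach as the paper: the same reduction-of-order ansatz for the second solution (the paper's Lemma \ref{lemma-time-second}, with Wronskian normalized to $2$ instead of $1$), the same mean-value analysis of the linear growth of $\theta_j$ (the paper's Lemma \ref{lemma-phi-growth-cn}), and the same comparison with the periodic-eigenfunction outcomes of Theorem \ref{theorem-DT-cn} via the scale-invariance of the Darboux formulas. The ``bookkeeping'' obstacle you anticipate for the two- and three-fold cases is in fact dissolved by the very scale-invariance observation you already make: since $(\ref{two-fold-general})$ and $(\ref{three-fold-general})$ depend on each $\varphi_j$ only through the ratios $p_jq_j/(p_j^2+q_j^2)$ and $(p_j^2-q_j^2)/(p_j^2+q_j^2)$, the limits $|\theta_j|\to\infty$ and $|\theta_j|\to 0$ can be taken directly without expanding $N/D$---this is precisely the content of the paper's Lemma \ref{lemma-phi}, which yields $\hat u\to\tilde u$ and $\hat u\to 2u-\tilde u$ in those limits with no elliptic-function computation at all.
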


\begin{remark}
The only rogue wave in Theorem \ref{theorem-rogue-cn} satisfies Definition \ref{def-rogue}, whereas
the other two solutions do not satisfy the limit (\ref{rogue-wave-def}) due to
the algebraic solitons propagating on the periodic
background along the straight line $x - c_0 t = \xi_0$. None of new solutions
in Theorem \ref{theorem-rogue-dn} satisfy Definition \ref{def-rogue} due to the same reason.
\end{remark}

The paper is organized as follows. Section \ref{sec-2} describes the algebraic method which allows us
to complete step 1 in the algorithm above. Section \ref{sec-3} characterizes the most general
periodic wave of the mKdV equation (\ref{mKdV}) and gives the proof of Theorem \ref{theorem-wave}.
Step 2 of the algorithm and the proof of Theorem \ref{theorem-eig} are given in Section \ref{sec-4}.
The proof of Theorems \ref{theorem-DT-dn} and \ref{theorem-DT-cn} can be found in Section \ref{sec-5}.
Step 3 of the algorithm is developed in Section \ref{sec-6}, where the second non-periodic solution
to the Lax system (\ref{3.2})--(\ref{3.3}) is obtained.
Step 4 of the algorithm is completed in Section \ref{sec-7}, where the new solutions to the mKdV equation (\ref{mKdV}) are constructed,
Theorems \ref{theorem-rogue-dn} and \ref{theorem-rogue-cn} are proven, and the surface plots
are included for graphical illustrations. Finally, Section \ref{sec-8} contains
a conjecture on the possible generalization of our results to
the case of quasi-periodic solutions to the mKdV equation (\ref{mKdV}).

\section{The algebraic method}
\label{sec-2}

We have showed in \cite{CPkdv} that imposing a constraint $u = p_1^2 + q_1^2$
between a solution $u$ to the mKdV equation (\ref{mKdV}) and
components of the solution $\varphi = (p_1,q_1)^t$ to the Lax system (\ref{3.2})--(\ref{3.3}) with
fixed parameter $\lambda = \lambda_1$ results in the second-order
differential equation on $u$. This equation is given by (\ref{second-order-intro}) for $e = 0$ and
it does not recover the most general periodic wave of the mKdV equation (\ref{mKdV}).
Here we extend the algebraic method of \cite{Cao1,Cao2,Cao3} by
imposing the constraint
\begin{equation}
\label{potential}
u = p_1^2 + q_1^2 + p_2^2 + q_2^2
\end{equation}
between a solution $u$ to the mKdV equation (\ref{mKdV}) and
components of two solutions $\varphi = (p_1,q_1)^t$ and $\varphi = (p_2,q_2)^t$ to
the Lax system (\ref{3.2})--(\ref{3.3}) with two
fixed parameters $\lambda = \lambda_1$ and $\lambda = \lambda_2$.
Assume $\lambda_1 \neq \pm \lambda_2$.

The spectral problem (\ref{3.2}) for two vectors $\varphi = (p_1,q_1)^t$
and $\varphi = (p_2,q_2)^t$ at $\lambda_1$ and $\lambda_2$ can be written as the Hamiltonian system
of degree two
\begin{eqnarray}
\label{Ham-sys}
\frac{d p_j}{dx} = \frac{\partial H_0}{\partial q_j}, \quad
\frac{d q_j}{dx} = -\frac{\partial H_0}{\partial p_j}, \quad j = 1,2,
\end{eqnarray}
generated from the Hamiltonian function
\begin{equation}
\label{Ham-fun}
H_0(p_1,q_1,p_2,q_2) = \frac{1}{4} (p_1^2 + q_1^2 + p_2^2 + q_2^2)^2 + \lambda_1 p_1 q_1 + \lambda_2 p_2 q_2.
\end{equation}
Since $H_0$ is $x$-independent, we introduce the constant $E_0 := 4 H_0(p_1,q_1,p_2,q_2)$ in addition
to parameters $(\lambda_1,\lambda_2)$. As is shown in \cite{Cao2}, the Hamiltonian system (\ref{Ham-sys})--(\ref{Ham-fun})
of degree two is integrable in the sense of Liouville and there exists another conserved quantity
\begin{eqnarray}
\nonumber
H_1(p_1,q_1,p_2,q_2) & = & 4 ( \lambda_1^3 p_1 q_1 + \lambda_2^3 p_2 q_2) - 4 (\lambda_1 p_1 q_1 + \lambda_2 p_2 q_2)^2
- (\lambda_1 (p_1^2 - q_1^2) + \lambda_2 (p_2^2 - q_2^2))^2\\
& \phantom{t} & + 2 (p_1^2 + q_1^2 + p_2^2 + q_2^2) (\lambda_1^2 (p_1^2 + q_1^2) + \lambda_2^2 (p_2^2 + q_2^2)).
\label{Ham-second}
\end{eqnarray}
Since $H_1$ is $x$-independent, we introduce another constant $E_1 := 4 H_1(p_1,q_1,p_2,q_2)$.
Thus, the algebraic method includes four parameters $(\lambda_1,\lambda_2,E_0,E_1)$.
It remains to establish differential equations on the class of admissible solutions $u$.

\subsection{Lax--Novikov equations}

By differentiating (\ref{potential}) in $x$ and using the Hamiltonian system (\ref{Ham-sys})--(\ref{Ham-fun}),
we obtain the following first-order differential equation:
\begin{equation}
\label{potential-DE}
\frac{du}{dx} = 2 \lambda_1 (p_1^2 - q_1^2) + 2 \lambda_2 (p_2^2 - q_2^2).
\end{equation}
It follows from (\ref{potential}) and (\ref{Ham-fun}) that
\begin{equation}
\label{potential-squared}
E_0 - u^2 =  4 \lambda_1 p_1 q_1 + 4 \lambda_2 p_2 q_2.
\end{equation}
By differentiating (\ref{potential-DE}) in $x$ and using the Hamiltonian system (\ref{Ham-sys})--(\ref{Ham-fun})
and the relation (\ref{potential-squared}), we obtain the following second-order differential equation:
\begin{eqnarray}
\nonumber
\frac{d^2u}{dx^2} + 2 u^3 & = & 2 E_0 u + 4 \lambda_1^2 (p_1^2 + q_1^2) + 4 \lambda_2^2 (p_2^2 + q_2^2) \\
& = & c u - 4 \lambda_2^2 (p_1^2 + q_1^2) - 4 \lambda_1^2 (p_2^2 + q_2^2),
\label{potential-DE2}
\end{eqnarray}
where we have introduced a parameter
\begin{equation}
\label{parameter-c}
c := 2 E_0 + 4 \lambda_1^2 + 4 \lambda_2^2.
\end{equation}
Taking yet another derivative of (\ref{potential-DE2}) in $x$ and using the Hamiltonian system (\ref{Ham-sys})--(\ref{Ham-fun}) again,
we obtain the following third-order differential equation:
\begin{eqnarray}
\frac{d^3u}{dx^3} + 6 u^2 \frac{du}{dx} = c \frac{d u}{dx} - 8 \lambda_1 \lambda_2 \left[ \lambda_2 (p_1^2 - q_1^2) + \lambda_1 (p_2^2 - q_2^2)\right].
\label{potential-DE3}
\end{eqnarray}
Differential equations (\ref{potential-DE}), (\ref{potential-DE2}), and (\ref{potential-DE3}) are not
closed for $u$. However, we show that one more differentiation gives a closed fourth-order differential equation on $u$.
To do so, we first note that it follows from (\ref{potential}), (\ref{Ham-fun}), (\ref{Ham-second}), (\ref{potential-DE}),
and (\ref{potential-DE2}) that
\begin{eqnarray}
\label{potential-squared-1}
E_1 = 16 ( \lambda_1^3 p_1 q_1 + \lambda_2^3 p_2 q_2) - (E_0 - u^2)^2
+ 2 u \left( \frac{d^2 u}{dx^2} + 2 u^3 - 2 E_0 u \right) - \left( \frac{du}{dx} \right)^2.
\end{eqnarray}
By using (\ref{potential-squared}), this relation can be rewritten in the equivalent form
\begin{eqnarray}
\label{potential-squared-2}
E_1 + E_0^2 - 4 E_0 (\lambda_1^2 + \lambda_2^2) + \left( \frac{du}{dx} \right)^2 - 2 u \frac{d^2 u}{d x^2} - 3 u^4 + c u^2 =
-16 \lambda_1\lambda_2 (\lambda_2 p_1 q_1 + \lambda_1 p_2 q_2).
\end{eqnarray}
By differentiating (\ref{potential-DE3}) in $x$ and using the Hamiltonian system (\ref{Ham-sys})--(\ref{Ham-fun})
and the relation (\ref{potential-squared-2}), we obtain a closed fourth-order differential equation on $u$:
\begin{equation}
\label{potential-DE4}
\frac{d^4 u}{dx^4} + 10 u^2 \frac{d^2 u}{dx^2} + 10 u \left( \frac{du}{dx} \right)^2
+ 6 u^5 = c\left( \frac{d^2 u}{dx^2} + 2 u^3 \right) + 2 d u,
\end{equation}
where we have introduced another parameter
\begin{equation}
\label{parameter-d}
d := E_1 + E_0^2 - 4 E_0 (\lambda_1^2 + \lambda_2^2) - 8 \lambda_1^2 \lambda_2^2.
\end{equation}
The fourth-order differential equation (\ref{potential-DE4}) belongs to the class of Lax--Novikov equations (see \cite{SV}
and references therein), which combine the stationary flows of the mKdV equation and higher-order members of its integrable hierarchy.

\subsection{Conserved quantities for the stationary fourth-order equation}

Let us introduce
\begin{equation}
\label{Q-W}
W(\lambda) = \left(\begin{array}{cc}
W_{11}(\lambda) & W_{12}(\lambda)\\
W_{12}(-\lambda) & -W_{11}(-\lambda) \end{array}\right),
\end{equation}
where the components of $W(\lambda)$ are given explicitly by
\begin{eqnarray}
\label{W11}
W_{11}(\lambda) & = & 1 - \frac{2 \lambda_1 p_1 q_1}{\lambda^2 - \lambda_1^2} - \frac{2 \lambda_2 p_2 q_2}{\lambda^2 - \lambda_2^2}, \\
\label{W12}
W_{12}(\lambda) & = & \frac{p_1^2}{\lambda-\lambda_1} + \frac{q_1^2}{\lambda + \lambda_1}
+ \frac{p_2^2}{\lambda-\lambda_2} + \frac{q_2^2}{\lambda + \lambda_2}.
\end{eqnarray}

In order to simplify the presentation, we denote derivatives of $u$ in $x$ by subscripts.
By using (\ref{potential}), (\ref{potential-DE}), (\ref{potential-DE2}), and (\ref{potential-DE3}),
the expression for $W_{12}$ can be rewritten in the equivalent form:
\begin{equation}
\label{W12-expansion}
W_{12}(\lambda) = \frac{\lambda^3 u + \frac{1}{2} \lambda^2 u_x + \frac{1}{4} \lambda (u_{xx} + 2 u^3 - cu)
+ \frac{1}{8} (u_{xxx} + 6 u^2 u_x - c u_x)}{(\lambda^2-\lambda_1^2)(\lambda^2 - \lambda_2^2)}.
\end{equation}
By using (\ref{potential-squared}), (\ref{potential-squared-2}), and (\ref{parameter-d}),
the expression for $W_{11}$ can also be rewritten in the equivalent form:
\begin{eqnarray}
\label{W11-expansion}
W_{11}(\lambda) = 1 - \frac{\lambda^2 (E_0 - u^2)}{2(\lambda^2-\lambda_1^2)(\lambda^2 - \lambda_2^2)}
- \frac{d + 8 \lambda_1^2 \lambda_2^2 + \left( u_x \right)^2 - 2 u u_{xx} - 3 u^4 + c u^2}{8(\lambda^2 - \lambda_1^2)(\lambda^2 - \lambda_2^2)}.
\end{eqnarray}
The Lax equation
\begin{equation}
\label{ham-Lax}
\frac{d}{dx} W(\lambda) = U(\lambda,u) W(\lambda) - W(\lambda) U(\lambda,u),
\end{equation}
is satisfied for every $\lambda \in \mathbb{C}$ if and only if
$(p_1,q_1,p_2,q_2)$ satisfy the Hamiltonian system (\ref{Ham-sys}) with (\ref{Ham-fun}),
where $u$ is represented by (\ref{potential}). In particular, the $(1,2)$-entry in the above relations yields the equation
\begin{equation}
\label{ham-1-2}
\frac{d}{dx} W_{12}(\lambda) = 2 \lambda W_{12}(\lambda) - 2 u W_{11}(\lambda).
\end{equation}
Substituting (\ref{W12-expansion}) and (\ref{W11-expansion}) into (\ref{ham-1-2}) yields the same fourth-order differential equation
(\ref{potential-DE4}).

It was shown in a similar context in \cite{Tu} that
$\det[W(\lambda)]$ has only simple poles at $\lambda = \pm \lambda_1$ and $\lambda = \pm \lambda_2$.
This can be confirmed from
(\ref{Q-W}), (\ref{W11}), and (\ref{W12}) with straightforward computations.
Substituting the representations (\ref{W12-expansion}) and (\ref{W11-expansion}) into $\det[W(\lambda)]$
and removing the double poles at $\lambda = \pm \lambda_1$ and $\lambda = \pm \lambda_2$ yield
the following two constraints:
\begin{eqnarray}
\nonumber
4 \lambda_j^2 \left[ u_{xx} + 2 u^3 - c u + 4 \lambda_j^2 u  \right]^2
- \left[ u_{xxx} + 6 u^2 u_x - c u_x + 4 \lambda_j^2 u_x \right]^2 & \phantom{t} & \\
- \left[ d + 8 \lambda_1^2 \lambda_2^2 + (u_x)^2 - 2 u u_{xx} - 3 u^4+ c u^2
+ 4 \lambda_j^2 (E_0 - u^2) \right]^2 = 0,& \phantom{t} & j = 1,2.
\label{constraints-DE}
\end{eqnarray}
These two equations (\ref{constraints-DE})
represent lower-order invariants for the fourth-order differential equation (\ref{potential-DE4}).
By performing elementary operations, the system of two constraints (\ref{constraints-DE})
can be rewritten in the equivalent form:
\begin{eqnarray}
\nonumber
\left( u_{xx} + 2 u^3 - 2 E_0 u \right)^2 - 16 \lambda_1^2 \lambda_2^2 u^2
- 2 u_x \left( u_{xxx} + 6 u^2 u_x - c u_x \right) & \phantom{t} & \\
\nonumber
- 2 (E_0 - u^2) \left[ d + 8 \lambda_1^2 \lambda_2^2 + (u_x)^2 - 2 u u_{xx} - 3 u^4 + c u^2  \right] & \phantom{t} &\\
- 4 (\lambda_1^2 + \lambda_2^2) \left[ (u_x)^2 + (E_0 - u^2)^2 \right] = 0 & \phantom{t} &
\label{constraints-DE-1}
\end{eqnarray}
and
\begin{eqnarray}
\nonumber
\left( u_{xxx} + 6 u^2 u_x - 2 E_0 u_x \right)^2 + \left( E_1 + E_0^2 + (u_x)^2 - 2 u u_{xx} - 3 u^4 + 2 E_0 u^2 \right)^2 & \phantom{t} & \\
\nonumber
- 4 (\lambda_1^2 + \lambda_2^2) (u_{xx} + 2 u^3 - 2 E_0 u)^2
- 16 \lambda_1^2 \lambda_2^2 \left[ (u_x)^2 + (E_0 - u^2)^2 \right] & \phantom{t} & \\
+ 32 \lambda_1^2 \lambda_2^2 u (u_{xx} + 2 u^3 - 2 E_0 u) = 0. & \phantom{t} &
\label{constraints-DE-2}
\end{eqnarray}
In order to verify conservation of (\ref{constraints-DE-1}) and (\ref{constraints-DE-2}), we have checked that
differentiating these constraints in $x$ recovers the fourth-order differential equation (\ref{potential-DE4}).

\subsection{Dubrovin equations}

One can characterize solutions $u$ to the fourth-order differential equation (\ref{potential-DE4})
from the algebro-geometric point of view, which is now commonly accepted
in the context of quasi-periodic solutions of integrable equations \cite{MatveevBook,Gestezy}. Since the numerator of $W_{12}(\lambda)$
is a polynomial of degree three, it admits three roots denoted as $\mu_1$, $\mu_2$, and $\mu_3$.
Writing the representation (\ref{W12-expansion}) in the form
\begin{equation}
\label{W-dubrovin}
W_{12}(\lambda) = \frac{u (\lambda - \mu_1)(\lambda - \mu_2) (\lambda - \mu_3)}{(\lambda^2 - \lambda_1^2)(\lambda^2 - \lambda_2^2)},
\end{equation}
yields the following differential expressions for $\mu_1$, $\mu_2$, and $\mu_3$:
\begin{eqnarray}
\label{mu1}
\mu_1 + \mu_2 + \mu_3 & = &  -\frac{1}{2u} \frac{du}{dx}, \\
\label{mu2}
\mu_1\mu_2 + \mu_1 \mu_3 + \mu_2 \mu_3 & = &  \frac{1}{4u} \left[ \frac{d^2 u}{dx^2} + 2 u^3 - c u \right], \\
\label{mu3}
\mu_1 \mu_2 \mu_3 & = &  -\frac{1}{8u} \left[ \frac{d^3 u}{dx^3} + 6 u^2 \frac{du}{dx} - c \frac{du}{dx} \right].
\end{eqnarray}

By substituting the representations (\ref{W12-expansion}) and (\ref{W11-expansion}) into
$\det\left[W(\lambda)\right]$, removing the double poles with the constraints (\ref{constraints-DE}),
and simplifying the remaining expressions, we obtain the following representation:
\begin{equation}
\label{det-W-expansion}
\det[W(\lambda)] = -1 + \frac{4 E_0 (\lambda^2 - \lambda_1^2 - \lambda_2^2) + E_1}{4 (\lambda^2 - \lambda_1^2) (\lambda^2 - \lambda_2^2)}
= -\frac{b(\lambda)}{a(\lambda)},
\end{equation}
where
\begin{eqnarray*}
\left\{ \begin{array}{l}
a(\lambda) := (\lambda^2 - \lambda_1^2) (\lambda^2 - \lambda_2^2),\\
b(\lambda) := (\lambda^2 - \lambda_1^2) (\lambda^2 - \lambda_2^2) - E_0 (\lambda^2 - \lambda_1^2 - \lambda_2^2) - \frac{1}{4} E_1. \end{array} \right.
\end{eqnarray*}

By substituting (\ref{W-dubrovin}) and (\ref{det-W-expansion}) into (\ref{ham-1-2}) and evaluating (\ref{ham-1-2})
at $\lambda = \mu_{1,2,3}$, we recover the Dubrovin equations \cite{Dubrovin} in the form:
\begin{equation}
\label{Dubrovin-eq}
\frac{d \mu_j}{d x} = \frac{2 \sqrt{a(\mu_j) b(\mu_j)}}{\prod_{i \neq j} (\mu_j - \mu_i)}.
\end{equation}
These equations characterize the algebro-geometric structure of the quasi-periodic solutions.
The variables $\{ \mu_1, \mu_2, \mu_3 \}$ are called Dubrovin's variables for the quasi-periodic solutions.
The presence of three Dubrovin variables indicates that the general solution to the fourth-order equation (\ref{potential-DE4})
is expressed by the Riemann Theta function of {\em genus three} \cite{MatveevBook,Gestezy}.

\subsection{Degeneration procedure}

For the scopes of our work, we reduce the algebraic method in order
to recover the third-order differential equation (\ref{third-order-intro}) instead of
the fourth-order equation (\ref{potential-DE4}). This scope is achieved by imposing
the constraint $\mu_3 = 0$ on Dubrovin variables satisfying the system of equations
(\ref{mu1}), (\ref{mu2}), and (\ref{mu3}). Indeed, if $\mu_3 = 0$, equation (\ref{mu3}) is equivalent
to the third-order differential equation (\ref{third-order-intro}) rewritten again as
\begin{eqnarray}
\label{third-order}
\frac{d^3 u}{dx^3} + 6 u^2 \frac{du}{dx} - c \frac{du}{dx} = 0,
\end{eqnarray}
whereas the other two equations (\ref{mu1}) and (\ref{mu2}) yield relations
\begin{eqnarray}
\label{mu1-again}
\mu_1 + \mu_2 & = &  -\frac{1}{2u} \frac{du}{dx}, \\
\label{mu2-again}
\mu_1\mu_2 & = &  \frac{1}{4u} \left[ \frac{d^2 u}{dx^2} + 2 u^3 - c u \right].
\end{eqnarray}
The presence of two Dubrovin variables indicates that the general solution to the third-order equation (\ref{third-order})
is expressed by the Riemann Theta function of {\em genus two} \cite{MatveevBook,Gestezy}.

Substituting (\ref{third-order}) into (\ref{potential-DE4}) yields the following invariant
for the third-order equation (\ref{third-order}):
\begin{eqnarray}
\label{first-order}
\left( \frac{du}{dx} \right)^2 - 2 u \frac{d^2 u}{dx^2} - 3 u^4 + c u^2 + d = 0.
\end{eqnarray}
Indeed, differentiating (\ref{first-order}) in $x$ recovers the third-order equation (\ref{third-order}).

Note in passing that the representations (\ref{W12-expansion}) and (\ref{W11-expansion})
are factorized by one power of $\lambda$, after equations (\ref{third-order}) and (\ref{first-order})
are used, namely:
$$
\lambda^{-1} W_{12}(\lambda) = \frac{\lambda^2 u + \frac{1}{2} \lambda u_x + \frac{1}{4} (u_{xx} + 2 u^3 - c u)}{
(\lambda^2 - \lambda_1^2)(\lambda^2 - \lambda_2^2)}
$$
and
$$
\lambda^{-1} W_{11}(\lambda) = \frac{\lambda^3 - \frac{1}{2} \lambda (E_0 + 2 \lambda_1^2 + 2 \lambda_2^2 - u^2)}{
(\lambda^2 - \lambda_1^2)(\lambda^2 - \lambda_2^2)}.
$$
These representations coincide with those used in \cite{Wright}
for integration of ultra-elliptic solutions to the cubic NLS equation.

\begin{remark}
It may be possible to modify the algebraic method by starting with
polynomials of even degree for $W_{12}(\lambda)$ and odd degree for $W_{11}(\lambda)$
and to derive the same solutions as those obtained by the degeneration procedure here.
This possible modification remains to be developed.
\end{remark}

By substituting differential equations (\ref{third-order}) and (\ref{first-order}) into the constraints
(\ref{constraints-DE-1}) and (\ref{constraints-DE-2}), we rewrite the constraints in the equivalent form:
\begin{eqnarray}
\label{constraints-DE-1a}
\left( \frac{d^2 u}{d x^2} + 2 u^3 - 2 E_0 u \right)^2 - 4 (\lambda_1^2 + \lambda_2^2)
\left[ \left( \frac{du}{dx} \right)^2 + (E_0 - u^2)^2 \right] - 16 \lambda_1^2 \lambda_2^2 E_0 = 0
\end{eqnarray}
and
\begin{eqnarray}
\nonumber
4 (\lambda_1^4 + \lambda_1^2 \lambda_2^2 + \lambda_2^4)
\left[ \left( \frac{du}{dx} \right)^2 +(E_0 - u^2)^2 \right]
+ 16 \lambda_1^2 \lambda_2^2 (\lambda_1^2 + \lambda_2^2) (E_0 - u^2) + 16 \lambda_1^4 \lambda_2^4 \\
- (\lambda_1^2 + \lambda_2^2) \left( \frac{d^2 u}{dx^2} + 2u^3 - 2 E_0 u \right)^2
+ 8 \lambda_1^2 \lambda_2^2 u \left( \frac{d^2 u}{dx^2} + 2 u^3 - 2 E_0 u \right) = 0.
\label{constraints-DE-2a}
\end{eqnarray}

In order to simplify characterization of the general solution of the third-order equation (\ref{third-order}),
we integrate it once and obtain the differential equation (\ref{second-order-intro}) rewritten again as
\begin{eqnarray}
\label{second-order}
\frac{d^2 u}{dx^2} + 2 u^3 - c u = e,
\end{eqnarray}
where $e$ is the integration constant. Substituting (\ref{second-order}) into (\ref{first-order}) yields
the first-order invariant (\ref{first-order-intro}) for the second-order equation (\ref{second-order})
rewritten again as
\begin{eqnarray}
\label{zero-order}
\left( \frac{du}{dx} \right)^2 + u^4 - c u^2 + d = 2 e u.
\end{eqnarray}
Indeed, differentiating (\ref{zero-order}) in $x$ recovers the second-order equation (\ref{second-order}).

Substituting (\ref{second-order}) and (\ref{zero-order}) into (\ref{constraints-DE-1a})
and (\ref{constraints-DE-2a}) yields the following two constraints on the parameters
of the algebraic method:
\begin{equation}
e^2 = 4 (\lambda_1^2 + \lambda_2^2) (E_0^2 - d) + 16 \lambda_1^2 \lambda_2^2 E_0
\label{constraints-DE-3}
\end{equation}
and
\begin{equation}
4 (\lambda_1^4 + \lambda_1^2 \lambda_2^2 + \lambda_2^4) (E_0^2 - d )
+ 16 \lambda_1^2 \lambda_2^2 (\lambda_1^2 + \lambda_2^2) E_0 + 16 \lambda_1^4 \lambda_2^4
- (\lambda_1^2 + \lambda_2^2) e^2 = 0.
\label{constraints-DE-4}
\end{equation}
The system (\ref{constraints-DE-3}) and (\ref{constraints-DE-4}) is solved in the explicit form:
\begin{equation}
\label{parameter-de}
d = E_0^2 - 4 \lambda_1^2 \lambda_2^2
\end{equation}
and
\begin{equation}
\label{parameter-e}
e^2 = 16 \lambda_1^2 \lambda_2^2 (E_0 + \lambda_1^2 + \lambda_2^2).
\end{equation}
The relation (\ref{parameter-de}) and the definition (\ref{parameter-d}) imply that
the parameter $E_1$ is uniquely expressed by
\begin{equation}
E_1 = 4 E_0 (\lambda_1^2 + \lambda_2^2) + 4 \lambda_1^2 \lambda_2^2.
\label{parameter-E-1}
\end{equation}

Summarizing, the three parameters $(c,d,e)$ for the differential equations (\ref{third-order}),
(\ref{second-order}), and (\ref{zero-order}) are related to the parameters
$(\lambda_1,\lambda_2,E_0)$ of the algebraic method
with the help of relation (\ref{parameter-c}), (\ref{parameter-de}), and (\ref{parameter-e}),
whereas parameter $E_1$ is uniquely expressed by (\ref{parameter-E-1}).

\begin{remark}
Compared to the algebraic method with only one eigenfunction $\varphi = (p_1,q_1)^t$ in \cite{CPkdv},
where solutions to the second-order equation (\ref{second-order}) with $e = 0$ are obtained,
we have flexibility to recover the same solutions with the two eigenfunctions
$\varphi = (p_1,q_1)^t$ and $\varphi = (p_2,q_2)^t$ as in (\ref{potential}), as long as there are two nonzero
eigenvalues for $\lambda_1$ and $\lambda_2$ such that $\lambda_1 \neq \pm \lambda_2$.
Thus, solutions to the second-order equation (\ref{second-order}) with $e = 0$
can be recovered by two equivalent algebraic methods.
\end{remark}

\section{Proof of Theorem \ref{theorem-wave}}
\label{sec-3}

Here we develop the phase-plane analysis of the second-order differential equation (\ref{second-order-intro}).
Therefore, we treat it as a dynamical system on the phase plane $(u,u')$, where $u' := \frac{du}{dx}$.
If $c > 0$ and $e \in (-e_0,e_0)$, where $e_0 := 2 \sqrt{c^3}/(3 \sqrt{6})$, the cubic polynomial
\begin{equation}
\label{poly-Q}
Q(u) := 2 u^3 - c u - e,
\end{equation}
has three real roots ordered as $u_* < u_{**} < u_{***}$.
Since $Q'(u) = 6 u^2 - c$, we have $Q'(u_*) > 0$, $Q'(u_{**}) < 0$, and
$Q'(u_{***}) > 0$, hence $(u_*,0)$ and $(u_{***},0)$ are center
points on the phase plane, whereas $(u_{**},0)$ is a saddle point.
Trajectories on the phase plane coincide with the level curves
of the energy function
$$
\mathcal{H}(u,u') = (u')^2 + u^4 - c u^2 - 2 e u,
$$
which are given by the constant value $-d$ in the first-order invariant (\ref{first-order-intro}).
The level curves of $\mathcal{H}(u,u')$ in the case $c > 0$ and $e \in (-e_0,e_0)$ are shown on Figure \ref{fig-plane}.

\begin{figure}[ht]
\centering
\includegraphics[scale=0.5]{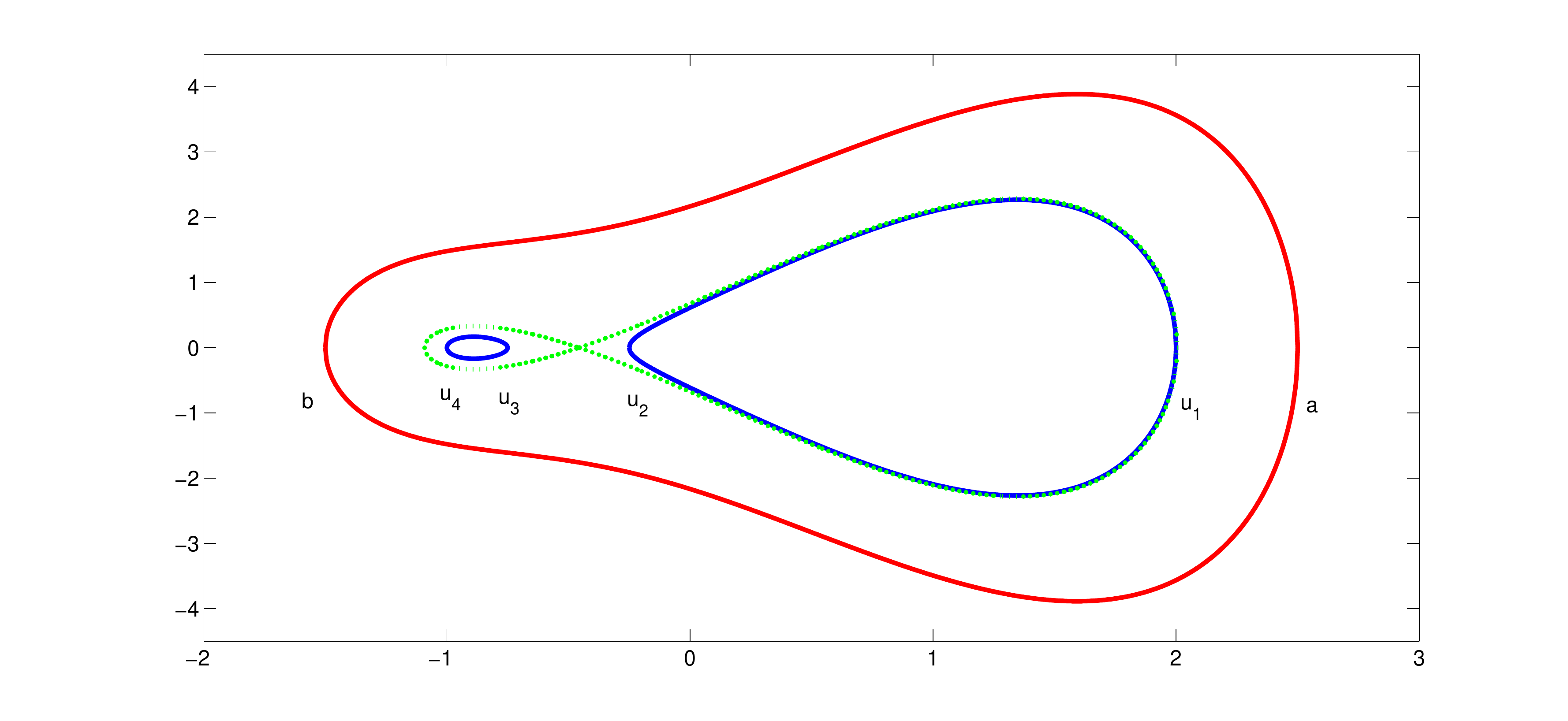}
\caption{Phase plane $(u,u')$ for the second-order equation (\ref{second-order-intro}). } \label{fig-plane}
\end{figure}

Since $2 Q(u) = \partial_u H(u,0)$, the function $H(u,0)$ has three extremal points with two minima
at $u_*$ and $u_{***}$ and one maximum at $u_{**}$. Let $h_* := \mathcal{H}(u_*,0)$,
$h_{**} := \mathcal{H}(u_{**},0)$, and $h_{***} := \mathcal{H}(u_{***},0)$.
Recall that $\mathcal{H}(u,0) = P(u) - d$, where $P(u)$ is given by (\ref{polynomial}).
For every $d \in (d_1,d_2)$ with $d_1 = -h_{**}$ and $d_2 = -\max\{h_*,h_{***}\}$,
the polynomial $P(u)$ has four real roots ordered as $u_4 < u_3 < u_2 < u_1$ also
shown on Figure \ref{fig-plane}. There exists two periodic solutions
which correspond to closed orbits on the phase plane: one closed orbit
surrounds the center point $(u_*,0)$ and the other closed orbit surrounds
the center point $(u_{***},0)$. These periodic solutions are shown
on Figure \ref{fig-solution}(a,b).

For every $d \in (-\infty,d_1) \cup (d_2,d_3)$, where $d_3 := -\min\{h_*,h_{***}\}$,
the polynomial $P(u)$ has only two real roots ordered as $b \leq a$ also shown on Figure \ref{fig-plane}.
There exists only one periodic solution
which corresponds to either the closed orbit surrounding all three critical points
$(u_*,0)$, $(u_{**},0)$, and $(u_{***},0)$ if $d \in (-\infty,d_1)$
or the closed orbit surrounding only one center point if $d \in (d_2,d_3)$.
Note that $d_2 < d_3$ if $e \neq 0$, whereas $d_2 = d_3$ if $e = 0$.
The periodic solution surrounding all three critical points is shown on Figure \ref{fig-solution}(c).

If either $c \leq 0$ or $c > 0$ and $e \in (-\infty,-e_0] \cup [e_0,\infty)$,
the cubic polynomial $Q(u)$ in (\ref{poly-Q}) admits only one real root labeled as $u_*$,
which corresponds to the global minimum of $\mathcal{H}(u,0)$.
For every $d \in (-\infty, d_1)$, where $d_1 := -\mathcal{H}(u_*,0)$,
there exists only one periodic solution which corresponds to the closed orbit surrounding
the only center point $(u_*,0)$ of the dynamical system.

\begin{figure}[ht]
\includegraphics[scale=0.3]{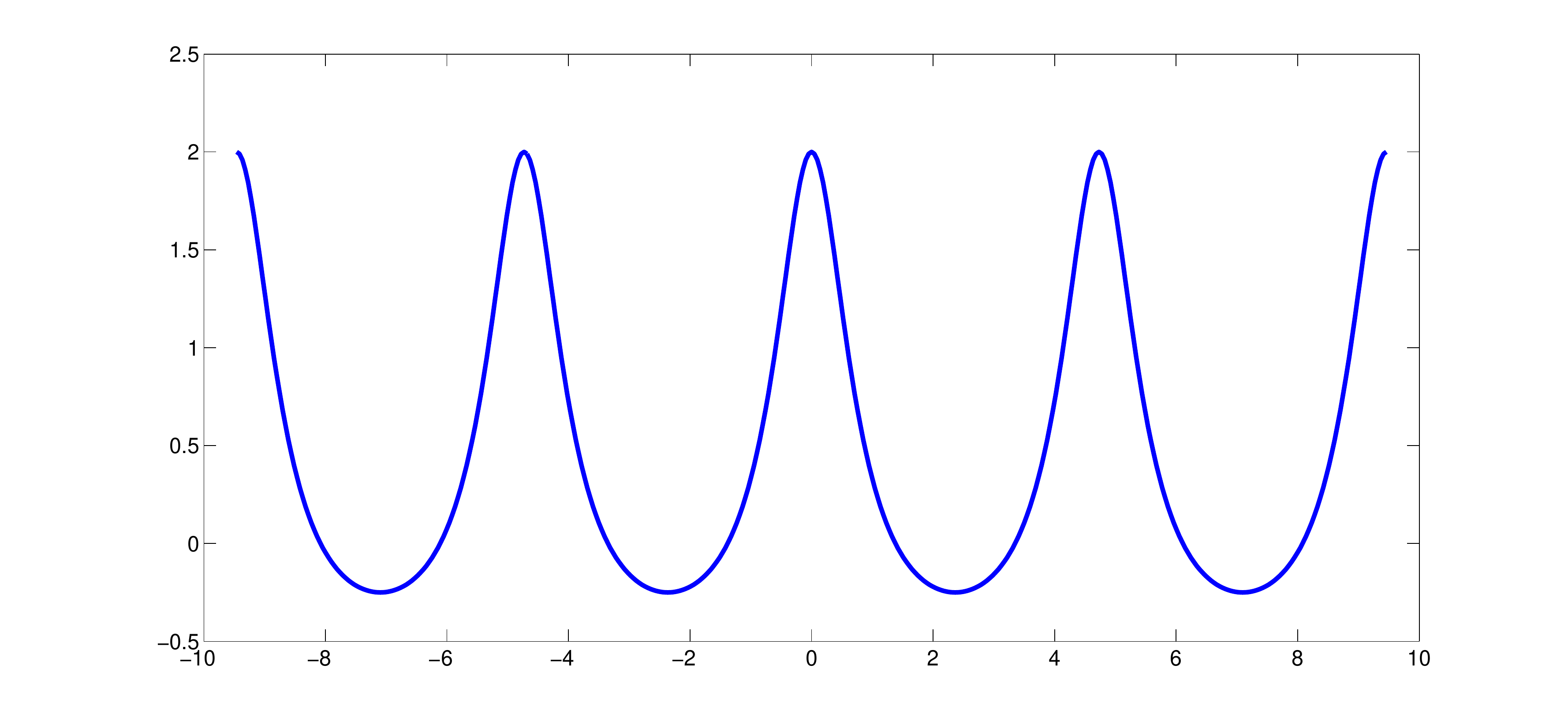}\\
\includegraphics[scale=0.3]{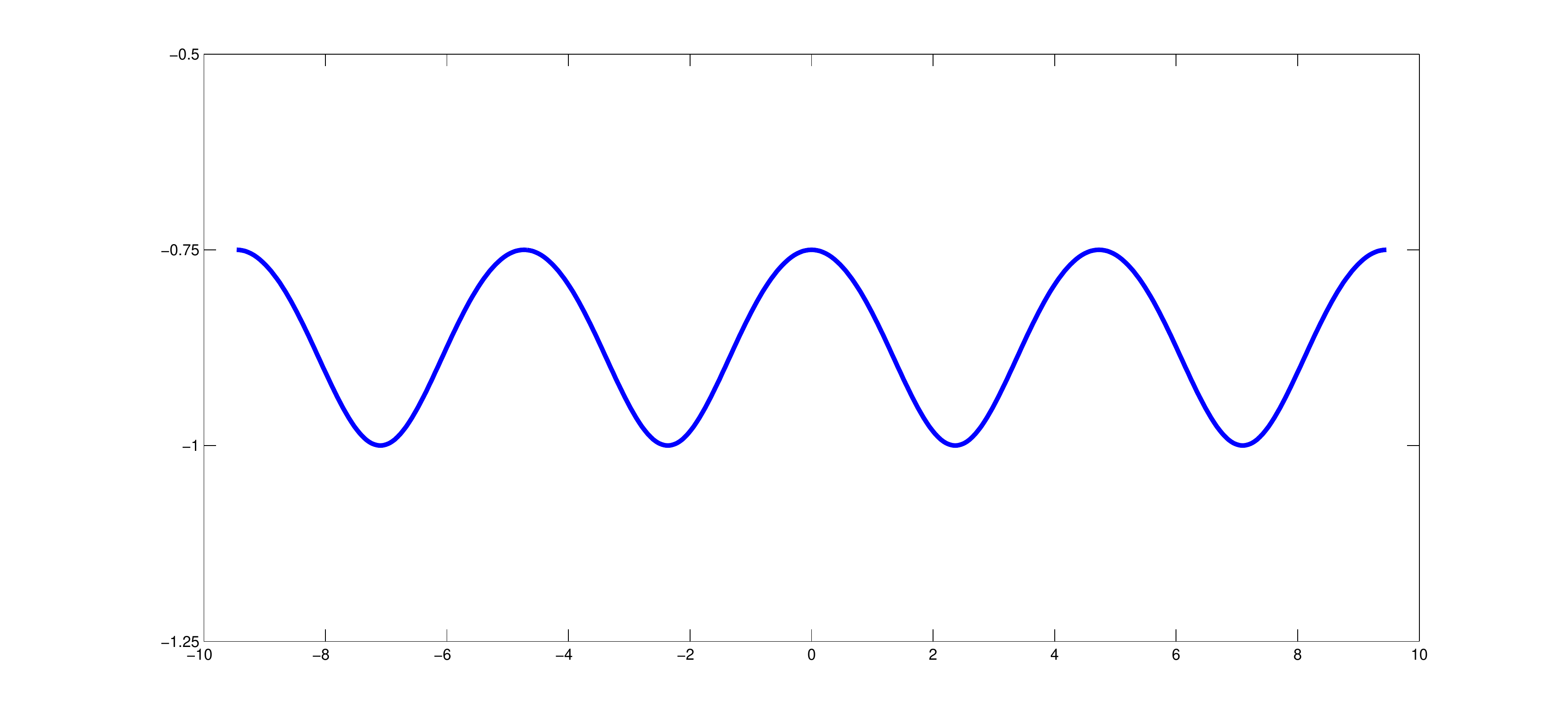}\\
\includegraphics[scale=0.3]{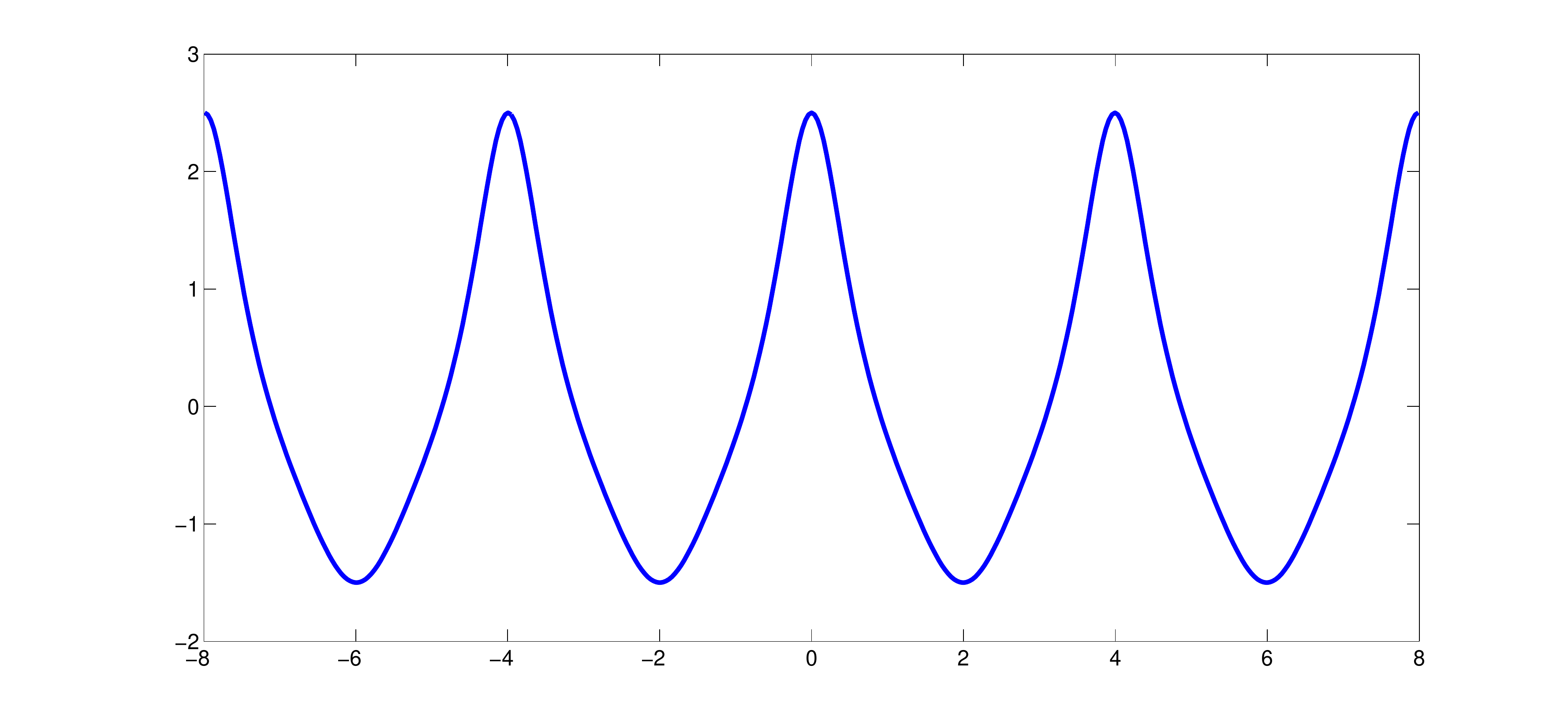}
\caption{Three periodic solutions for the phase plane on Figure \ref{fig-plane}:
(a) inside $[u_2,u_1]$, (b) inside $[u_4,u_3]$, (c) inside $[b,a]$. } \label{fig-solution}
\end{figure}

In order to complete the proof of Theorem \ref{theorem-wave},
it remains to verify the exact representations
(\ref{Jacob-1-intro})--(\ref{roots-1-intro})
and (\ref{Jacob-2-intro})--(\ref{roots-2-intro})
of the periodic solutions as
rational functions of the Jacobian elliptic functions.
The following two lemmas state the corresponding exact representations.

\begin{lemma}
\label{lemma-dn}
Let $P(u)$ in (\ref{polynomial}) admit four real roots ordered as
$u_4 \leq u_3 \leq u_2 \leq u_1$. The exact periodic solution
to the first-order invariant (\ref{first-order-intro}) in the interval
$[u_2,u_1]$ is given by
\begin{equation}
u(x) = u_4 + \frac{(u_1-u_4) (u_2-u_4)}{(u_2-u_4) + (u_1 - u_2) {\rm sn}^2(\nu x;\kappa)},
\label{Jacob-1}
\end{equation}
where $\nu > 0$ and $\kappa \in (0,1)$ are parameters given by
\begin{equation}
\label{roots-1}
\left\{ \begin{array}{l}
4 \nu^2 = (u_1 - u_3) (u_2 - u_4), \\
4 \nu^2 \kappa^2 = (u_1 - u_2) (u_3 - u_4). \end{array} \right.
\end{equation}
The roots satisfy the constraint
$u_1 + u_2 + u_3 + u_4 = 0$ and are related to the parameters
$(c,d,e)$ by
\begin{eqnarray}
\left\{ \begin{array}{l}
c = -(u_1 u_2 + u_1 u_3 + u_1 u_4 + u_2 u_3 + u_2 u_4 + u_3 u_4), \\
2e = u_1 u_2 u_3 + u_1 u_2 u_4 + u_1 u_3 u_4 + u_2 u_3 u_4, \\
d = u_1 u_2 u_3 u_4. \end{array} \right.
\label{u-roots}
\end{eqnarray}
\end{lemma}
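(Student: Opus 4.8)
The plan is to verify the explicit formula (\ref{Jacob-1}) by direct substitution into the first-order invariant (\ref{first-order-intro}), after first establishing the root--coefficient relations (\ref{u-roots}) and the constraint $u_1+u_2+u_3+u_4=0$. Since the phase-plane analysis in the preceding discussion already shows that a unique periodic orbit oscillating in $[u_2,u_1]$ exists for $d\in(d_1,d_2)$, it suffices to exhibit one such solution in closed form; uniqueness up to translation then follows from the standard ODE uniqueness for the autonomous first-order equation $(u')^2 = -P(u)$ together with the fact that on the interval $[u_2,u_1]$ the right-hand side $-P(u)=-(u-u_1)(u-u_2)(u-u_3)(u-u_4)$ is nonnegative and vanishes simply at the endpoints.

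First I would record that the second-order equation (\ref{second-order-intro}) contains no $u^2$ term, so that the degree-three polynomial $Q(u)=2u^3-cu-e$ and hence $P(u)=u^4-cu^2+d-2eu$ has vanishing coefficient at $u^3$; comparing $P(u)=(u-u_1)(u-u_2)(u-u_3)(u-u_4)$ coefficient-by-coefficient yields exactly $u_1+u_2+u_3+u_4=0$ together with the three identities in (\ref{u-roots}) (the elementary symmetric functions $e_2,e_3,e_4$ giving $-c$, $2e$, $d$ respectively). Next I would substitute the ansatz (\ref{Jacob-1}); writing $s:={\rm sn}(\nu x;\kappa)$ and using $\frac{d}{dx}{\rm sn}(\nu x;\kappa)=\nu\,{\rm cn}(\nu x;\kappa)\,{\rm dn}(\nu x;\kappa)$ together with ${\rm cn}^2=1-s^2$ and ${\rm dn}^2=1-\kappa^2 s^2$, I would compute $u-u_j$ for each $j$ and $u'$ as rational functions of $s^2$ (with a single factor of $s\,{\rm cn}\,{\rm dn}$ in $u'$). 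The key algebraic point is that $(u-u_1)(u-u_2)(u-u_3)(u-u_4)$ and $(u')^2$ then become rational functions of $s^2$ with the same denominator $\big[(u_2-u_4)+(u_1-u_2)s^2\big]^4$, and one checks that the numerators agree identically as polynomials in $s^2$ precisely when $\nu^2$ and $\kappa^2$ are chosen as in (\ref{roots-1}).

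I would organize the verification so as to minimize brute force: rather than expanding a quartic in $s^2$, I would note that $u(0)=u_4+(u_1-u_4)=u_1$ (attained when $s=0$) and $u$ attains its minimum $u_2$ when $s^2=1$, so the ansatz already has the correct range; then it is enough to check that $(u')^2+P(u)$, a rational function of $s^2$ that vanishes at $s^2=0$ and $s^2=1$, vanishes identically — which reduces to matching two more coefficients and fixes the two parameters $\nu^2,\kappa^2$. A cleaner route is the classical one: the substitution $u=u_4+\frac{N}{(u_2-u_4)+(u_1-u_2)s^2}$ with $N=(u_1-u_4)(u_2-u_4)$ is exactly the Möbius change of variable that transforms the elliptic integral $\int \frac{du}{\sqrt{(u_1-u)(u_2-u)(u-u_3)(u-u_4)}}$ into $\int\frac{ds}{\sqrt{(1-s^2)(1-\kappa^2 s^2)}}$ with modulus $\kappa^2=\frac{(u_1-u_2)(u_3-u_4)}{(u_1-u_3)(u_2-u_4)}$ and scaling $\nu$ given by (\ref{roots-1}); invoking this standard reduction (e.g.\ Chapter 8.1 of \cite{Grad}) makes the identity immediate.

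The main obstacle is purely computational bookkeeping: keeping the four linear factors $u-u_j$ and the derivative term over a common denominator and confirming that the resulting polynomial identity in $s^2$ holds using \emph{only} the constraint $u_1+u_2+u_3+u_4=0$ and the definitions (\ref{roots-1}), without any further relation among the roots. I expect the cleanest presentation to cite the classical Möbius-substitution reduction of the quartic elliptic integral to Legendre normal form, thereby converting what looks like a page of algebra into a one-line appeal to a known formula, and then to simply state the resulting $(c,d,e)$-relations (\ref{u-roots}) as Vieta's formulas for $P(u)$ with vanishing cubic coefficient.
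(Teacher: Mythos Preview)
Your proposal is correct, and the underlying idea---reducing the quartic ODE to Legendre normal form via a M\"obius change of variable---is the same one the paper uses. The execution differs slightly: rather than verifying the final formula directly in $s^2$ or citing the reduction from \cite{Grad}, the paper carries out the reduction explicitly in two steps, first substituting $u(x)=u_4+(u_1-u_4)(u_2-u_4)/v(x)$ and then $v(x)=(u_2-u_4)+(u_1-u_2)w(x)$, which transforms the factorized ODE into
\[
\left(\frac{dw}{dx}\right)^2 = w(1-w)\bigl[(u_1-u_3)(u_2-u_4)-(u_1-u_2)(u_3-u_4)\,w\bigr],
\]
recognizable at once as the equation satisfied by $w={\rm sn}^2(\nu x;\kappa)$ with $\nu,\kappa$ as in (\ref{roots-1}). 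This two-step chain is self-contained and makes the emergence of the parameters (\ref{roots-1}) transparent without any coefficient matching; your direct-verification route works but is heavier on bookkeeping, while your citation route is terser but outsources the content. The Vieta/constraint argument for (\ref{u-roots}) and $u_1+u_2+u_3+u_4=0$ is identical to the paper's.
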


\begin{proof}
By factorizing $P(u)$ by its roots, we rewrite
the first-order invariant (\ref{first-order-intro}) in the form:
\begin{eqnarray}
\label{factorized}
\left( \frac{du}{dx} \right)^2 + (u-u_1) (u-u_2) (u-u_3) (u-u_4) = 0.
\end{eqnarray}
Expanding the quartic polynomial and comparing it with the coefficients in (\ref{first-order-intro}),
we verify the constraint $u_1 + u_2 + u_3 + u_4 = 0$ and the relations (\ref{u-roots}).
In order to prove that (\ref{Jacob-1}) with (\ref{roots-1}) is an exact solution of (\ref{factorized}),
we substitute $u(x) = u_4 + (u_1-u_4) (u_2-u_4)/v(x)$
into (\ref{factorized}) and obtain the equivalent equation:
\begin{eqnarray}
\label{factorized-11}
\left( \frac{dv}{dx} \right)^2 + \left[(u_1 - u_4) - v\right]
\left[ (u_2-u_4) - v \right] \left[ (u_1-u_4)(u_2-u_4) - (u_3 - u_4) v \right] = 0.
\end{eqnarray}
Then, we substitute $v(x) = (u_2-u_4) + (u_1 - u_2) w(x)$ into (\ref{factorized-11})
and obtain the equivalent equation
\begin{eqnarray}
\label{factorized-12}
\left( \frac{dw}{dx} \right)^2 = w (1-w) \left[ (u_1-u_3)(u_2-u_4) - (u_1 - u_2)(u_3-u_4) w \right].
\end{eqnarray}
Taking derivative of $w(x) = {\rm sn}^2(\nu x;\kappa)$ and defining
$\nu$ and $\kappa$ by (\ref{roots-1}) satisfies (\ref{factorized-12}).
Since the Jacobian elliptic function ${\rm sn}^2(\nu x;\kappa)$ is periodic in $x$ with
the period $L := 2K(\kappa)/\nu$, where $K(\kappa)$ is the complete elliptic integral,
the solution $u$ in (\ref{Jacob-1}) is $L$-periodic. It is unique up
to the translation symmetry $u(x) \mapsto u(x+x_0)$, $x_0 \in \mathbb{R}$.
Furthermore, it is defined in the interval $[u_2,u_1]$ with $u(0) = u_1$ and $u(L/2) = u_2$.
\end{proof}

\begin{remark}
As follows from Figure \ref{fig-plane}, there exists another periodic solution
in the interval $[u_4,u_3]$ for the same combination of roots $\{ u_1,u_2,u_3,u_4\}$.
This solution is obtained by the symmetry transformation (S2) in (\ref{symm-intro}).
The corresponding exact periodic solution is
\begin{equation}
u(x) = u_2 + \frac{(u_3-u_2) (u_4-u_2)}{(u_4-u_2) + (u_3 - u_4) {\rm sn}^2(\nu x;\kappa)},
\label{Jacob-1-symm}
\end{equation}
whereas the relations (\ref{roots-1}) remain the same. The new solution (\ref{Jacob-1-symm})
is defined in the interval $[u_4,u_3]$ with $u(0) = u_3$ and $u(L/2) = u_4$.
The new solution (\ref{Jacob-1-symm}) is also $L$-periodic with the same
period $L = 2 K(\kappa)/\nu$.
\end{remark}

\begin{remark}
The symmetry transformation (S1) in (\ref{symm-intro}) generates another solution
\begin{equation}
u(x) = u_3 + \frac{(u_2-u_3) (u_1-u_3)}{(u_1-u_3) + (u_2 - u_1) {\rm sn}^2(\nu x;\kappa)},
\label{Jacob-1-symm-a}
\end{equation}
which also exists in the interval $[u_2,u_1]$ with $u(0) = u_2$ and $u(L/2) = u_1$.
The symmetry transformation (S3) in (\ref{symm-intro}) generates another solution
\begin{equation}
u(x) = u_1 + \frac{(u_4-u_1) (u_3-u_1)}{(u_3-u_1) + (u_4 - u_3) {\rm sn}^2(\nu x;\kappa)},
\label{Jacob-1-symm-b}
\end{equation}
which also exists in the interval $[u_4,u_3]$ with $u(0) = u_4$ and $u(L/2) = u_3$.
By uniqueness of the closed orbits for periodic solutions on the phase plane,
the solution (\ref{Jacob-1-symm-a}) coincides with the periodic solution (\ref{Jacob-1})
translated by the half-period $L/2$, whereas the solution (\ref{Jacob-1-symm-b}) coincides
with the periodic solution (\ref{Jacob-1-symm}) also translated by the half-period $L/2$.
\end{remark}

\begin{example}
\label{example-1}
Let $e = 0$. Since $P(u)$ is even in (\ref{polynomial}) for $e = 0$, we have $u_4 = -u_1$ and $u_3 = -u_2$. It follows
from (\ref{u-roots}) that $c = u_1^2 + u_2^2$ and $d = u_1^2 u_2^2$, whereas
relations (\ref{roots-1}) yield $u_1 = \nu (1 + \kappa)$ and $u_2 = \nu (1 - \kappa)$, from which
the exact solution (\ref{Jacob-1}) takes the equivalent form
\begin{equation}
u(x) = -u_1 + \frac{2 u_1}{1 + \kappa {\rm sn}^2(\nu x;\kappa)} = u_1
\frac{1 - \kappa {\rm sn}^2(\nu x;\kappa)}{1 + \kappa {\rm sn}^2(\nu x;\kappa)}.
\label{Jacob-1-simple}
\end{equation}
From Table 8.152 of \cite{Grad}, we have the transformation formula
$$
\frac{1 - \kappa {\rm sn}^2(\nu x;\kappa)}{1 + \kappa {\rm sn}^2(\nu x;\kappa)} =
{\rm dn}(\nu (1+\kappa) x; k), \quad k := \frac{2 \sqrt{\kappa}}{1 + \kappa},
$$
from which we derive
\begin{equation}
\label{dn-solution}
u(x) = u_1 {\rm dn}(u_1 x;k), \quad k = \sqrt{1-\frac{u_2^2}{u_1^2}}.
\end{equation}
Setting $u_1 = 1$ and $u_2 = \sqrt{1 - k^2}$ yields
the normalized dnoidal periodic wave (\ref{dn-intro}).
\end{example}

\begin{example}
\label{example-2}
Let $u_1 = u_2$. The exact solution (\ref{Jacob-1}) becomes the constant solution
$u(x) = u_1$.
\end{example}

\begin{example}
\label{example-3}
Let $u_2 = u_3$. It follows from (\ref{roots-1}) that $\kappa = 1$
so that the exact solution (\ref{Jacob-1}) becomes the exponential soliton
on the constant background $u_2$:
\begin{equation}
\label{Akh-solitary}
u(x) = u_4 + \frac{(u_1 - u_4) (u_2 - u_4)}{(u_2 - u_4) + (u_1 - u_2) \tanh^2(\nu x)}, \quad \nu = \frac{1}{2} \sqrt{(u_1-u_2)(u_2-u_4)},
\end{equation}
where roots satisfy the constraint $u_1 + 2 u_2 + u_4 = 0$.
\end{example}

\begin{example}
\label{example-4}
Let $u_3 = u_4$. It follows from (\ref{roots-1}) that $\kappa = 0$
so that the exact solution (\ref{Jacob-1}) becomes
$$
u(x) = u_4 + \frac{(u_1-u_4)(u_2-u_4)}{(u_2-u_4) + (u_1-u_2) \sin^2(\nu x)}, \quad \nu = \frac{1}{2} \sqrt{(u_1-u_4)(u_2-u_4)},
$$
where roots satisfy the constraint $u_1 + u_2 + 2 u_4 = 0$.
This periodic wave solution is obtained in \cite{Nail1} and reviewed recently in \cite{Choudhury}.
Setting $u_1 = 1+2b$, $u_2 = 1 - 2b$, and $u_4 = -1$ with $b \in [0,1]$
yields $\nu = \sqrt{1 - b^2}$ and transform the explicit solution to the form:
\begin{equation}
\label{Akh-periodic}
u(x) = -1 + \frac{2(1-b^2)}{1-b \cos(2 \sqrt{1-b^2}x)}.
\end{equation}
It follows from (\ref{u-roots}) that $c = 2 + 4b^2$, $d = 1 - 4b^2$, and $e = 4 b^2$.
As $b \to 1$, the explicit solution (\ref{Akh-periodic})
converges to the algebraic soliton (\ref{alg-soliton}) at $t = 0$
and after reflection $u \mapsto -u$.
\end{example}

\begin{lemma}
\label{lemma-cn}
Let $P(u)$ in (\ref{polynomial}) admit two real roots ordered as
$b \leq a$ and two complex conjugated roots labeled as $\alpha \pm i \beta$.
The exact periodic solution to the first-order invariant (\ref{first-order-intro}) is given by
\begin{equation}
u(x) = a + \frac{(b-a) (1 - {\rm cn}(\nu x; \kappa))}{1 + \delta + (\delta - 1) {\rm cn}(\nu x;\kappa)},
\label{Jacob-2}
\end{equation}
where $\delta > 0$, $\nu > 0$, and $\kappa \in (0,1)$ are parameters given by
\begin{equation}
\label{roots-2}
\left\{ \begin{array}{l}
\delta^2 = \frac{(b-\alpha)^2 + \beta^2}{(a-\alpha)^2 + \beta^2}, \\
\nu^2 = \sqrt{\left[ (a-\alpha)^2 + \beta^2 \right] \left[ (b-\alpha)^2 + \beta^2 \right]}, \\
2 \kappa^2 = 1 - \frac{(a-\alpha) (b - \alpha) + \beta^2}{\sqrt{\left[ (a-\alpha)^2 + \beta^2 \right] \left[ (b-\alpha)^2 + \beta^2 \right]}}. \end{array} \right.
\end{equation}
The roots satisfy the constraint $a + b + 2 \alpha = 0$ and are related to the parameters
$(c,d,e)$ by
\begin{eqnarray}
\left\{ \begin{array}{l}
c = -(ab + 2 \alpha (a + b) + \alpha^2 + \beta^2), \\
2e = 2 \alpha ab + (a+b) (\alpha^2 + \beta^2), \\
d = ab (\alpha^2 + \beta^2). \end{array} \right.
\label{u-roots-complex}
\end{eqnarray}
\end{lemma}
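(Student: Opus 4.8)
The proof will closely parallel that of Lemma~\ref{lemma-dn}. First I would factorize $P(u) = (u-a)(u-b)\bigl[(u-\alpha)^2+\beta^2\bigr]$, expand the product, and compare it with $P(u) = u^4 - cu^2 + d - 2eu$ from (\ref{polynomial}). Vanishing of the cubic coefficient forces the constraint $a+b+2\alpha = 0$, while equating the remaining coefficients yields the relations (\ref{u-roots-complex}) for $(c,d,e)$. This step is elementary bookkeeping.

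Next I would rewrite the first-order invariant (\ref{first-order-intro}) in the factorized form $\left(\frac{du}{dx}\right)^2 = -(u-a)(u-b)\bigl[(u-\alpha)^2+\beta^2\bigr]$. Since the quadratic factor is strictly positive, every real bounded solution is confined to $u \in [b,a]$, which is precisely the closed orbit on Figure~\ref{fig-plane} surrounding the relevant center point(s). To integrate, I would use a chain of substitutions as in Lemma~\ref{lemma-dn}: first the affine change $u = a + (b-a)w$, sending the real roots to $w = 0,1$ and turning $(u-\alpha)^2+\beta^2$ into a quadratic in $w$ with value $A^2 := (a-\alpha)^2+\beta^2$ at $w=0$ and $B^2 := (b-\alpha)^2+\beta^2$ at $w=1$; then the M\"obius change $w = (1-\zeta)/\bigl[1+\delta+(\delta-1)\zeta\bigr]$, which is exactly the substitution implicit in (\ref{Jacob-2}) and maps $\zeta \in [-1,1]$ onto $w \in [0,1]$ (note $1+\delta+(\delta-1)\zeta > 0$ there as soon as $\delta > 0$). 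A direct computation shows the transformed equation takes the form $(\zeta')^2 = \tfrac14 (1-\zeta^2)\bigl[\mbox{quadratic in }\zeta\bigr]$; the choice $\delta^2 = B^2/A^2$, i.e.\ the first relation of (\ref{roots-2}), is precisely what kills the linear term in that quadratic, and then matching the remaining even quadratic with the canonical right-hand side $\nu^2(1-\zeta^2)(1-\kappa^2+\kappa^2\zeta^2)$ of $\zeta = \mathrm{cn}(\nu x;\kappa)$ produces $\nu^2 = AB$ and the stated expression for $\kappa^2$. I expect this algebra --- verifying the cancellation of the odd-degree terms under the stated $\delta$, confirming $\nu$ and $\kappa$, and checking $\kappa^2 \in (0,1)$, which boils down to the identity and inequality $(P^2+\beta^2)(Q^2+\beta^2)-(PQ+\beta^2)^2 = \beta^2(P-Q)^2 \geq 0$ with $P = a-\alpha$, $Q = b-\alpha$ --- to be the main computational obstacle.

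Finally, once the equation is reduced to $(\zeta')^2 = \nu^2(1-\zeta^2)(1-\kappa^2+\kappa^2\zeta^2)$, its bounded solution is $\zeta = \mathrm{cn}(\nu x;\kappa)$ up to translation, and undoing the two substitutions recovers the claimed formula (\ref{Jacob-2}). Since $\mathrm{cn}(\cdot;\kappa)$ has period $4K(\kappa)$, the solution $u$ is periodic with period $L = 4K(\kappa)/\nu$, and uniqueness up to $u(x) \mapsto u(x+x_0)$ follows, as in Lemma~\ref{lemma-dn}, from uniqueness of the closed orbit on the phase plane; one also reads off $u(0) = a$ and $u(L/2) = b$, which locates the profile in $[b,a]$. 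I would close, in parallel with the $\mathrm{dn}$-case remarks, by noting that the symmetry transformation $a \leftrightarrow b$ produces the companion periodic solution of the same period.
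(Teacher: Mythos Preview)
Your proposal is correct and follows essentially the same approach as the paper: factorize $P(u)$, read off the constraint and the relations for $(c,d,e)$, then reduce the first-order invariant to the standard $\mathrm{cn}$ equation by a chain of substitutions amounting to the formula (\ref{Jacob-2}), and conclude with periodicity and uniqueness. The only cosmetic difference is that the paper uses the intermediate variable $v$ via $u = a + (b-a)/v$ followed by $v = 1 + \delta\,(1+\mathrm{cn})/(1-\mathrm{cn})$, whereas you use an affine step $u = a + (b-a)w$ followed by a M\"obius step in $w$; the composite substitution is identical.
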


\begin{proof}
By factorizing $P(u)$ by its roots, we rewrite
the first-order invariant (\ref{first-order-intro}) in the form:
\begin{eqnarray}
\label{factorized-complex}
\left( \frac{du}{dx} \right)^2 + (u-a) (u-b) \left[ (u-\alpha)^2 + \beta^2 \right]  = 0.
\end{eqnarray}
Expanding the quartic polynomial and comparing it with the coefficients in (\ref{first-order-intro}),
we verify the constraint $a + b + 2 \alpha = 0$ and the relations (\ref{u-roots-complex}).
In order to prove that (\ref{Jacob-2}) with (\ref{roots-2}) is an exact solution of (\ref{factorized-complex}),
we substitute $u(x) = a + (b-a)/v(x)$ into (\ref{factorized-complex}) and obtain the equivalent equation:
\begin{eqnarray}
\label{factorized-15}
\left( \frac{dv}{dx} \right)^2 + (1-v)
\left[ ((a-\alpha) v + b-a)^2 + \beta^2 v^2 \right] = 0.
\end{eqnarray}
Then, we substitute
$$
v(x) = 1 + \delta \frac{1 + {\rm cn}(\nu x;\kappa)}{1 - {\rm cn}(\nu x;\kappa)}
$$
into (\ref{factorized-15}) and obtain three equations on $\delta$, $\nu$, and $\kappa$ which yield (\ref{roots-2}).
Since the Jacobian elliptic function ${\rm cn}(\nu x;\kappa)$ is periodic in $x$ with
the period $L := 4K(\kappa)/\nu$, where $K(\kappa)$ is the complete elliptic integral,
the solution $u$ in (\ref{Jacob-2}) is $L$-periodic. It is unique up
to the translation symmetry $u(x) \mapsto u(x+x_0)$, $x_0 \in \mathbb{R}$.
Furthermore, it is defined in the interval $[b,a]$ with $u(0) = a$ and $u(L/2) = b$.
\end{proof}

\begin{remark}
The symmetry transformation (S0) in (\ref{symm-4-intro}) generates another solution:
\begin{equation}
u(x) = b + \frac{(a-b) (1 - {\rm cn}(\nu x; \kappa))}{1 + \delta^{-1} + (\delta^{-1} - 1) {\rm cn}(\nu x;\kappa)},
\label{Jacob-2-symm}
\end{equation}
which also exists in the interval $[b,a]$ with $u(0) = b$ and $u(L/2) = a$.
By uniqueness of the closed orbits for periodic solutions on the phase plane,
the solution (\ref{Jacob-2-symm}) coincides with the periodic solution (\ref{Jacob-2})
translated by the half-period $L/2$.
\end{remark}

\begin{example}
\label{example-5}
Let $e = 0$. Since $P(u)$ is even in (\ref{polynomial}) for $e = 0$, we have $b = -a$, $\alpha = 0$, and $\delta = 1$.
The exact solution (\ref{Jacob-2}) becomes
$$
u(x) = a {\rm cn}(\nu x;\kappa), \quad \nu = \sqrt{a^2 + \beta^2}, \quad \kappa = \frac{a}{\sqrt{a^2 + \beta^2}}.
$$
Setting $a = k$ and $\beta = \sqrt{1 - k^2}$ yields
the normalized cnoidal periodic wave (\ref{cn-intro}).
\end{example}

\section{Proof of Theorem \ref{theorem-eig}}
\label{sec-4}

Combining (\ref{parameter-c}), (\ref{parameter-de}), and (\ref{parameter-e}) yields
the following system of algebraic equations
\begin{equation}
\label{system-parameters}
\left\{ \begin{array}{l} c = 2 E_0 + 4 (\lambda_1^2 + \lambda_2^2), \\
d = E_0^2 - 4 \lambda_1^2 \lambda_2^2, \\
e^2 = 16 \lambda_1^2 \lambda_2^2 (E_0 + \lambda_1^2 + \lambda_2^2). \end{array} \right.
\end{equation}
Let us introduce $y := 4(\lambda_1^2 + \lambda_2^2)$ and $z := 4 \lambda_1^2 \lambda_2^2$.
Substituting $y = c - 2E_0$ and $z = E_0^2 - d$ into the third equation of the system (\ref{system-parameters})
yields the cubic equation for $E_0$:
\begin{equation}
\label{cubic-eq-1}
(E_0^2 - d) (2E_0 + c) = e^2.
\end{equation}
There exists exactly three roots of the cubic equation (\ref{cubic-eq-1})
and each root defines uniquely $(y,z)$ and hence $(\lambda_1^2,\lambda_2^2)$.
In order to complete the proof of Theorem \ref{theorem-eig},
it remains to verify the exact representations (\ref{eig-real-intro}) and (\ref{eig-complex-intro})
for the eigenvalue pairs $\pm \lambda_1$, $\pm \lambda_2$, $\pm \lambda_3$.
The following two lemmas give the corresponding exact representations of eigenvalues.

\begin{lemma}
\label{lemma-eig-dn}
Assume $u_4 < u_3 < u_2 < u_1$ such that $u_1 + u_2 + u_3 + u_4 = 0$
and express $(c,d,e)$ by (\ref{u-roots}). The cubic equation (\ref{cubic-eq-1})
admits three simple roots in the explicit form:
\begin{eqnarray}
\label{root-real}
(R1) \;\; E_0 = \frac{1}{2} (u_1 u_4 + u_2 u_3), \quad
(R2) \;\; E_0 = \frac{1}{2} (u_1 u_3 + u_2 u_4), \quad
(R3) \;\; E_0 = \frac{1}{2} (u_1 u_2 + u_3 u_4).
\end{eqnarray}
\end{lemma}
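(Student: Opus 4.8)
The plan is to verify directly that the three explicit values in (R1)--(R3) solve the cubic equation (\ref{cubic-eq-1}) and are pairwise distinct; since the text has already recorded that (\ref{cubic-eq-1}) has exactly three roots, this both identifies all of them and shows they are simple. First I would expand (\ref{cubic-eq-1}) to the form $2E_0^3 + c E_0^2 - 2 d E_0 - cd - e^2 = 0$, and then substitute a candidate of the generic shape $E_0 = \frac12(u_i u_j + u_k u_l)$, where $\{i,j\},\{k,l\}$ is one of the three partitions of $\{1,2,3,4\}$ into pairs (these three partitions are exactly the cases (R1)--(R3)).

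The heart of the computation is three clean factorizations. Using $d = u_1 u_2 u_3 u_4$ from (\ref{u-roots}), one gets $E_0^2 - d = \frac14 (u_i u_j - u_k u_l)^2$. Using the constraint $u_1 + u_2 + u_3 + u_4 = 0$ together with $c = -\sum_{m<n} u_m u_n$, the four ``cross'' pairs collapse to $2 E_0 + c = -(u_i + u_j)(u_k + u_l) = (u_i + u_j)^2$, the last step because $u_k + u_l = -(u_i + u_j)$. Finally, writing $2e = u_i u_j (u_k + u_l) + u_k u_l (u_i + u_j)$ (an identity equivalent to $e_3 = 2e$ in (\ref{u-roots}), valid for every pairing) and again substituting $u_k + u_l = -(u_i+u_j)$ gives $e^2 = \frac14 (u_i + u_j)^2 (u_i u_j - u_k u_l)^2$. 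Multiplying the first two expressions reproduces exactly the third, so $(E_0^2 - d)(2 E_0 + c) = e^2$ holds identically. Since $(c,d,e)$ in (\ref{u-roots}) are symmetric functions of $u_1,u_2,u_3,u_4$, this computation applies verbatim to each of the three pairings, i.e.\ to (R1), (R2), and (R3). (Conceptually, $2 E_0$ runs over the roots of the resolvent cubic of the depressed quartic $P$ in (\ref{polynomial}), and the rescaling $\theta = 2 E_0$ turns that resolvent into (\ref{cubic-eq-1}); the direct substitution just makes this self-contained.)

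Next I would confirm simplicity. The pairwise differences of the three values factor as $\frac12(u_1 - u_2)(u_3 - u_4)$, $\frac12(u_1 - u_3)(u_2 - u_4)$, and $\frac12(u_1 - u_4)(u_2 - u_3)$ — matched appropriately to the three pairs taken from (R1)--(R3) — and every factor is nonzero under the strict ordering $u_4 < u_3 < u_2 < u_1$. Hence the three roots are distinct, a degree-three polynomial with three distinct roots has only simple roots, and by the already-established count these three exhaust the roots of (\ref{cubic-eq-1}).

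I do not anticipate a genuine obstacle here: the argument is an organized symmetric-function verification. The only point requiring care is bookkeeping, namely writing the $2E_0 + c$ identity and the $e^2$ identity with pairings chosen so that the two squared factors line up when one forms the product $(E_0^2 - d)(2E_0 + c)$; once the factorizations above are in place this is immediate, and invariance of $(c,d,e)$ under relabeling of the roots removes any need to repeat the calculation three times.
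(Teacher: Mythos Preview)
Your proof is correct. The three factorizations you isolate are exactly right, and the pairwise-difference check cleanly gives distinctness under the strict ordering; since (\ref{cubic-eq-1}) is cubic, this finishes the argument.

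The route differs from the paper's. You substitute each candidate $E_0=\tfrac12(u_iu_j+u_ku_l)$ directly into (\ref{cubic-eq-1}) and close the computation with symmetric-function identities (in effect recognizing (\ref{cubic-eq-1}) as the resolvent cubic of the depressed quartic $P$). The paper instead goes through the eigenvalue parametrization: it fixes $\lambda_1=\tfrac12(u_1+u_2)$, $\lambda_2=\tfrac12(u_1+u_3)$, $\lambda_3=\tfrac12(u_2+u_3)$, writes $E_0=\lambda_3^2-\lambda_1^2-\lambda_2^2$, and then verifies that the relations (\ref{system-parameters}) reproduce (\ref{u-roots}); the other two roots follow by permuting the roles of the $\lambda_j$. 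Your argument is more self-contained and slightly cleaner for the lemma as stated; the paper's argument buys something extra, namely it simultaneously pins down the eigenvalue formulas (\ref{eig-first}) and the identities $c=2(\lambda_1^2+\lambda_2^2+\lambda_3^2)$, $e=-4\lambda_1\lambda_2\lambda_3$, which are reused in the proof of Theorem~\ref{theorem-eig} and in Section~\ref{sec-5}.
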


\begin{proof}
Due to three possible choices of two eigenvalues $\lambda_1$ and $\lambda_2$ from three admissible values,
it is sufficient to consider one combination of the three eigenvalues,
e.g. the expression (\ref{eig-real-intro}) rewritten again in the form:
\begin{equation}
\label{eig-first}
\lambda_1 = \frac{1}{2} (u_1 + u_2), \quad \lambda_2 = \frac{1}{2} (u_1 + u_3), \quad
\lambda_3 = \frac{1}{2} (u_2 + u_3).
\end{equation}
Here $(\lambda_1,\lambda_2)$ are selected as eigenvalues of the algebraic method, whereas $\lambda_3$ is the
third complementary eigenvalue. We prove validity of the root (R1) in (\ref{root-real}) for $E_0$.
If $(\lambda_1,\lambda_2,\lambda_3)$ is given by (\ref{eig-first}),
then root (R1) for $E_0$ is equivalent to the following representation:
\begin{eqnarray}
\label{represent-E}
E_0 & = & \lambda_3^2 - \lambda_1^2 - \lambda_2^2 \\
\nonumber
& = & \frac{1}{2} (u_2 u_3 - u_1^2 - u_1 u_2 - u_1 u_3) \\
\nonumber
& = & \frac{1}{2} (u_2 u_3 + u_1 u_4).
\end{eqnarray}
Substituting (\ref{eig-first}) and (\ref{represent-E})
into (\ref{system-parameters}) yields the following relations:
\begin{eqnarray}
\label{represent-c}
c & = & 2 (\lambda_1^2 + \lambda_2^2 + \lambda_3^2) \\
\nonumber & = & (u_1 + u_2 + u_3)^2 - u_1 u_2 - u_1 u_3 - u_2 u_3 \\
\nonumber & = & - u_1 u_4 - u_2 u_4 - u_3 u_4 - u_1 u_2 - u_1 u_3 - u_2 u_3,
\end{eqnarray}
\begin{eqnarray}
\label{represent-d}
d & = & \lambda_1^4 + \lambda_2^4 + \lambda_3^4 - 2(\lambda_1^2 \lambda_2^2 + \lambda_1^2 \lambda_3^2 + \lambda_2^2 \lambda_3^2) \\
\nonumber & = & -u_1^2 u_2 u_3 - u_1 u_2^2 u_3 - u_1 u_2 u_3^2 \\
\nonumber & = & u_1 u_2 u_3 u_4,
\end{eqnarray}
and
\begin{eqnarray}
\label{represent-e}
e & = & - 4 \lambda_1 \lambda_2 \lambda_3 \\
\nonumber & = & \frac{1}{2} (u_1 u_2 u_3 - (u_1 + u_2 + u_3)(u_1 u_2 + u_1 u_3 + u_2 u_3)) \\
\nonumber & = & \frac{1}{2} (u_1 u_2 u_3 + u_1 u_2 u_4 + u_1 u_3 u_4 + u_2 u_3 u_4).
\end{eqnarray}
These relations confirm validity of the parametrization in (\ref{u-roots}). Hence,
the value (R1) in (\ref{root-real}) gives one of the three roots of
the cubic equation (\ref{cubic-eq-1}). The other
two roots in (R2) and (R3) in (\ref{root-real}) are given by the interchanges
$\lambda_2 \leftrightarrow \lambda_3$ and $\lambda_1 \leftrightarrow \lambda_3$ respectively
in the list of three eigenvalues in (\ref{eig-first}).
Since $u_4 < u_3 < u_2 < u_1$, the three roots (R1), (R2), and (R3) for $E_0$ in (\ref{root-real})
are distinct and no other roots of the cubic equation (\ref{cubic-eq-1}) exists.
\end{proof}

\begin{remark}
When $u_1 = u_2$ or $u_2 = u_3$ or $u_3 = u_4$ in Examples \ref{example-2}, \ref{example-3}, and \ref{example-4},
one eigenvalue in (\ref{eig-first}) is simple and the other eigenvalue is double.
This corresponds to one simple and one double roots for $E_0$ in (\ref{root-real}).
\end{remark}

\begin{remark}
When $e = 0$ in Example \ref{example-1}, one eigenvalue is zero, e.g., $\lambda_3 = 0$
in (\ref{eig-first}). Setting $u_1 = 1$ and $u_2 = \sqrt{1-k^2}$
for the normalized ${\rm dn}$-periodic wave (\ref{dn-intro}) yields the other two eigenvalues
$(\lambda_1,\lambda_2)$ in the form (\ref{eig-dn-intro}).
\end{remark}

\begin{lemma}
\label{lemma-eig-cn}
Assume $\beta \neq 0$, $\alpha = -(a+b)/2$ and express $(c,d,e)$ by (\ref{u-roots-complex}).
The cubic equation (\ref{cubic-eq-1}) admits three simple roots in the explicit form:
\begin{eqnarray}
\label{root-complex}
(R1) \;\; E_0 = \frac{1}{8} (a^2 + 6ab + b^2) + \frac{1}{2} \beta^2, \quad
(R2_{\pm}) \;\; E_0 = -\frac{1}{4} (a+b)^2 \pm \frac{i}{2} \beta (a-b).
\end{eqnarray}
\end{lemma}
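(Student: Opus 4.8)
The plan is to follow the scheme of the proof of Lemma~\ref{lemma-eig-dn}. I would posit the three candidate eigenvalues from~(\ref{eig-complex-intro}), namely $\lambda_1 = \frac14(a-b)+\frac i2\beta$, $\lambda_2 = \overline{\lambda_1} = \frac14(a-b)-\frac i2\beta$, $\lambda_3 = \frac12(a+b)$, and form the three values $E_0^{(k)} := \lambda_k^2 - \sum_{j\neq k}\lambda_j^2$ for $k=1,2,3$. By the consistency relations~(\ref{system-parameters}) of the degeneration procedure of Section~\ref{sec-2} (taking the selected pair of the algebraic method to be $\{\lambda_i,\lambda_j\}$ with $i,j\neq k$), each $E_0^{(k)}$ is a root of the cubic~(\ref{cubic-eq-1}) for the parameters $c = 2(\lambda_1^2+\lambda_2^2+\lambda_3^2)$, $d = \lambda_1^4+\lambda_2^4+\lambda_3^4 - 2(\lambda_1^2\lambda_2^2+\lambda_1^2\lambda_3^2+\lambda_2^2\lambda_3^2)$, and $e^2 = 16\lambda_1^2\lambda_2^2\lambda_3^2$. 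These three expressions are symmetric in $\lambda_1,\lambda_2,\lambda_3$, so the outcome does not depend on which pair is selected, exactly as in~(\ref{represent-c})--(\ref{represent-e}).

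The next step is the routine algebraic core: using $\lambda_1^2+\lambda_2^2 = \frac18(a-b)^2-\frac12\beta^2$, $\lambda_1^2\lambda_2^2 = \left(\frac1{16}(a-b)^2+\frac14\beta^2\right)^{2}$, and $\lambda_3^2 = \frac14(a+b)^2$, I would verify that the values of $c$, $d$, and $e^2$ above reproduce $c = -(ab + 2\alpha(a+b)+\alpha^2+\beta^2)$, $d = ab(\alpha^2+\beta^2)$, and $e^2$ read off from the middle relation of~(\ref{u-roots-complex}), all evaluated at $\alpha = -\frac12(a+b)$. Consequently $E_0^{(1)},E_0^{(2)},E_0^{(3)}$ are precisely the roots of the cubic~(\ref{cubic-eq-1}) associated with the parameters~(\ref{u-roots-complex}). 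Evaluating them, $E_0^{(3)} = \lambda_3^2-\lambda_1^2-\lambda_2^2$ is real and equals $\frac18(a^2+6ab+b^2)+\frac12\beta^2$, which is $(R1)$, while $E_0^{(1)} = \lambda_1^2-\lambda_2^2-\lambda_3^2$ and $E_0^{(2)} = \lambda_2^2-\lambda_1^2-\lambda_3^2$ equal $-\frac14(a+b)^2 \pm \frac i2\beta(a-b)$, which are $(R2_\pm)$.

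Finally I would observe that~(\ref{cubic-eq-1}) is a cubic, so these exhaust its roots, and that they are simple: $(R1)$ is real whereas, for $a\neq b$, $(R2_\pm)$ have nonzero imaginary parts $\pm\frac12\beta(a-b)$ (as $\beta\neq 0$), so the three roots are pairwise distinct. A fully computational alternative, bypassing the eigenvalue ansatz, would be to expand~(\ref{cubic-eq-1}) as the monic cubic $E_0^3 + \frac c2 E_0^2 - dE_0 - \frac12(cd+e^2) = 0$ and check the three Vieta relations for the triple in~(\ref{root-complex}) directly from~(\ref{u-roots-complex}); I expect the eigenvalue route to be shorter and to keep the link with Theorem~\ref{theorem-eig} transparent. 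I do not foresee a genuine obstacle: the argument reduces to elementary symmetric-function algebra, and the only care needed is bookkeeping the imaginary parts in the concluding computation (together with the harmless observation that only $e^2$, not the sign of $e$, enters the cubic).
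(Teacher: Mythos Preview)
Your argument is correct and tracks the paper's reasoning closely. The only difference is one of economy: the paper's proof does not redo the verification of~(\ref{represent-c})--(\ref{represent-e}) but instead invokes the formal substitution $u_1\mapsto a$, $u_2\mapsto\alpha+i\beta$, $u_3\mapsto\alpha-i\beta$, $u_4\mapsto b$ into the already-established identities of Lemma~\ref{lemma-eig-dn}, which immediately yields both the eigenvalues~(\ref{eig-first-complex}) and the roots~(\ref{root-complex}). Your route unpacks this substitution explicitly, which is fine and perhaps clearer; the paper's shortcut is quicker but rests on the (correct) observation that those identities are purely polynomial in the $u_j$ and hence remain valid for complex values. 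One small point: your simplicity argument needs $a\neq b$ (otherwise $(R2_+)=(R2_-)$), which the lemma statement does not assume; the paper's claim that $\beta\neq0$ alone suffices for distinctness has the same gap, so this is not a defect of your proof relative to the paper, but you may want to flag the degenerate case $a=b$ explicitly.
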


\begin{proof}
Roots (\ref{root-complex}) are obtained from roots (\ref{root-real})
with the formal correspondence: $u_1 = a$, $u_2 = \alpha + i \beta$, $u_3 = \alpha - i \beta$,
$u_4 = b$. The constraint $u_1 + u_2 + u_3 + u_4 = 0$ is equivalent to $a + b + 2 \alpha = 0$,
which allows us to eliminate $\alpha = -(a+b)/2$ from all expressions and
to obtain eigenvalues in the form:
\begin{eqnarray}
\label{eig-first-complex}
\lambda_1 = \frac{1}{4} (a-b) + \frac{i}{2} \beta, \quad \lambda_2 = \frac{1}{4} (a-b) - \frac{i}{2} \beta,
\quad \lambda_3 = \frac{1}{2} (a+b).
\end{eqnarray}
The three roots for $E_0$ in (\ref{root-complex})
and the three eigenvalues $(\lambda_1,\lambda_2,\lambda_3)$
in (\ref{eig-first-complex}) are distinct if $\beta \neq 0$.
\end{proof}

\begin{remark}
When $e = 0$ in Example \ref{example-5}, one eigenvalue is zero, e.g., $\lambda_3 = 0$
in (\ref{eig-first-complex}). Setting $a = k$ and $\beta = \sqrt{1-k^2}$
for the normalized ${\rm cn}$-periodic wave (\ref{cn-intro}) yields the other two eigenvalues
$(\lambda_1,\lambda_2)$ in the form (\ref{eig-cn-intro}).
\end{remark}

\begin{remark}
The explicit expressions (\ref{eig-first}) of the three eigenvalues for the periodic wave (\ref{Jacob-1})
can be recovered from formulas (58) and (62) in \cite{K1}, where they are derived by a different technique.
The explicit expressions (\ref{eig-first-complex}) of the three eigenvalues for the periodic wave
(\ref{Jacob-2}) are new to the best of our knowledge.
\end{remark}

\section{Proof of Theorems \ref{theorem-DT-dn} and \ref{theorem-DT-cn}}
\label{sec-5}

Let us first establish algebraic relations on the squared eigenfunctions in the periodic spectral problem (\ref{3.2})
with the periodic solution $u$ to the mKdV equation (\ref{mKdV}).
We rewrite equations (\ref{potential}), (\ref{potential-DE}), (\ref{potential-DE2}), and (\ref{potential-DE3}) as a system
of linear equations for squared eigenfunctions:
\begin{eqnarray}
\label{alg-system-1}
\left\{ \begin{array}{rcl}
p_1^2 + q_1^2 + p_2^2 + q_2^2 & = & u, \\
2 \lambda_1 (p_1^2 - q_1^2) + 2 \lambda_2 (p_2^2 - q_2^2) & = & u_x, \\
4 \lambda_1^2 (p_1^2 + q_1^2) + 4 \lambda_2^2 (p_2^2 + q_2^2) & = & u_{xx} + 2 u^3 - 2 E_0 u, \\
8 \lambda_1^3 (p_1^2 - q_1^2) + 8 \lambda_2^3 (p_2^2 - q_2^2) & = & u_{xxx} + 6 u^2 u_x - 2 E_0 u_x,
\end{array} \right.
\end{eqnarray}
where the subscripts denote the derivatives in $x$.
Solving the algebraic system (\ref{alg-system-1}) with Cramer's rule yields the relations
\begin{eqnarray}
\label{relation-1}
p_1^2 + q_1^2 & = & \frac{u_{xx} + 2 u^3 - 2 E_0 u - 4 \lambda_2^2 u}{4 (\lambda_1^2 - \lambda_2^2)}, \\
\label{relation-2}
p_1^2 - q_1^2 & = & \frac{u_{xxx} + 6 u^2 u_x- 2 E_0 u_x - 4 \lambda_2^2 u_x}{8 \lambda_1 (\lambda_1^2 - \lambda_2^2)}
\end{eqnarray}
and similar relations for $p_2^2 + q_2^2$ and $p_2^2 - q_2^2$
obtained by the transformation $\lambda_1 \leftrightarrow \lambda_2$.
If $u$ satisfies differential equations (\ref{third-order}) and (\ref{second-order}), the expressions
(\ref{relation-1}) and (\ref{relation-2}) are simplified to the form
\begin{eqnarray}
\label{relation-1-simple}
p_1^2 + q_1^2 = \frac{e + 4 \lambda_1^2 u}{4(\lambda_1^2 - \lambda_2^2)}, \\
\label{relation-2-simple}
p_1^2 - q_1^2 = \frac{2 \lambda_1 u_x}{4 (\lambda_1^2 - \lambda_2^2)}.
\end{eqnarray}

We also rewrite equations (\ref{potential-squared}) and (\ref{potential-squared-1}) as another system
of linear equations for squared eigenfunctions:
\begin{eqnarray}
\label{alg-system-2}
\left\{ \begin{array}{rcl}
4 \lambda_1 p_1 q_1 + 4 \lambda_2 p_2 q_2 & = & E_0 - u^2, \\
16 \lambda_1^3 p_1 q_1 + 16 \lambda_2^3 p_2 q_2 & = & E_1 + E_0^2 + (u_x)^2 - 2 u u_{xx} - 3 u^4 + 2 E_0 u^2.
\end{array} \right.
\end{eqnarray}
Solving the algebraic system (\ref{alg-system-2}) with Cramer's rule yields the relations
\begin{eqnarray}
\label{relation-3}
4 \lambda_1 p_1 q_1 = \frac{E_1 + E_0^2 + (u_x)^2 - 2 u u_{xx} - 3 u^4 + 2 E_0 u^2 + 4 \lambda_2^2 u^2 - 4 \lambda_2^2 E_0}{4 (\lambda_1^2 - \lambda_2^2)}
\end{eqnarray}
and a similar relation for $4 \lambda_2 p_2 q_2$
obtained by the transformation $\lambda_1 \leftrightarrow \lambda_2$.
If $u$ satisfies the differential equation (\ref{first-order}), while $d$ and $E_1$ are expressed by
(\ref{parameter-de}) and (\ref{parameter-E-1}), then the expression (\ref{relation-3}) is simplified to the form
\begin{eqnarray}
\label{relation-3-simple}
p_1 q_1 = \frac{\lambda_1 (E_0 + 2 \lambda_2^2 - u^2)}{4 (\lambda_1^2 - \lambda_2^2)}.
\end{eqnarray}
By using the relations (\ref{represent-E}) and (\ref{represent-c}),
the relation (\ref{relation-3-simple}) can be rewritten in the equivalent form:
\begin{eqnarray}
\label{relation-3-simple-better}
2 p_1 q_1 = \frac{\lambda_1 (c - 4 \lambda_1^2 - 2 u^2)}{4 (\lambda_1^2 - \lambda_2^2)}.
\end{eqnarray}

\begin{remark}
Any eigenfunction $\varphi = (p_1,q_1)^t$ of the spectral problem (\ref{3.2})
is defined up to the scalar multiplication, and so are the Darboux transformations
(\ref{one-fold-general}), (\ref{two-fold-general}), and (\ref{three-fold-general}).
Therefore, the denominators in (\ref{relation-1-simple}), (\ref{relation-2-simple}), and (\ref{relation-3-simple-better})
can be canceled without loss of generality. On the other hand,
the numerators in (\ref{relation-1-simple}), (\ref{relation-2-simple}), and (\ref{relation-3-simple-better})
for the eigenfunction $\varphi = (p_1,q_1)^t$ can be extended to the other two eigenfunctions
$\varphi = (p_2,q_2)^t$ and $\varphi = (p_3,q_3)^t$ by replacing $\lambda_1 \mapsto \lambda_2$
and $\lambda_1 \mapsto \lambda_3$ respectively.
\end{remark}

We now proceed with the proof of Theorem \ref{theorem-DT-dn}. The following
three lemmas represent the outcomes of the one-fold, two-fold, and three-fold
Darboux transformations for the periodic wave (\ref{Jacob-1-intro})
with the periodic eigenfunctions of the spectral problem (\ref{3.2}).

\begin{lemma}
\label{lemma-DT-1}
Assume $u_4 < u_3 < u_2 < u_1$ such that $u_1 + u_2 + u_3 + u_4 = 0$,
$u_1 + u_2 \neq 0$, $u_1 + u_3 \neq 0$, and $u_2 + u_3 \neq 0$.
The one-fold Darboux transformation (\ref{one-fold-general})
with the periodic eigenfunction for each eigenvalue in (\ref{eig-real-intro})
transforms the periodic wave (\ref{Jacob-1-intro}) to the periodic wave of the same
period obtained after the corresponding symmetry transformation in
(\ref{symm-intro}) and the reflection $u \mapsto -u$.
\end{lemma}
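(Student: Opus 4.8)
The plan is to reduce the one-fold Darboux transformation (\ref{one-fold-general}) to an explicit fractional-linear (M\"{o}bius) map acting on the values of the profile $u$, and then to recognize this map as the combined effect of one of the symmetries in (\ref{symm-intro}) and the reflection $u \mapsto -u$ applied to (\ref{Jacob-1-intro}). I would carry out the computation for $\lambda_1 = \frac{1}{2}(u_1+u_2)$ and obtain (S1); the cases $\lambda_2 = \frac{1}{2}(u_1+u_3)$ and $\lambda_3 = \frac{1}{2}(u_2+u_3)$ then follow in the same way under the interchanges $u_2 \leftrightarrow u_3$ and $u_1 \leftrightarrow u_3$, giving (S2) and (S3) respectively. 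The inequalities $\lambda_i \neq \pm \lambda_j$ needed in the relations (\ref{relation-1-simple})--(\ref{relation-3-simple-better}) are automatic here, since $\lambda_i = \pm \lambda_j$ would force two of $u_1,u_2,u_3,u_4$ to coincide once $u_1+u_2+u_3+u_4 = 0$ is used, contrary to the strict ordering.

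First I would substitute the squared-eigenfunction identities (\ref{relation-1-simple}) and (\ref{relation-3-simple-better}) — valid because $u$ solves (\ref{third-order}), (\ref{second-order}), and (\ref{first-order}) — into (\ref{one-fold-general}). The common proportionality factor $[4(\lambda_1^2-\lambda_2^2)]^{-1}$ of $p_1^2+q_1^2$ and $2p_1q_1$ cancels in the ratio, leaving
\begin{equation*}
\tilde u = u + \frac{2\lambda_1^2\left(c - 4\lambda_1^2 - 2u^2\right)}{e + 4\lambda_1^2 u}.
\end{equation*}
Inserting $4\lambda_1^2 = (u_1+u_2)^2$ and the parametrization (\ref{u-roots}) together with $u_1+u_2+u_3+u_4 = 0$ — so that $c - 4\lambda_1^2 = -(u_1u_2 + u_3u_4)$ and $2e = (u_1+u_2)(u_3u_4 - u_1u_2)$ — and combining the two fractions, I expect the quadratic terms in $u$ to cancel, yielding the fractional-linear expression
\begin{equation*}
\tilde u = \frac{(u_3u_4 - u_1u_2)\,u - (u_1+u_2)(u_1u_2 + u_3u_4)}{2(u_1+u_2)\,u + u_3u_4 - u_1u_2}.
\end{equation*}

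Next I would check that this transformation is non-singular on the orbit $u \in [u_2,u_1]$. The denominator $D(u) := 2(u_1+u_2)u + u_3u_4 - u_1u_2$ is affine in $u$, and using $u_4 < u_3 < u_2 < u_1$ and $u_3+u_4 = -(u_1+u_2)$ one gets $D(u_1) = (u_1-u_3)(u_1-u_4) > 0$ and $D(u_2) = (u_2-u_3)(u_2-u_4) > 0$, hence $D(u) > 0$ throughout $[u_2,u_1]$ (equivalently $p_1^2+q_1^2 \neq 0$ there); thus $\tilde u$ is smooth and $L$-periodic with $L = 2K(\kappa)/\nu$, the hypothesis $u_1+u_2 \neq 0$ being what makes $\lambda_1 \neq 0$ and the transformation non-trivial. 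Since $u$ in (\ref{Jacob-1-intro}) is itself a fractional-linear function of $w := {\rm sn}^2(\nu x;\kappa)$ that sends $w = 0,1,\infty$ to $u_1,u_2,u_4$, the maps $\tilde u$ and $-\tilde u$ are fractional-linear functions of $w$. I would then evaluate $-\tilde u$ at $u = u_1,u_2,u_4$ — which uses nothing beyond $u_1+u_2+u_3+u_4 = 0$ — and find $u_2,u_1,u_3$ respectively; these are exactly the values taken at $w = 0,1,\infty$ by the profile (\ref{Jacob-1-symm-a}) obtained from (\ref{Jacob-1-intro}) by the symmetry (S1). Since a fractional-linear transformation is determined by its values at three distinct points, $-\tilde u$ coincides with (\ref{Jacob-1-symm-a}) as a function of $x$, i.e.\ $\tilde u$ is (\ref{Jacob-1-intro}) transformed by (S1) and then by $u \mapsto -u$, and it has period $L$. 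As a consistency check one may note beforehand that, by the general Darboux machinery and the fact that the periodic eigenfunction travels with the same speed $c$, $\tilde u$ is a travelling wave of (\ref{mKdV}) with that same $c$, so Theorem \ref{theorem-wave} already forces $\tilde u$ into the family (\ref{Jacob-1-intro})--(\ref{roots-1-intro}) up to the symmetries (\ref{symm-intro}) and the reflection; the explicit computation only pins down which member.

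The main obstacle is purely the algebra: producing the cancellation that collapses the quadratic-in-$u$ expression to a fractional-linear one, and then the three evaluations of $-\tilde u$, all require repeated use of $u_1+u_2+u_3+u_4 = 0$ and of identities such as $(u_j-u_3)(u_j-u_4) = u_j^2 + (u_1+u_2)u_j + u_3u_4$ and their relabellings. The place where a sign slip or a mislabelling is most likely is in tracking which turning point of the orbit of $\tilde u$ corresponds to which root of the transformed quartic, i.e.\ in matching the right value of $-\tilde u$ to the right value of (\ref{Jacob-1-symm-a}). No conceptual difficulty beyond this bookkeeping is anticipated.
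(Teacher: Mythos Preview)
Your proof is correct and follows the same overall route as the paper: you derive the same fractional-linear formula $\tilde u = (eu + 2\lambda_1^2(c-4\lambda_1^2))/(e + 4\lambda_1^2 u)$ (your two displays are the uncombined and the root-parametrized versions of this), and then identify the result with the appropriate member of the family (\ref{Jacob-1-intro}). The difference lies only in the identification step. The paper substitutes (\ref{Jacob-1-intro}) directly, simplifies the numerator and denominator separately as explicit rational functions of ${\rm sn}^2(\nu x;\kappa)$ (its formulas (\ref{formula-1})--(\ref{formula-2}) and their analogues), and reads off the transformed profile. You instead observe that $\tilde u$, as a composition of two M\"{o}bius maps in $w = {\rm sn}^2(\nu x;\kappa)$, is itself M\"{o}bius in $w$, and pin it down by its values at $w = 0,1,\infty$. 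Your version is tidier---three short evaluations in place of repeated rational simplifications---and your check that $D(u) > 0$ on $[u_2,u_1]$ makes the regularity explicit, which the paper leaves implicit. One word of caution about your relabelling shortcut for $\lambda_2$ and $\lambda_3$: it is valid because the M\"{o}bius map $u \mapsto \tilde u$ depends only on $\lambda^2$ and the symmetric functions $c,e$ of the roots, so the computation for $\lambda_1$ in fact shows that the map associated with $(u_i+u_j)/2$ sends each $u_i$ to $-u_{\sigma(i)}$ for the corresponding involution $\sigma$; you then still evaluate at the \emph{original} $u_1,u_2,u_4$ and compare with the explicit forms (\ref{Jacob-1-symm}) and (\ref{Jacob-1-symm-b}). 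Spelling this out would pre-empt the mislabelling risk you flag at the end.
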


\begin{proof}
Substituting (\ref{relation-1-simple}) and (\ref{relation-3-simple-better}) into (\ref{one-fold-general})
yields the new solution to the mKdV equation (\ref{mKdV}) in the form:
\begin{equation}
\label{one-fold}
\tilde{u} = \frac{e u + 2 \lambda_1^2 (c - 4 \lambda_1^2)}{e + 4 \lambda_1^2 u}.
\end{equation}
The representation (\ref{one-fold})
is a linear fractional transformation between two solutions to the mKdV equation (\ref{mKdV}).
By replacing $\lambda_1 \mapsto \lambda_2$ and $\lambda_1 \mapsto \lambda_3$ in (\ref{one-fold}),
two more solutions $\tilde{u}$ can be obtained from the representation (\ref{one-fold}).

We show with explicit computations that the transformation (\ref{one-fold}) applied to the periodic wave (\ref{Jacob-1})
with the eigenvalue $\lambda_j$ produces the same periodic wave after the corresponding symmetry transformation
(Sj) and the reflection $u \mapsto -u$.

By using (\ref{Jacob-1}), (\ref{u-roots}), and (\ref{eig-first}), we perform lengthy but straightforward computations to
obtain the following expressions:
\begin{equation}
\label{formula-1}
e + 4 \lambda_1^2 u = \frac{1}{2} (u_1 + u_2) (u_1-u_4) (u_2-u_4)
\frac{(u_1 - u_3) + (u_2-u_1)  {\rm sn}^2(\nu x;\kappa)}{(u_2 - u_4) + (u_1 - u_2)  {\rm sn}^2(\nu x;\kappa)}
\end{equation}
and
\begin{equation}
\label{formula-2}
e u + 2 \lambda_1^2 (c - 4 \lambda_1^2) =
\frac{1}{2} (u_1 + u_2) (u_1 - u_4) (u_2 - u_4)
\frac{(u_3 - u_1) u_2 + (u_1-u_2) u_3 {\rm sn}^2(\nu x;\kappa)}{(u_2 - u_4) + (u_1 - u_2)  {\rm sn}^2(\nu x;\kappa)},
\end{equation}
where $\nu$ and $\kappa$ are given by (\ref{roots-1}). Substituting (\ref{formula-1}) and (\ref{formula-2}) into (\ref{one-fold})
for $u_1 + u_2 \neq 0$ produces a new solution in the form:
\begin{equation}
\label{Jacob-1-new}
\tilde{u}(x) = - \left[ u_3 + \frac{(u_2 - u_3) (u_1 - u_3)}{(u_1 - u_3) + (u_2 - u_1) {\rm sn}^2(\nu x;\kappa)} \right].
\end{equation}
The new solution is obtained from the periodic wave (\ref{Jacob-1}) by the symmetry transformation (S1) in (\ref{symm-intro})
and the reflection $u \mapsto -u$, see the explicit form (\ref{Jacob-1-symm-a}).

By repeating the previous computations for the eigenvalue $\lambda_2$ in (\ref{eig-first}),
we obtain the following expressions:
\begin{equation}
\label{formula-3}
e + 4 \lambda_2^2 u = \frac{1}{2} (u_1 + u_3) (u_1-u_4) (u_1-u_2)
\frac{(u_2 - u_4) + (u_4-u_3)  {\rm sn}^2(\nu x;\kappa)}{(u_2 - u_4) + (u_1 - u_2)  {\rm sn}^2(\nu x;\kappa)}
\end{equation}
and
\begin{equation}
\label{formula-4}
e u + 2 \lambda_2^2 (c - 4 \lambda_2^2) =
\frac{1}{2} (u_1 + u_3) (u_1 - u_4) (u_1 - u_2)
\frac{(u_4 - u_2) u_3 + (u_3-u_4) u_2 {\rm sn}^2(\nu x;\kappa)}{(u_2 - u_4) + (u_1 - u_2)  {\rm sn}^2(\nu x;\kappa)}.
\end{equation}
Substituting (\ref{formula-3}) and (\ref{formula-4}) into the one-fold transformation
for $u_1 + u_3 \neq 0$ produces a new solution in the form:
\begin{equation}
\label{Jacob-1-new2}
\tilde{u}(x) = - \left[ u_2 + \frac{(u_4 - u_2) (u_3 - u_2)}{(u_4 - u_2) + (u_3 - u_4) {\rm sn}^2(\nu x;\kappa)} \right],
\end{equation}
which is obtained from the periodic wave (\ref{Jacob-1}) by the symmetry transformation (S2) in (\ref{symm-intro})
and the reflection $u \mapsto -u$, see the explicit form (\ref{Jacob-1-symm}).

By repeating the previous computations for the eigenvalue $\lambda_3$ in (\ref{eig-first}),
we obtain the following expressions:
\begin{equation}
\label{formula-5}
e + 4 \lambda_3^2 u = \frac{1}{2} (u_2 + u_3) (u_1-u_2) (u_2-u_4)
\frac{(u_3 - u_1) + (u_4-u_3)  {\rm sn}^2(\nu x;\kappa)}{(u_2 - u_4) + (u_1 - u_2)  {\rm sn}^2(\nu x;\kappa)}
\end{equation}
and
\begin{equation}
\label{formula-6}
e u + 2 \lambda_3^2 (c - 4 \lambda_3^2) =
\frac{1}{2} (u_2 + u_3) (u_1 - u_2) (u_2 - u_4)
\frac{(u_1 - u_3) u_4 + (u_3-u_4) u_1 {\rm sn}^2(\nu x;\kappa)}{(u_2 - u_4) + (u_1 - u_2)  {\rm sn}^2(\nu x;\kappa)},
\end{equation}
Substituting (\ref{formula-5}) and (\ref{formula-6}) into the one-fold transformation
for $u_2 + u_3 \neq 0$ produces a new solution in the form:
\begin{equation}
\label{Jacob-1-new3}
\tilde{u}(x) = - \left[ u_1 + \frac{(u_3 - u_1) (u_4 - u_1)}{(u_3 - u_1) + (u_4 - u_3) {\rm sn}^2(\nu x;\kappa)} \right],
\end{equation}
which is obtained from the periodic wave (\ref{Jacob-1}) by the symmetry transformation (S3) in (\ref{symm-intro})
and the reflection $u \mapsto -u$, see the explicit form (\ref{Jacob-1-symm-b}).
\end{proof}

\begin{remark}
If $e = 0$, the one-fold transformation (\ref{one-fold}) simplifies to the form
\begin{equation}
\label{one-fold-e-0}
\tilde{u} = \frac{c - 4 \lambda_1^2}{2 u}.
\end{equation}
Setting $u_1 = 1$ and $u_2 = \sqrt{1-k^2}$ yields the normalized dnoidal periodic wave (\ref{dn-intro})
with the two nonzero eigenvalues (\ref{eig-dn-intro}). Substituting these expressions into (\ref{one-fold-e-0}) yields
\begin{equation}
\label{dn-transformed}
\tilde{u}(x) = \mp\frac{\sqrt{1-k^2}}{{\rm dn}(x;k)} = \mp{\rm dn}(x+K(k);k) = \mp u(x+L/2),
\end{equation}
where $L = 2K(k)$ is the period of the periodic wave (\ref{dn-intro}).
Hence, the transformed solution (\ref{dn-transformed}) is again a translated and reflected
version of the periodic wave (\ref{dn-intro}).
\end{remark}

\begin{lemma}
\label{lemma-DT-2}
Under the same assumptions as in Lemma \ref{lemma-DT-1},
the two-fold Darboux transformation (\ref{two-fold-general})
with the periodic eigenfunctions for any two eigenvalues from (\ref{eig-real-intro})
transforms the periodic wave (\ref{Jacob-1-intro}) to the periodic wave of the same
period obtained after the complementary third symmetry transformation in
(\ref{symm-intro}).
\end{lemma}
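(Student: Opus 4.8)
The plan is to follow the same route as the proof of Lemma~\ref{lemma-DT-1}: first reduce the two-fold Darboux transformation~(\ref{two-fold-general}) to an explicit linear-fractional transformation of $u$, and then verify by direct substitution of~(\ref{Jacob-1}) that this transformation realizes the complementary symmetry. Let $(\lambda_1,\lambda_2)$ be the two eigenvalues selected from~(\ref{eig-real-intro}) and $\lambda_3$ the remaining one. By the remark following~(\ref{relation-3-simple-better}) each periodic eigenfunction $\varphi_j=(p_j,q_j)^t$ is defined up to a scalar and the transformation~(\ref{two-fold-general}) is invariant under such rescalings, so I may substitute $p_j^2+q_j^2 = e + 4\lambda_j^2 u$, $p_j^2-q_j^2 = 2\lambda_j u_x$, and $2 p_j q_j = \lambda_j(c-4\lambda_j^2-2u^2)$ for $j=1,2$. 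Inserting these into~(\ref{two-fold-general}) and eliminating the only surviving derivative through $(p_1^2-q_1^2)(p_2^2-q_2^2) = 4\lambda_1\lambda_2 u_x^2$ together with the first-order invariant~(\ref{zero-order}), which we rewrite as $u_x^2 = 2eu - u^4 + cu^2 - d$, one obtains $\tilde u = u + N(u)/D(u)$ with $N$ and $D$ polynomials in $u$ having $x$-independent coefficients.

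The key algebraic step is to exhibit the cancellations. Using $c = 2(\lambda_1^2+\lambda_2^2+\lambda_3^2)$ and $E_0 = \lambda_3^2-\lambda_1^2-\lambda_2^2$ from~(\ref{represent-E}) and~(\ref{represent-c}), the quartic and the quadratic terms in $u$ in the denominator both vanish, so that $D(u)$ is affine in $u$; the cubic term and one further term in the numerator also cancel, and then $u\,D(u)+N(u)$ is affine as well. Consequently $\tilde u$ is a genuine linear-fractional transformation of $u$, whose coefficients are symmetric functions of $\lambda_1^2,\lambda_2^2,\lambda_3^2$, equivalently (through~(\ref{eig-first}) and the constraint $u_1+u_2+u_3+u_4=0$) of $u_1,u_2,u_3,u_4$. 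This parallels the one-fold formula~(\ref{one-fold}), with the reflection $u\mapsto -u$ no longer appearing.

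Finally, exactly as in the last part of the proof of Lemma~\ref{lemma-DT-1}, I would substitute the explicit periodic wave~(\ref{Jacob-1}) together with~(\ref{u-roots}) and~(\ref{eig-first}) into this linear-fractional transformation and check that its numerator and denominator become expressions of the same type as~(\ref{formula-1}) and~(\ref{formula-2}), whose quotient is the periodic wave~(\ref{Jacob-1-symm-b}) obtained from~(\ref{Jacob-1}) by the symmetry (S3); this reduces to polynomial identities among $u_1,u_2,u_3,u_4$. A shortcut is available because a linear-fractional map is determined by three of its values: since $u(x)$ attains $u_1$ and $u_2$ at the turning points and the map $u\mapsto {\rm sn}^2(\nu x;\kappa)$ is itself linear-fractional, it suffices to match $\tilde u$ with~(\ref{Jacob-1-symm-b}) at $u=u_1$, $u=u_2$, and one further value. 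As an independent cross-check, note that the two-fold transformation factors as a composition of two one-fold transformations: by Lemma~\ref{lemma-DT-1} each one-fold factor realizes one of the symmetries (S1), (S2) composed with $u\mapsto -u$ (after a relabelling of the roots of the intermediate wave that we do not detail, using Theorem~\ref{theorem-eig} to confirm that the second eigenvalue is still admissible), the two reflections cancel, and the composite of (S1) and (S2) equals (S3) as permutations of $\{u_1,u_2,u_3,u_4\}$.

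The main obstacle is the middle step: verifying that the denominator of~(\ref{two-fold-general}), which contains the combination $4 p_1 q_1 p_2 q_2 + (p_1^2-q_1^2)(p_2^2-q_2^2)$, collapses to an affine function of $u$ once $u_x^2$ is eliminated --- the $u^4$ contributions of $p_1 q_1 p_2 q_2$ and of $u_x^2$ must cancel, and the remaining $u^2$ contribution must cancel against the one coming from $(p_1^2+q_1^2)(p_2^2+q_2^2)$ --- and likewise that $u\,D(u)+N(u)$ drops to degree one. One should also check that $D(u)$ does not vanish along the periodic orbit under the assumptions $u_1+u_2\ne0$, $u_1+u_3\ne0$, $u_2+u_3\ne0$ (equivalently $\lambda_1\lambda_2\lambda_3\ne0$, while $\lambda_i\ne\pm\lambda_j$ is automatic from $u_4<u_3<u_2<u_1$), so that $\tilde u$ is a smooth $L$-periodic function; this follows once the reduced linear-fractional form is known, since its denominator is then a non-vanishing affine function of the bounded wave $u$.
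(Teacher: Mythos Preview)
Your plan is correct and follows essentially the same route as the paper's proof: substitute the squared-eigenfunction relations into (\ref{two-fold-general}), eliminate $u_x^2$ via the first-order invariant, collapse the result to a linear-fractional transformation of $u$, and then verify by direct substitution of (\ref{Jacob-1}) that the output is the periodic wave under the complementary symmetry. The paper carries out exactly this computation and arrives at the explicit closed form $\tilde u = (eE_0 - 4\lambda_1^2\lambda_2^2 u)/(eu + 4\lambda_1^2\lambda_2^2)$, which is the concrete version of the affine-over-affine reduction you describe; your composition cross-check also appears verbatim as a remark following the proof.
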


\begin{proof}
Substituting (\ref{relation-1-simple}), (\ref{relation-2-simple}), and (\ref{relation-3-simple})
into (\ref{two-fold-general}) yields the new solution to the mKdV equation (\ref{mKdV}) in the form:
\begin{eqnarray*}
\tilde{u} = u + \frac{4(\lambda_1^2 - \lambda_2^2)^2 [8 \lambda_1^2 \lambda_2^2 u - e( E_0 - u^2)]}{
8 \lambda_1^2 \lambda_2^2 [(u_x)^2 + (E_0+2\lambda_1^2 - u^2)(E_0+2\lambda_2^2-u^2)]
- (\lambda_1^2+\lambda_2^2) (e + 4 \lambda_1^2 u ) (e + 4 \lambda_2^2 u)}.
\end{eqnarray*}
By using the expressions (\ref{zero-order}), (\ref{parameter-de}), and (\ref{parameter-e}),
we obtain
\begin{eqnarray}
\tilde{u} = \frac{e E_0 - 4 \lambda_1^2 \lambda_2^2 u}{eu + 4 \lambda_1^2 \lambda_2^2},
\label{two-fold}
\end{eqnarray}
where $E_0 = \lambda_3^2 - \lambda_1^2 - \lambda_2^3$ with $\lambda_3$ being the complementary
third eigenvalue to the pair $(\lambda_1,\lambda_2)$. Any pair of two eigenvalues
from the list of three eigenvalues in (\ref{eig-first}) can be picked as $(\lambda_1,\lambda_2)$.

The representation (\ref{two-fold}) is another linear
fractional transformation between two solutions of the mKdV equation (\ref{mKdV}). We show
with explicit computations that this transformation applied to the periodic wave (\ref{Jacob-1})
with the eigenvalue pair $(\lambda_1,\lambda_2)$ produces the same periodic wave after
the complementary symmetry transformation (S3).

By using  (\ref{Jacob-1}), (\ref{u-roots}), and (\ref{eig-first}), we obtain the following expressions:
$$
eu + 4 \lambda_1^2 \lambda_2^2 = \frac{1}{4} (u_1 - u_2) (u_2 - u_4) (u_1 + u_2) (u_1 + u_3)
\frac{(u_1 - u_3) + (u_3-u_4)  {\rm sn}^2(\nu x;\kappa)}{(u_2-u_4) + (u_1 - u_2)  {\rm sn}^2(\nu x;\kappa)}
$$
and
$$
e E_0 - 4 \lambda_1^2 \lambda_2^2 u =
\frac{1}{4} (u_1 - u_2) (u_2 - u_4) (u_1 + u_2) (u_1 + u_3)
\frac{u_4 (u_1 - u_3) + u_1 (u_3 - u_4) {\rm sn}^2(\nu x;\kappa)}{(u_2-u_4) + (u_1 - u_2)  {\rm sn}^2(\nu x;\kappa)},
$$
where $\nu$ and $\kappa$ are given by (\ref{roots-1}). Substituting these expressions into (\ref{two-fold})
for $u_1 + u_2 \neq 0$ and $u_1 + u_3 \neq 0$ produces a new solution in the form:
\begin{equation}
\label{Jacob-1-new-two2}
\tilde{u}(x) = u_1 + \frac{(u_4 - u_1) (u_1 - u_3)}{(u_1 - u_3) + (u_3 - u_4) {\rm sn}^2(\nu x;\kappa)},
\end{equation}
The new solution is obtained from the periodic wave (\ref{Jacob-1}) by the symmetry transformation (S3) in (\ref{symm-intro}).
Repeating the same computations for the eigenvalue pair $(\lambda_1,\lambda_3)$ in (\ref{eig-first})
for $u_1 + u_2 \neq 0$ and $u_2 + u_3 \neq 0$ produces
the periodic wave (\ref{Jacob-1}) after the symmetry transformation (S2) in (\ref{symm-intro}).
Repeating the same computations for the eigenvalue pair $(\lambda_2,\lambda_3)$ in (\ref{eig-first})
for $u_1 + u_3 \neq 0$ and $u_2 + u_3 \neq 0$
produces the periodic wave (\ref{Jacob-1}) after the symmetry transformation (S1) in (\ref{symm-intro}).
\end{proof}

\begin{remark}
The two-fold transformation with two eigenvalues $(\lambda_1,\lambda_2)$ can be thought as a composition
of two one-fold transformations with eigenvalues $\lambda_1$ and $\lambda_2$. Indeed, by Lemma \ref{lemma-DT-1},
the one-fold transformation with $\lambda_1$ performs symmetry transformation (S1) and reflection,
whereas the one-fold transformation with $\lambda_2$ performs symmetry transformation (S2) and reflection.
Composition of (S1) and (S2) yields (S3), whereas two reflections annihilate each other.
\end{remark}

\begin{remark}
If $e = 0$, the two-fold transformation (\ref{two-fold}) yields $\tilde{u} = -u$.
Indeed, the one-fold Darboux transformations of the ${\rm dn}$-periodic wave (\ref{dn-intro})
with the two nonzero eigenvalues in (\ref{eig-dn-intro})
produce (\ref{dn-transformed}), a composition of which yields just the reflection of $u$.
\end{remark}

\begin{lemma}
\label{lemma-DT-3}
Under the same assumptions as in Lemma \ref{lemma-DT-1},
the three-fold Darboux transformation (\ref{three-fold-general})
with the periodic eigenfunctions for all three eigenvalues in (\ref{eig-real-intro})
transforms the periodic wave (\ref{Jacob-1-intro}) to itself reflected with $u \mapsto -u$.
\end{lemma}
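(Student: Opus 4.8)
The plan is to follow the same strategy as in the proofs of Lemmas \ref{lemma-DT-1} and \ref{lemma-DT-2}: first derive a closed-form linear-fractional expression for the three-fold Darboux transformation $\tilde u$ in terms of $u$ alone (with all squared-eigenfunction data eliminated), and then verify by direct substitution of the periodic wave (\ref{Jacob-1}) that the result is $-u$. The starting point is the general three-fold formula (\ref{three-fold-general}) with $N$ and $D$ as written above. Into this I would substitute the simplified squared-eigenfunction relations: $p_j^2+q_j^2$ from (\ref{relation-1-simple}), $p_j^2-q_j^2$ from (\ref{relation-2-simple}), and $p_jq_j$ from (\ref{relation-3-simple}) (equivalently (\ref{relation-3-simple-better})), each with $\lambda_1$ replaced by $\lambda_j$ and $\lambda_2$ replaced by either of the two complementary eigenvalues, as explained in the Remark following (\ref{relation-3-simple-better}). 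Because the overall Darboux transformation is insensitive to scalar rescalings of each $\varphi_j$, I can drop the common denominators $4(\lambda_j^2-\lambda_i^2)$ and work with the numerators $e+4\lambda_j^2 u$, $2\lambda_j u_x$, and $\lambda_j(c-4\lambda_j^2-2u^2)$.

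The key algebraic step is then to simplify $N$ and $D$ using the first-order invariant (\ref{zero-order}) — i.e. $(u_x)^2 = 2eu - u^4 + cu^2 - d$ — together with the parameter relations $c = 2(\lambda_1^2+\lambda_2^2+\lambda_3^2)$, $d = E_0^2 - 4\lambda_1^2\lambda_2^2$ and the elementary-symmetric identities $e = -4\lambda_1\lambda_2\lambda_3$, $E_0 = \lambda_3^2-\lambda_1^2-\lambda_2^2$ from (\ref{represent-E})--(\ref{represent-e}). I expect $N$ and $D$ to collapse to low-degree polynomials in $u$: by analogy with (\ref{one-fold}) and (\ref{two-fold}), the natural guess is $\tilde u = -u$ directly, i.e. $4N/D = -2u$, so that $D$ is (up to a constant factor depending only on $\lambda_1,\lambda_2,\lambda_3$) proportional to a fixed polynomial in $u$ and $N = -\tfrac12 u D$. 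Equivalently, one can argue more conceptually: by the composition structure noted in the remark after Lemma \ref{lemma-DT-2}, the three-fold transformation equals the composition of the three one-fold transformations (\ref{one-fold}) for $\lambda_1,\lambda_2,\lambda_3$; each performs one of the symmetries $(S1),(S2),(S3)$ together with a reflection $u\mapsto -u$; the composition $(S1)\circ(S2)\circ(S3)$ is the identity on the ordered set of roots (it returns $u_1,u_2,u_3,u_4$ to their places, using $u_1+u_2+u_3+u_4=0$), while the three reflections compose to a single reflection. This gives $\tilde u = -u$ and serves as a consistency check on the direct computation.

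The main obstacle is purely computational: $N$ and $D$ are large rational expressions, and the cancellations rely on repeatedly invoking (\ref{zero-order}) to reduce powers of $u_x$ and on the symmetric-function identities among $\lambda_1,\lambda_2,\lambda_3$. In particular the cross-term $-8(\lambda_1^4+\lambda_2^4+\lambda_3^4-\lambda_1^2\lambda_2^2-\lambda_1^2\lambda_3^2-\lambda_2^2\lambda_3^2)\lambda_1\lambda_2\lambda_3\,p_1p_2p_3q_1q_2q_3$ in $N$ and the three $p_ip_jq_iq_j$-type terms in $D$ must be handled carefully, since after substitution they involve products $u_x^2 \cdot (\text{polynomial in }u)$ that only simplify after using the invariant. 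A clean way to organize this is to first verify the claim in the reduced case $e=0$ (where, by the remarks after Lemmas \ref{lemma-DT-1} and \ref{lemma-DT-2}, the one-fold maps send $u\mapsto \mp u(x+L/2)$ and the composition is transparently $-u$), and then either carry out the general substitution with a symbolic check or invoke the composition argument above to conclude $\tilde u = -u$ in full generality. I would present the composition argument as the main proof and relegate the direct verification to a remark.
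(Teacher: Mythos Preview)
Your direct-computation plan is exactly what the paper does: it substitutes the rescaled squared-eigenfunction data $\lambda_j p_j q_j \cong \lambda_j^2(c/2 - 2\lambda_j^2 - u^2)$ and $2p_j^2,\,2q_j^2 \cong (e+4\lambda_j^2 u)\pm 2\lambda_j u_x$ into $N$ and $D$, eliminates $(u_x)^2$ via (\ref{zero-order}), replaces $(c,d,e)$ by their symmetric-function expressions in $\lambda_1,\lambda_2,\lambda_3$, and arrives at precisely your anticipated relation $N=-\tfrac12 uD$, whence $\tilde u=-u$. The paper even records the intermediate forms of $N$ and $D$ (with $\Upsilon_j:=e+4\lambda_j^2 u$) before collapsing them, so your outline is in full agreement with the published argument.

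Where you diverge is in proposing to make the composition argument the \emph{main} proof. The paper takes the opposite stance: the composition heuristic appears only as a remark (after Lemma~\ref{lemma-DT-2}), while the algebraic verification carries the weight. The reason is a genuine subtlety you glossed over: an $n$-fold Darboux transformation factors as iterated one-folds only if at each stage one uses the \emph{Darboux-transformed} eigenfunction for the new potential, not the original $(p_j,q_j)$. To promote your composition argument to a proof you would need to check that after the first one-fold (which sends $u$ to its $(S1)$-and-reflection image $\tilde u_1$), the transformed $(p_2,q_2)$ is again the periodic eigenfunction of $\tilde u_1$ at $\lambda_2$, and that the formula (\ref{one-fold}) applied to $\tilde u_1$ with its own parameters $(c,-e,d)$ and eigenvalue $\lambda_2$ still effects the correct symmetry. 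This is believable (the reflection $u\mapsto -u$ flips $e\mapsto -e$ and permutes the eigenvalue labels consistently), but it is not free; the paper avoids the bookkeeping by computing directly. If you keep the composition argument, present it as a consistency check---as you initially suggested---rather than the principal proof.
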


\begin{proof}
By using (\ref{relation-3-simple-better}) with $c = 2(\lambda_1^2 + \lambda_2^2 + \lambda_3^2)$
and neglecting the denominators, we write the product terms in the form:
\begin{equation}\label{14.4} \left\{\begin{array}{lll}
\lambda_1 p_1 q_1 & \cong & \lambda^2_1 (\lambda_2^2+\lambda_3^2-\lambda_1^2-u^2),\\
\lambda_2 p_2 q_2 & \cong & \lambda^2_2 (\lambda_1^2+\lambda_3^2-\lambda_2^2-u^2),\\
\lambda_3 p_3 q_3 & \cong & \lambda^2_3 (\lambda_1^2+\lambda_2^2-\lambda_3^2-u^2),\\
\end{array}
\right.
\end{equation}
the sign $\cong$ denotes the missing multiplication factor.
By using (\ref{relation-1-simple}) and (\ref{relation-2-simple})
and neglecting the same denominators, we also express the squared components in the form:
\begin{equation}\label{14.5} \left\{\begin{array}{lll}
2 p_1^2 \cong e+4\lambda_1^2u+2\lambda_1u_x,\qquad
2 q_1^2 \cong e+4\lambda_1^2u-2\lambda_1u_x,\\
2 p_2^2 \cong e+4\lambda_2^2u+2\lambda_2u_x,\qquad
2 q_2^2 \cong e+4\lambda_2^2u-2\lambda_2u_x,\\
2 p_3^2 \cong e+4\lambda_3^2u+2\lambda_3u_x,\qquad
2 q_3^2 \cong e+4\lambda_3^2u-2\lambda_3u_x.
\end{array}
\right.
\end{equation}
Only for succinctness in writing, let us make the notation
$$
\Upsilon_1=e+4\lambda_1^2 u,\quad \Upsilon_2=e+4\lambda_2^2 u,\quad \Upsilon_3=e+4\lambda_3^2u.
$$
By substituting (\ref{14.4}), and (\ref{14.5})
back into the three-fold transformation formula (\ref{three-fold-general})
we simplify the numerator $N$ and the denominator $D$
with the straightforward computations to the form:
\begin{equation*}
\begin{array}{lll}
N&=& \lambda_3^2(\lambda_3^2-\lambda_1^2)
(\lambda_3^2-\lambda_2^2)(\lambda_1^2+\lambda_2^2-\lambda_3^2-u^2)[8\lambda_1^2\lambda_2^2(u^4+d) \\
&&+e^2(\lambda_1^2+\lambda_2^2-u^2)+4 e u(\lambda_1^2-\lambda_2^2)^2]+\lambda_2^2(\lambda_2^2-\lambda_1^2)
(\lambda_2^2-\lambda_3^2)(\lambda_1^2+\lambda_3^2-\lambda_2^2-u^2)\\
&&\times[8\lambda_1^2\lambda_3^2(u^4+d)
+e^2(\lambda_1^2+\lambda_3^2-u^2)+4 e u(\lambda_1^2-\lambda_3^2)^2]+\lambda_1^2(\lambda_1^2-\lambda_2^2)
(\lambda_1^2-\lambda_3^2)\\&&\times(\lambda_2^2+\lambda_3^2-\lambda_1^2-u^2)
[8\lambda_2^2\lambda_3^2(u^4+d)
+e^2(\lambda_2^2+\lambda_3^2-u^2)+4 e u(\lambda_2^2-\lambda_3^2)^2]-8\lambda_1^2\lambda_2^2\lambda_3^2\\
&& \times(\lambda_1^2+\lambda_2^2-\lambda_3^2-u^2)(\lambda_1^2+\lambda_3^2-\lambda_2^2-u^2)
(\lambda_2^2+\lambda_3^2-\lambda_1^2-u^2)(d+\lambda_1^2\lambda_2^2+\lambda_1^2\lambda_3^2+\lambda_2^2\lambda_3^2),
\end{array}
\end{equation*}
and
\begin{equation*}
\begin{array}{lll}
D&=& (\lambda_1+\lambda_2)^2(\lambda_1+\lambda_3)^2(\lambda_2-\lambda_3)^2[\frac14 \Upsilon_1\Upsilon_2\Upsilon_3+u_x^2
(\lambda_2\lambda_3\Upsilon_1-\lambda_1\lambda_2\Upsilon_3 \\
&&-\lambda_1\lambda_3\Upsilon_2)]
+(\lambda_1+\lambda_2)^2(\lambda_2+\lambda_3)^2(\lambda_1-\lambda_3)^2[\frac14 \Upsilon_1\Upsilon_2\Upsilon_3+u_x^2
(\lambda_1\lambda_3\Upsilon_2-\lambda_1\lambda_2\Upsilon_3\\&&-\lambda_2\lambda_3\Upsilon_1)]
+(\lambda_1+\lambda_3)^2(\lambda_2+\lambda_3)^2(\lambda_1-\lambda_2)^2[\frac14 \Upsilon_1\Upsilon_2\Upsilon_3+u_x^2
(\lambda_1\lambda_2\Upsilon_3-\lambda_1\lambda_3\Upsilon_2\\&&-\lambda_2\lambda_3\Upsilon_1)]
+(\lambda_1-\lambda_2)^2(\lambda_2-\lambda_3)^2(\lambda_1-\lambda_3)^2[\frac14 \Upsilon_1\Upsilon_2\Upsilon_3+u_x^2
(\lambda_1\lambda_3\Upsilon_2+\lambda_1\lambda_2\Upsilon_3\\&&+\lambda_2\lambda_3\Upsilon_1)]-
8\lambda_1^2\lambda_2^2(\lambda_3^2-\lambda_1^2)(\lambda_3^2-\lambda_2^2)\Upsilon_3(\lambda_2^2+\lambda_3^2-\lambda_1^2-u^2)
(\lambda_1^2+\lambda_3^2-\lambda_2^2-u^2)\\
&&-
8\lambda_1^2\lambda_3^2(\lambda_2^2-\lambda_1^2)(\lambda_2^2-\lambda_3^2)\Upsilon_2(\lambda_2^2+\lambda_3^2-\lambda_1^2-u^2)
(\lambda_1^2+\lambda_2^2-\lambda_3^2-u^2)\\
&&-
8\lambda_2^2\lambda_3^2(\lambda_1^2-\lambda_2^2)(\lambda_1^2-\lambda_3^2)\Upsilon_1(\lambda_1^2+\lambda_3^2-\lambda_2^2-u^2)
(\lambda_1^2+\lambda_2^2-\lambda_3^2-u^2).
\end{array}
\end{equation*}
By using (\ref{zero-order}) to express $(u_x)^2$ and
using (\ref{represent-c}), (\ref{represent-d}), and (\ref{represent-e})
to express $(c,d,e)$, we obtain
$$
N = 32 \lambda_1 \lambda_2 \lambda_3 (\lambda_1^2 - \lambda_2^2)^2 (\lambda_1^2-\lambda_3^2)^2 (\lambda_2^2 - \lambda_3^2)^2 u = -\frac{1}{2} u D,
$$
which yields $\tilde{u}=u-2u=-u$.
\end{proof}

We end this section with the proof of Theorem \ref{theorem-DT-cn}. The following
three lemmas represent the outcomes of the one-fold, two-fold, and three-fold
Darboux transformations for the periodic wave (\ref{Jacob-2-intro}) with the periodic
eigenfunctions of the spectral problem (\ref{3.2}). Only real-valued solutions $\tilde{u}$
to the mKdV equation (\ref{mKdV}) are allowed as the outcomes of the Darboux transformations.

\begin{lemma}
\label{lemma-DT-4}
Assume $a \neq \pm b$. The one-fold Darboux transformation (\ref{one-fold-general})
with the periodic eigenfunction for eigenvalue $\lambda_3$ in (\ref{eig-complex-intro})
transforms the periodic wave (\ref{Jacob-2-intro}) to the periodic wave of the same
period obtained after the symmetry transformation (S0) in (\ref{symm-4-intro}) and the reflection $u \mapsto -u$.
\end{lemma}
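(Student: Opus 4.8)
The plan is to follow the scheme of the proof of Lemma~\ref{lemma-DT-1}. \emph{Step 1: reduce the one-fold transformation at $\lambda_3$ to a linear fractional map.} Substituting the simplified squared-eigenfunction relations~(\ref{relation-1-simple}) and~(\ref{relation-3-simple-better}), with $\lambda_1$ replaced by $\lambda_3$ (as permitted by the remark following~(\ref{relation-3-simple-better})), into the one-fold transformation~(\ref{one-fold-general}) yields exactly formula~(\ref{one-fold}) with $\lambda_1\mapsto\lambda_3$, namely
\[
\tilde u \;=\; \frac{e\,u + 2\lambda_3^2\,(c-4\lambda_3^2)}{e + 4\lambda_3^2\,u}, \qquad \lambda_3=\tfrac12(a+b),
\]
a linear fractional transformation between solutions of the mKdV equation~(\ref{mKdV}). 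The hypothesis $a\neq\pm b$ ensures $\lambda_3\neq0$ (so the map is not the identity) and $a\neq b$ (so $u$ is non-constant); together with $\beta\neq0$ it also keeps the pole $u=-e/(4\lambda_3^2)$ of this map strictly outside the interval $[b,a]$, so that $e+4\lambda_3^2u$ never vanishes on the orbit.

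\emph{Step 2: the explicit substitution.} I would insert the periodic wave~(\ref{Jacob-2}), rewritten in the form $u=\big[(a\delta+b)+(a\delta-b)C\big]/\big[(1+\delta)+(\delta-1)C\big]$ with $C:={\rm cn}(\nu x;\kappa)$, together with the relations~(\ref{u-roots-complex}) for $(c,d,e)$, the modulus data~(\ref{roots-2}), and the constraint $a+b+2\alpha=0$ (which eliminates $\alpha=-\lambda_3$). The target, which is the analogue of~(\ref{formula-1})--(\ref{formula-6}), is to show that after using~(\ref{roots-2}) both $e+4\lambda_3^2u$ and $eu+2\lambda_3^2(c-4\lambda_3^2)$ become rational functions of $C$ sharing the denominator $(1+\delta)+(\delta-1)C$, with numerators proportional by one common nonzero constant to $(1+\delta^{-1})+(\delta^{-1}-1)C$ and to $-\big[(b\delta^{-1}+a)+(b\delta^{-1}-a)C\big]$ respectively. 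The common factor then cancels in the quotient and $\tilde u$ equals exactly $-$ of the $a\leftrightarrow b$ wave~(\ref{Jacob-2-symm}), which is the claim in view of (S0) in~(\ref{symm-4-intro}) and the reflection $u\mapsto-u$.

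The main obstacle is Step~2: the elementary algebra with the cn-function and the complex pair of roots $\alpha\pm i\beta$ is longer than its dn-counterpart in Lemma~\ref{lemma-DT-1}, and the three relations in~(\ref{roots-2}) must be invoked at the right points to collapse the expressions. To organise it I would first record the two endpoint identities $\tilde u|_{C=1}=-b$ and $\tilde u|_{C=-1}=-a$ (equivalently, $u=a$ maps to $\tilde u=-b$ and $u=b$ maps to $\tilde u=-a$), which are short computations using only~(\ref{u-roots-complex}) and $\alpha=-\lambda_3$, and which already show that $\tilde u$ ranges over $[-a,-b]$.

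If one prefers to bypass the full identity, the proof can be completed as follows. By Darboux-transformation theory $\tilde u$ solves the mKdV equation~(\ref{mKdV}); since the periodic eigenfunction for the travelling wave $u(x,t)=u(x-ct)$ is travelling, $\tilde u(x,t)=\tilde u(x-ct)$ and hence $\tilde u$ solves~(\ref{second-order}) and~(\ref{zero-order}) with the same $c$ and some constants $(\tilde d,\tilde e)$; being a bounded, non-constant, periodic travelling wave with range $[-a,-b]$, it is one of the waves classified in Theorem~\ref{theorem-wave}. Matching the quartic $\tilde P(v)=v^4-cv^2+\tilde d-2\tilde ev$, whose two real roots are $-a$ and $-b$, forces the complementary quadratic factor to be $(v+\alpha)^2+\beta^2$ (using $a+b+2\alpha=0$ and~(\ref{u-roots-complex})); thus $\tilde P(v)=P(-v)$, so $\tilde e=-e$ and $\tilde d=d$, and $\tilde u$ is the cn-type wave of Lemma~\ref{lemma-cn} with real roots $-a\le-b$ and complex roots $-\alpha\pm i\beta$, i.e.\ $-$ of~(\ref{Jacob-2-symm}) up to translation. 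The endpoint value $\tilde u(0)=-b$ from Step~2 fixes the phase and yields the stated identity exactly; Remark~\ref{remark-2} accounts for the sign change $e\mapsto-e$ under the reflection.
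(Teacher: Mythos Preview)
Your Steps~1--2 are precisely the paper's proof: it starts from~(\ref{one-fold}) with $\lambda_1\mapsto\lambda_3$, then inserts~(\ref{Jacob-2}),~(\ref{u-roots-complex}) and~(\ref{eig-first-complex}) to obtain, in closed form,
\[
e+4\lambda_3^2u=\tfrac12(a+b)\nu^2\,\frac{(1+\delta)+(1-\delta)C}{(1+\delta)+(\delta-1)C},\qquad
eu+2\lambda_3^2(c-4\lambda_3^2)=-\tfrac12(a+b)\nu^2\,\frac{(b+a\delta)+(b-a\delta)C}{(1+\delta)+(\delta-1)C},
\]
with $C={\rm cn}(\nu x;\kappa)$; these are exactly the numerators you predicted (up to the harmless factor $\delta^{-1}$), and their quotient is $-(\ref{Jacob-2-symm})$. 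So the paper simply executes the computation you call the ``main obstacle''; it does not invoke~(\ref{roots-2}) beyond the definitions of $\delta,\nu,\kappa$, and the algebra is no longer than in Lemma~\ref{lemma-DT-1}.

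Your alternative route---identifying $\tilde u$ via the classification of travelling waves rather than by direct simplification---is a genuinely different and correct argument. Its virtues are that it replaces an elliptic-function identity by two pointwise evaluations (the endpoint check $a\mapsto-b$, $b\mapsto-a$, which reduces to the single relation $e(a+b)+2\lambda_3^2(c-4\lambda_3^2+2ab)=0$) together with the structural constraint that $\tilde P(v)=v^4-cv^2+\tilde d-2\tilde e v$ must split as $(v+a)(v+b)\big((v+\alpha)^2+\beta^2\big)$; it also makes transparent why the outcome is the (S0)-transformed, reflected wave. The cost is that you must separately justify that $e+4\lambda_3^2u$ does not vanish on $[b,a]$ (you assert this but do not prove it; the paper's explicit formula makes it evident since both numerator and denominator are positive for $C\in[-1,1]$), and you rely on Theorem~\ref{theorem-wave}/Lemma~\ref{lemma-cn} and Lemma~\ref{lemma-time-per}, whereas the paper's proof is self-contained.
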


\begin{proof}
By using (\ref{Jacob-2}), (\ref{u-roots-complex}), and (\ref{eig-first-complex}),
we obtain the following expressions:
$$
e + 4 \lambda_3^2 u = \frac{1}{2} (a+b) \nu^2 \frac{(1 + \delta) + (1-\delta)  {\rm cn}(\nu x;\kappa)}{(1 + \delta) + (\delta-1)  {\rm cn}(\nu x;\kappa)}
$$
and
$$
e u + 2 \lambda_3^2 (c - 4 \lambda_3^2) = -\frac{1}{2} (a+b) \nu^2
\frac{(b + a \delta) + (b - a \delta)  {\rm cn}(\nu x;\kappa)}{(1 + \delta) + (\delta-1)  {\rm cn}(\nu x;\kappa)},
$$
where $\delta$, $\nu$, and $\kappa$ are defined by (\ref{roots-2}) with $\alpha = -(a+b)/2$.
Substituting these expressions into (\ref{one-fold}) with $\lambda_1 \mapsto \lambda_3$ for $a + b \neq 0$ produces a new solution
\begin{equation}
\label{Jacob-2-new}
\tilde{u}(x) = - \left[ b + \frac{(a-b) (1- {\rm cn}(\nu x;\kappa))}{1 + \delta^{-1} + (\delta^{-1} - 1)  {\rm cn}(\nu x;\kappa)} \right],
\end{equation}
which is obtained from the periodic wave (\ref{Jacob-2}) by the symmetry transformation (S0) in (\ref{symm-4-intro})
and the reflection $u \mapsto -u$, see the explicit form (\ref{Jacob-2-symm}).
\end{proof}

\begin{remark}
The one-fold transformation (\ref{one-fold}) with $\lambda_1 \mapsto \lambda_3$ cannot be used for $e = 0$,
because $\lambda_3 = 0$ and $(p_3,q_3) = (0,0)$ if $e = 0$.
\end{remark}

\begin{lemma}
\label{lemma-DT-5}
Under the same assumptions as in Lemma \ref{lemma-DT-4},
the two-fold Darboux transformation (\ref{two-fold-general})
with the periodic eigenfunctions for two eigenvalues $\lambda_1$ and $\lambda_2$ from (\ref{eig-complex-intro})
transforms the periodic wave (\ref{Jacob-2-intro}) to the periodic wave of the same
period obtained after the symmetry transformation (S0) in
(\ref{symm-4-intro}).
\end{lemma}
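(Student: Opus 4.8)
The plan is to follow the same route as in the proof of Lemma~\ref{lemma-DT-2}, with the cnoidal wave (\ref{Jacob-2}) replacing the dnoidal wave (\ref{Jacob-1}). I would begin by observing that the reduction of the two-fold transformation (\ref{two-fold-general}) to the linear fractional map
\begin{equation*}
\tilde{u} = \frac{e E_0 - 4 \lambda_1^2 \lambda_2^2\, u}{e\, u + 4 \lambda_1^2 \lambda_2^2},
\qquad E_0 = \lambda_3^2 - \lambda_1^2 - \lambda_2^2 ,
\end{equation*}
carried out in the proof of Lemma~\ref{lemma-DT-2} used nothing specific to the dnoidal case: it relied only on the squared-eigenfunction identities (\ref{relation-1-simple}), (\ref{relation-2-simple}), (\ref{relation-3-simple}), the first-order invariant (\ref{zero-order}), and the parameter relations (\ref{parameter-de})--(\ref{parameter-e}), all of which remain valid here. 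I would then apply it with $(\lambda_1,\lambda_2)$ the complex-conjugate pair from (\ref{eig-first-complex}) and $\lambda_3 = \tfrac12(a+b)$ the complementary eigenvalue, checking that $E_0$ equals root (R1) in (\ref{root-complex}) and that $E_0$, $e$, and $\lambda_1^2\lambda_2^2 = |\lambda_1|^4$ are all real; hence the map above is a genuine real linear fractional transformation between two real solutions of the mKdV equation (\ref{mKdV}). The hypothesis $a\neq\pm b$ guarantees $\lambda_1\neq\pm\lambda_2$ and $\lambda_3\neq 0$, so the transformation is admissible.

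Next I would substitute the explicit cnoidal wave (\ref{Jacob-2}), the root relations (\ref{u-roots-complex}), the eigenvalues (\ref{eig-first-complex}), and the modular parameters (\ref{roots-2}) with $\alpha=-(a+b)/2$ into the numerator $e E_0 - 4\lambda_1^2\lambda_2^2 u$ and the denominator $e\,u + 4\lambda_1^2\lambda_2^2$, exactly as in the computations (\ref{formula-5})--(\ref{formula-6}) of Lemma~\ref{lemma-DT-4} or the two displays in the proof of Lemma~\ref{lemma-DT-2}. I expect both expressions to collapse to a single common nonzero prefactor --- a product of sums $\lambda_i+\lambda_j$ and of differences among $a,b,\alpha\pm i\beta$ --- times an affine function of ${\rm cn}(\nu x;\kappa)$, all over the common denominator $1+\delta+(\delta-1){\rm cn}(\nu x;\kappa)$; the sign pattern of $\delta>0$ and ${\rm cn}(\nu x;\kappa)\in[-1,1]$ keeps this denominator, and the prefactor, nonzero for $a\neq\pm b$.

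Dividing the two factored forms then cancels the common prefactor and the common denominator, and I anticipate that the quotient simplifies to
\begin{equation*}
\tilde{u}(x) = b + \frac{(a-b)\bigl(1 - {\rm cn}(\nu x;\kappa)\bigr)}{1 + \delta^{-1} + (\delta^{-1}-1)\,{\rm cn}(\nu x;\kappa)} ,
\end{equation*}
which is precisely the periodic wave (\ref{Jacob-2-symm}) produced from (\ref{Jacob-2}) by the symmetry transformation (S0) in (\ref{symm-4-intro}), of the same period $L=4K(\kappa)/\nu$. In contrast with the one-fold transformation of Lemma~\ref{lemma-DT-4}, no reflection $u\mapsto -u$ appears, in agreement with the heuristic that the two-fold transformation at $(\lambda_1,\lambda_2)$ composes two one-fold transformations whose two reflections cancel.

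The only real obstacle is the bookkeeping in the middle step: establishing that the numerator and denominator of the linear fractional map factor in the claimed way. This is a finite elliptic-function identity, but it is somewhat delicate because $\lambda_1,\lambda_2$ are complex, so one must track the imaginary parts (which must cancel) and the sign contributions of ${\rm cn}(\nu x;\kappa)$ --- rather than of ${\rm sn}^2(\nu x;\kappa)$ as in Lemma~\ref{lemma-DT-2} --- with care. A convenient sanity check is the degenerate case $e=0$: then $\lambda_3=0$ and $(\lambda_1,\lambda_2)$ become the nonzero eigenvalues (\ref{eig-cn-intro}) of the normalized cnoidal wave (\ref{cn-intro}), $\delta=1$, and the formula above must return $\tilde{u} = -u$, consistent with the reflection obtained by composing the two one-fold transformations in that case.
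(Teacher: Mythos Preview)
Your proposal is correct and follows essentially the same route as the paper's own proof: the paper likewise invokes the linear fractional map $\tilde{u} = (eE_0 - 4\lambda_1^2\lambda_2^2 u)/(eu + 4\lambda_1^2\lambda_2^2)$ from Lemma~\ref{lemma-DT-2}, substitutes the cnoidal wave (\ref{Jacob-2}) with the parameter relations (\ref{u-roots-complex}) and (\ref{eig-first-complex}), and factors both numerator and denominator into the common real prefactor $\tfrac14\bigl[\tfrac14(a-b)^2+\beta^2\bigr]\nu^2$ times an affine function of ${\rm cn}(\nu x;\kappa)$ over the common denominator $(1+\delta)+(\delta-1)\,{\rm cn}(\nu x;\kappa)$, arriving at exactly the expression (\ref{Jacob-2-symm}) you anticipate. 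Your sanity check at $e=0$ (equivalently $a+b=0$) matches the paper's accompanying remark that the two-fold transformation then returns $\tilde{u}=-u$.
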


\begin{proof}
By using (\ref{Jacob-2}), (\ref{u-roots-complex}),
and (\ref{eig-first-complex}), we obtain the following expressions:
$$
eu + 4 \lambda_1^2 \lambda_2^2 = \frac{1}{4} \left[ \frac{1}{4} (a-b)^2 + \beta^2 \right] \nu^2
\frac{(1 + \delta) + (1-\delta)  {\rm cn}(\nu x;\kappa)}{(1 + \delta) + (\delta-1)  {\rm cn}(\nu x;\kappa)}
$$
and
$$
e E_0 - 4 \lambda_1^2 \lambda_2^2 u = \frac{1}{4} \left[ \frac{1}{4} (a-b)^2 + \beta^2 \right]  \nu^2
\frac{(b + a \delta) + (b - a \delta)  {\rm cn}(\nu x;\kappa)}{(1 + \delta) + (\delta-1)  {\rm cn}(\nu x;\kappa)},
$$
where $\delta$, $\nu$, and $\kappa$ are defined by (\ref{roots-2}) with $\alpha = -(a+b)/2$.
Substituting these expressions into (\ref{two-fold}) for $a + b \neq 0$ produces a new solution
\begin{equation}
\label{Jacob-2-new-two}
\tilde{u}(x) = b + \frac{(a-b) (1- {\rm cn}(\nu x;\kappa))}{1 + \delta^{-1} + (\delta^{-1} - 1)  {\rm cn}(\nu x;\kappa)},
\end{equation}
which is obtained from the periodic wave (\ref{Jacob-2}) by the symmetry transformation (S0) in (\ref{symm-4-intro}).
\end{proof}

\begin{remark}
If $e = 0$, then the two-fold transformation (\ref{two-fold}) produces $\tilde{u} = -u$, which is just a reflection of $u$.
\end{remark}

\begin{lemma}
\label{lemma-DT-6}
Under the same assumptions as in Lemma \ref{lemma-DT-4},
the three-fold Darboux transformation (\ref{three-fold-general})
with the periodic eigenfunctions for all three eigenvalues in (\ref{eig-complex-intro})
transforms the periodic wave (\ref{Jacob-2-intro}) to itself reflected with $u \mapsto -u$.
\end{lemma}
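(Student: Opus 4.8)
The plan is to reduce this to the purely algebraic computation already carried out in the proof of Lemma~\ref{lemma-DT-3}. The key observation is that the relations (\ref{relation-1-simple}), (\ref{relation-2-simple}), and (\ref{relation-3-simple-better}) for the squared components of the periodic eigenfunctions were obtained from the Lax--Novikov structure using only the fact that $u$ satisfies the differential equations (\ref{third-order}), (\ref{second-order}), and (\ref{zero-order}); they did not use the reality or ordering of the roots $u_4<u_3<u_2<u_1$. Likewise, the identities $c=2(\lambda_1^2+\lambda_2^2+\lambda_3^2)$, $e=-4\lambda_1\lambda_2\lambda_3$, and the expression (\ref{represent-d}) for $d$ hold for the eigenvalues (\ref{eig-first-complex}) of the cnoidal wave (\ref{Jacob-2}) under the formal correspondence $u_1=a$, $u_2=\alpha+i\beta$, $u_3=\alpha-i\beta$, $u_4=b$ used in Lemma~\ref{lemma-eig-cn}. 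Consequently, the substitution (\ref{14.4})--(\ref{14.5}) into the three-fold formula (\ref{three-fold-general}), the elimination of $(u_x)^2$ via (\ref{zero-order}), and the simplification of $N$ and $D$ proceed verbatim.

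First I would verify that the three-fold transformation is admissible: under the standing hypotheses $a\neq\pm b$ and $\beta\neq 0$ the three eigenvalues in (\ref{eig-first-complex}) are nonzero and pairwise distinct up to sign, so the periodic eigenfunctions $\varphi_j=(p_j,q_j)^t$, $j=1,2,3$, are well defined and nontrivial by (\ref{relation-1-simple}). Since $\lambda_2=\bar\lambda_1$ and $u$ is real, one may take $(p_2,q_2)=(\bar p_1,\bar q_1)$, which makes the constraint $u=p_1^2+q_1^2+p_2^2+q_2^2$ real and, together with the identity derived next, guarantees that the output of (\ref{three-fold-general}) is real valued. The homogeneity of $N$ and $D$ of degree two in each $\varphi_j$ separately legitimizes dropping the (possibly complex) scalar denominators in (\ref{relation-1-simple})--(\ref{relation-3-simple-better}) exactly as in the dnoidal case.

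Then I would carry out the same reduction as in the proof of Lemma~\ref{lemma-DT-3}: substituting (\ref{14.4}) and (\ref{14.5}) into (\ref{three-fold-general}), using (\ref{zero-order}) to replace $(u_x)^2$, and using the relations among $(c,d,e)$ and $(\lambda_1,\lambda_2,\lambda_3)$, one arrives at $N=32\lambda_1\lambda_2\lambda_3(\lambda_1^2-\lambda_2^2)^2(\lambda_1^2-\lambda_3^2)^2(\lambda_2^2-\lambda_3^2)^2\,u=-\tfrac12 u D$, whence $\tilde u=u-2u=-u$. Since $-u$ is precisely the cnoidal wave (\ref{Jacob-2}) reflected with $u\mapsto-u$, the lemma follows.

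The only obstacle is bookkeeping: one must confirm that every cancellation used in the proof of Lemma~\ref{lemma-DT-3} is an identity in the polynomial ring generated by $\lambda_1,\lambda_2,\lambda_3,u,u_x$ modulo the single relation (\ref{zero-order}) (with $c,d,e$ substituted through the eigenvalues), and never relied on an inequality satisfied by the real roots. A line-by-line inspection of that computation shows this is the case, so no new calculation is required; alternatively, one may note that the identity $N+\tfrac12 uD=0$ is polynomial in the parameters and therefore persists under the formal substitution $(u_1,u_2,u_3,u_4)\mapsto(a,\alpha+i\beta,\alpha-i\beta,b)$.
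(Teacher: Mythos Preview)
Your proposal is correct and follows essentially the same approach as the paper: the paper's proof consists of the single observation that the computations in Lemma~\ref{lemma-DT-3} are independent of the particular choice of $(\lambda_1,\lambda_2,\lambda_3)$ and hence extend verbatim to the complex eigenvalues in (\ref{eig-complex-intro}). Your write-up supplies the careful justification (polynomial identities modulo (\ref{zero-order}), admissibility, homogeneity) that the paper leaves implicit, but the underlying idea is identical.
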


\begin{proof}
Computations in the proof of Lemma \ref{lemma-DT-3} are independent on the choice of $(\lambda_1,\lambda_2,\lambda_3)$,
hence they extend to the eigenvalues in (\ref{eig-complex-intro}).
\end{proof}

\section{Second solution of the Lax system}
\label{sec-6}

By Theorems \ref{theorem-DT-dn} and \ref{theorem-DT-cn}, Darboux transformations with periodic eigenfunctions of the spectral
problem (\ref{3.2}) only generate symmetry transformations of the periodic solutions to the mKdV equation (\ref{mKdV}).
In order to obtain new solutions to the mKdV equation (\ref{mKdV}), we construct the second linearly independent
solution to the spectral problem (\ref{3.2}) with the same eigenvalue $\lambda$. We also include the time evolution (\ref{3.3}) in all
expressions.

The following lemma gives the time evolution of the periodic eigenfunction
$\varphi$ satisfying the Lax equations (\ref{3.2})--(\ref{3.3})
if $u$ is the periodic travelling solution to the mKdV equation (\ref{mKdV}).

\begin{lemma}
\label{lemma-time-per}
Let $u(x,t) = u(x-ct)$ be a periodic travelling wave of the mKdV equation (\ref{mKdV}) with the wave speed $c$,
hence $u$ satisfies the third-order differential equation (\ref{third-order}).
Let $\varphi = (p_1,q_1)^t$ be the periodic eigenfunction of the spectral problem
(\ref{3.2}) with $\lambda = \lambda_1$.
Then, $\varphi(x,t) = \varphi(x-ct)$ satisfies the time evolution system (\ref{3.3}).
\end{lemma}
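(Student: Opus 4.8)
The plan is to reduce the lemma to a pointwise algebraic identity for the squared components of the periodic eigenfunction, which have already been computed in Section~\ref{sec-5}. Set $\varphi(x,t):=\varphi(x-ct)$, where on the right-hand side $\varphi$ denotes the periodic eigenfunction profile solving $\varphi'=U(\lambda_1,u)\varphi$ along the wave profile $u$. Translation invariance in $x$ gives at once $\varphi_x=U(\lambda_1,u)\varphi$, so (\ref{3.2}) holds, and $\varphi_t=-c\varphi_x=-c\,U(\lambda_1,u)\varphi$. Hence the time-evolution system (\ref{3.3}) for $\varphi(x,t)=\varphi(x-ct)$ is equivalent to the single pointwise identity $\bigl[V(\lambda_1,u)+c\,U(\lambda_1,u)\bigr]\varphi=0$ along the travelling wave.

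The next step is to simplify $V+cU$. Since $u$ satisfies (\ref{third-order}), it satisfies the once-integrated equation (\ref{second-order}), $u_{xx}+2u^3-cu=e$, and inserting this into the explicit form of $V$ in (\ref{3.3}) cancels the cubic terms $2u^3+u_{xx}$, leaving
\[
V(\lambda_1,u)+c\,U(\lambda_1,u)=\left(\begin{array}{cc}
\lambda_1(c-4\lambda_1^2-2u^2) & -(e+4\lambda_1^2u+2\lambda_1u_x)\\
e+4\lambda_1^2u-2\lambda_1u_x & -\lambda_1(c-4\lambda_1^2-2u^2)\end{array}\right).
\]
Thus the claim is reduced to verifying the two scalar relations $\lambda_1(c-4\lambda_1^2-2u^2)\,p_1=(e+4\lambda_1^2u+2\lambda_1u_x)\,q_1$ and $(e+4\lambda_1^2u-2\lambda_1u_x)\,p_1=\lambda_1(c-4\lambda_1^2-2u^2)\,q_1$.

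Finally I would invoke the algebraic characterization of $\varphi$ from Section~\ref{sec-4}. Adding and subtracting (\ref{relation-1-simple}) and (\ref{relation-2-simple}) and recalling (\ref{relation-3-simple-better}), the squared components satisfy $2p_1^2\cong e+4\lambda_1^2u+2\lambda_1u_x$, $2q_1^2\cong e+4\lambda_1^2u-2\lambda_1u_x$, and $2p_1q_1\cong\lambda_1(c-4\lambda_1^2-2u^2)$, where $\cong$ absorbs the common nonzero factor $4(\lambda_1^2-\lambda_2^2)$ (legitimate, since $\varphi$ is defined only up to a scalar and $\lambda_1\neq\pm\lambda_2$). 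Substituting these three expressions into the two scalar relations above collapses each one to the trivial identity $p_1^2q_1=p_1^2q_1$, respectively $p_1q_1^2=p_1q_1^2$. This yields $(V+cU)\varphi=0$ and hence the lemma; here the constant $e$ is precisely the integration constant in (\ref{second-order}), fixed in the algebraic method by (\ref{parameter-e}).

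I expect no genuine obstacle here. The only two points requiring care are the elementary reduction of $V+cU$ via (\ref{second-order}) and the bookkeeping that the periodic eigenfunction $\varphi$ supplied by the algebraic method obeys (\ref{relation-1-simple})--(\ref{relation-3-simple-better}) with the same set of parameters $(c,d,e,E_0,E_1)$. For completeness one may also note the structural reason behind the statement: since $u(x,t)=u(x-ct)$ solves (\ref{mKdV}), the zero-curvature relation $U_t-V_x+[U,V]=0$ forces $\psi:=\varphi_t-V\varphi$ to satisfy $\psi_x=U\psi$, and the explicit computation $(V+cU)\varphi=0$ then shows $\psi\equiv0$.
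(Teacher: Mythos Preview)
Your argument is correct and coincides with the paper's own proof: the paper likewise substitutes (\ref{second-order}) together with the squared-eigenfunction relations (\ref{relation-1-simple}), (\ref{relation-2-simple}), (\ref{relation-3-simple}) into the time-evolution system (\ref{3.3}) to obtain $\partial_t p_1 + c\,\partial_x p_1 = 0$ and $\partial_t q_1 + c\,\partial_x q_1 = 0$, which is exactly your identity $(V+cU)\varphi=0$ rephrased. Your write-up is simply a more detailed version of the same computation; one small slip is that the relations you cite live in Section~\ref{sec-5}, not Section~\ref{sec-4}.
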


\begin{proof}
Since the relation (\ref{potential}) holds for every $t$ and $u(x,t) = u(x-ct)$, then $\varphi(x,t) = \varphi(x-ct)$.
Alternatively, by using (\ref{second-order}), (\ref{relation-1-simple}), (\ref{relation-2-simple}),
and (\ref{relation-3-simple}) in the time evolution problem (\ref{3.3}), we obtain
$$
\frac{\partial p_1}{\partial t} + c \frac{\partial p_1}{\partial x} = 0, \quad
\frac{\partial q_1}{\partial t} + c \frac{\partial q_1}{\partial x} = 0,
$$
hence $p_1(x,t) = p_1(x-ct)$ and $q_1(x,t) = q_1(x-ct)$.
\end{proof}

Let $\varphi = (p_1,q_1)^t$ be the periodic eigenfunction of the spectral problem (\ref{3.2}) with $\lambda = \lambda_1$
and denote the second linearly independent solution the same spectral problem (\ref{3.2})
with the same $\lambda = \lambda_1$ by $\varphi = (\hat{p}_1,\hat{q}_1)^t$. Since the coefficient
matrix in (\ref{3.2}) has zero trace, the Wronskian determinant between the two solutions
is constant in $x$ and nonzero. To keep consistency with our previous work \cite{CPkdv},
we normalize the Wronskian by $2$, hence
\begin{equation}
\label{normalization}
p_1 \hat{q}_1 - \hat{p}_1 q_1 = 2.
\end{equation}
In the previous work \cite{CPkdv}, we constructed the second solution in the explicit form:
\begin{equation}
\label{represent-old}
\hat{p}_1 = \frac{\theta_1 - 1}{q_1}, \quad \hat{q}_1 = \frac{\theta_1 + 1}{p_1},
\end{equation}
where $\theta_1$ satisfies a certain scalar equation in $x$ and $t$, which can be easily integrated.
The corresponding expressions were used in \cite{CPkdv} to construct the rogue waves on the periodic
background, however, it was found that the expressions may be undefined if there exists a point of $(x,t)$ for which
either $p_1$ or $q_1$ vanishes.

Here we consider a different representation of the second solution which is free
of the technical problem above. The following lemma represents the second
solution in the explicit form.

\begin{lemma}
\label{lemma-time-second}
Let $\varphi = (p_1,q_1)^t$ be the periodic eigenfunction satisfying the Lax equations (\ref{3.2})--(\ref{3.3})
with $\lambda = \lambda_1$ and $u(x,t) = u(x-ct)$ and let $\varphi$ satisfy the normalization conditions
(\ref{relation-1-simple}), (\ref{relation-2-simple}), and (\ref{relation-3-simple-better}).
The second linearly independent solution satisfying the normalization (\ref{normalization}) can be written in the form:
\begin{equation}
\label{represent-new}
\hat{p}_1 = p_1 \phi_1 - \frac{2 q_1}{p_1^2 + q_1^2}, \quad
\hat{q}_1 = q_1 \phi_1 + \frac{2 p_1}{p_1^2 + q_1^2},
\end{equation}
with
\begin{equation}
\label{phi-final}
\phi_1(x,t) = - 16 (\lambda_1^2 - \lambda_2^2) \left[ \lambda_1^2 \int_0^{x-ct} \frac{c - 4 \lambda_1^2 - 2 u^2}{(e + 4 \lambda_1^2 u)^2} dy
+ t + \psi_1 \right],
\end{equation}
where $\psi_1$ is independent of $(x,t)$.
\end{lemma}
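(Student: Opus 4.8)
The plan is to build $\hat\varphi = (\hat p_1,\hat q_1)^t$ by the classical reduction-of-order construction relative to the known periodic solution $\varphi = (p_1,q_1)^t$. Write $\varphi^\perp := (-q_1,p_1)^t$ for the orthogonal companion, so that $\varphi^\perp\cdot\varphi = 0$ and $|\varphi^\perp|^2 = |\varphi|^2 = p_1^2+q_1^2 =: R$, and take the ansatz
\[
\hat\varphi \;=\; \phi_1\,\varphi + \frac{2}{R}\,\varphi^\perp ,
\]
which is exactly (\ref{represent-new}). For any scalar function $\phi_1$ the Wronskian is $p_1\hat q_1 - \hat p_1 q_1 = \frac{2}{R}(p_1^2+q_1^2) = 2$, so the normalization (\ref{normalization}) holds automatically and $\hat\varphi$ is linearly independent of $\varphi$. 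It remains to pick $\phi_1(x,t)$ so that $\hat\varphi$ solves both Lax equations (\ref{3.2})--(\ref{3.3}).

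First I would impose the spatial equation. Using $\varphi_x = U(\lambda_1,u)\varphi$, the identity $\varphi^\perp_x = JU(\lambda_1,u)\varphi$ with $J := \left(\begin{smallmatrix}0&-1\\1&0\end{smallmatrix}\right)$, and the relation $R_x = 2\lambda_1(p_1^2-q_1^2)$ (immediate from (\ref{3.2})), the $\varphi^\perp$-component of $(\partial_x - U)\hat\varphi$ cancels identically and one is left with
\[
(\partial_x - U)\hat\varphi \;=\; \Big(\partial_x\phi_1 + \frac{8\lambda_1 p_1 q_1}{R^2}\Big)\varphi .
\]
Thus the spatial equation is equivalent to $\partial_x\phi_1 = -8\lambda_1 p_1 q_1/R^2$. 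Substituting the normalizations (\ref{relation-1-simple}) and (\ref{relation-3-simple-better}) turns this into $\partial_x\phi_1 = -16\lambda_1^2(\lambda_1^2-\lambda_2^2)(c-4\lambda_1^2-2u^2)/(e+4\lambda_1^2 u)^2$, whose right-hand side depends on $(x,t)$ only through $\xi := x-ct$ since $u = u(x-ct)$. Integrating in $x$ then yields $\phi_1$ up to an additive arbitrary function $h(t)$, which accounts for the spatial-integral term in (\ref{phi-final}).

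Next I would impose the time equation. By Lemma \ref{lemma-time-per}, $\varphi(x,t) = \varphi(x-ct)$ and $\varphi_t = V(\lambda_1,u)\varphi$, hence $(V+cU)\varphi = 0$. Writing $\phi_1 = \Phi(\xi) + h(t)$ one gets $\hat\varphi_t = -c\hat\varphi_x + h'(t)\varphi$, so with $\hat\varphi_x = U\hat\varphi$ from the previous step the time equation $\hat\varphi_t = V\hat\varphi$ becomes $(V+cU)\hat\varphi = h'(t)\varphi$, i.e.\ $\frac{2}{R}(V+cU)\varphi^\perp = h'(t)\varphi$ since $(V+cU)\varphi = 0$. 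The key computation is that, after simplifying the entries of $V$ by the second-order equation (\ref{second-order}) and then using (\ref{relation-1-simple})--(\ref{relation-3-simple-better}), the matrix $V+cU$ collapses to the rank-one matrix $-8(\lambda_1^2-\lambda_2^2)\,\varphi\,(\varphi^\perp)^T$; consequently $(V+cU)\varphi = 0$ automatically (because $(\varphi^\perp)^T\varphi = 0$) and $(V+cU)\varphi^\perp = -8(\lambda_1^2-\lambda_2^2)R\,\varphi$. Hence $h'(t) = -16(\lambda_1^2-\lambda_2^2)$, so $h(t) = -16(\lambda_1^2-\lambda_2^2)(t+\psi_1)$ for a constant $\psi_1$, and combining the two contributions to $\phi_1$ reproduces (\ref{phi-final}).

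The hard part is this last reduction of $V+cU$ to the rank-one matrix: the trace-free structure together with $(V+cU)\varphi = 0$ guarantees \emph{a priori} that $(V+cU)\varphi^\perp$ is proportional to $\varphi$, but identifying the proportionality constant requires substituting (\ref{second-order}) into the entries of $V$ and invoking all three squared-eigenfunction identities, the numerical constants matching precisely because of the fixed normalization in (\ref{relation-1-simple})--(\ref{relation-3-simple-better}). I would also record in the proof that, in contrast to (\ref{represent-old}), the denominator $p_1^2+q_1^2$ in (\ref{represent-new}) is the squared norm of a nontrivial solution of the first-order linear system (\ref{3.2}) and is therefore strictly positive everywhere, so the new representation has no singularities.
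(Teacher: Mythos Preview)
Your proof is correct and follows essentially the same route as the paper: substitute the ansatz (\ref{represent-new}) into the spatial Lax equation to obtain $\partial_x\phi_1 = -8\lambda_1 p_1 q_1/(p_1^2+q_1^2)^2$, then use the time equation together with the squared-eigenfunction relations (\ref{relation-1-simple})--(\ref{relation-3-simple-better}) to fix the $t$-dependence as $\partial_t\phi_1 + c\,\partial_x\phi_1 = -16(\lambda_1^2-\lambda_2^2)$. Your packaging of the time step via the rank-one factorization $V+cU = -8(\lambda_1^2-\lambda_2^2)\,\varphi(\varphi^\perp)^T$ is a slightly more structural way to arrive at the same constant that the paper obtains by directly expanding $\partial_t\phi_1$, but the underlying computation and ingredients are identical.

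One caution on your closing remark: the argument that $p_1^2+q_1^2$ is strictly positive because it is the squared norm of a nontrivial solution of (\ref{3.2}) is valid only when $\lambda_1\in\mathbb{R}$ and $(p_1,q_1)$ are real. For the complex eigenvalues $\lambda_{1,2}$ in (\ref{eig-complex-intro}) the expression $p_1^2+q_1^2$ is not a norm, and by (\ref{relation-1-simple}) it is proportional to $e+4\lambda_1^2 u$, which does vanish when $e=0$; the paper notes this in the remarks following the lemma.
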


\begin{proof}
By substituting the representation (\ref{represent-new}) to the spectral problem (\ref{3.2}) with $\lambda = \lambda_1$ and
using the same spectral problem (\ref{3.2}) for $\varphi = (p_1,q_1)^t$, we obtain
\begin{equation}
\label{phi-der-x}
\frac{d \phi_1}{dx} = -\frac{8 \lambda_1 p_1 q_1}{(p_1^2+q_1^2)^2}.
\end{equation}
Thanks to (\ref{relation-1-simple}) and (\ref{relation-3-simple-better}), this equation can be integrated to the form
\begin{equation}
\label{phi}
\phi_1 = -16 (\lambda_1^2 - \lambda_2^2) \left[ \lambda_1^2  \int_0^x \frac{c - 4 \lambda_1^2 - 2u^2}{(e + 4 \lambda_1^2 u)^2} dy + \psi_1 \right],
\end{equation}
where $\psi_1$ is the constant of integration in $x$ that may depend on $t$.

On the other hand, by substituting the representation (\ref{represent-new}) to the time evolution system (\ref{3.3})
with $\lambda_1$ and using the same system (\ref{3.3}) for $\varphi = (p_1,q_1)^t$, we obtain
\begin{equation}
\label{phi-der-t}
\frac{\partial \phi_1}{\partial t} = \frac{8 \lambda_1 p_1 q_1 (4 \lambda_1^2 + 2 u^2)}{(p_1^2+q_1^2)^2}
- \frac{8 \lambda_1 u_x (p_1^2 - q_1^2)}{(p_1^2+q_1^2)^2}.
\end{equation}
Substituting (\ref{relation-2-simple}), (\ref{relation-3-simple-better}), and (\ref{phi-der-x}) into (\ref{phi-der-t})
yields the scalar equation
$$
\frac{\partial \phi_1}{\partial t} + c \frac{\partial \phi_1}{\partial x} = -16 (\lambda_1^2 - \lambda_2^2),
$$
from which we obtain (\ref{phi}), where $\psi_1$ is now constant both in $x$ and $t$.
\end{proof}

\begin{remark}
Without loss of generality, thanks to the translational invariance in $(x,t)$, we can set $\psi_1 = 0$
so that if $u$ is even in $x$, then $\phi_1$ is odd in $x$ at $t = 0$.
\end{remark}

\begin{remark}
The representation (\ref{represent-new}) is non-singular for every $(x,t)$ for which $p_1^2 + q_1^2 \neq 0$.
If $\lambda_1 \in \mathbb{R}$ with real $\varphi = (p_1,q_1)^t$, we have $p_1^2 + q_1^2 \neq 0$ everywhere because
if $\varphi$ vanishes at one point, then $\varphi$ is identically zero everywhere since it satisfies
the first-order systems (\ref{3.2}) and (\ref{3.3}).
\end{remark}

\begin{remark}
If $\lambda_1 \notin \mathbb{R}$, the representation (\ref{represent-new}) is non-singular if $e \neq 0$
and singular if $e = 0$. This follows from the expression (\ref{relation-1-simple}) which shows that
$p_1^2 + q_1^2 \cong e + 4 \lambda_1^2 u$ with $e \in \mathbb{R}$ and $u \in \mathbb{R}$.
If $e = 0$, the {\em cn}-periodic solution (\ref{cn-intro}) vanish at some points of $(x,t)$, which
lead to the singular behavior of $\hat{p}_1$ and $\hat{q}_1$ given by (\ref{represent-new}).
Note in this case, the previous representation (\ref{represent-old}) is non-singular, as it follows
from our previous work \cite{CPkdv}.
\end{remark}

\section{Proof of Theorems \ref{theorem-rogue-dn} and \ref{theorem-rogue-cn}}
\label{sec-7}

Here we use the Darboux transformation formulas (\ref{one-fold-general}), (\ref{two-fold-general}),
and (\ref{three-fold-general}) with the second solutions of the Lax system (\ref{3.2}) and
(\ref{3.3}) for the same eigenvalues $\lambda_1$, $\lambda_2$, and $\lambda_3$ as in Theorem \ref{theorem-eig}.
Outcomes of the Darboux transformations are first represented graphically and then studied analytically.

Substituting (\ref{represent-new}) into the one-fold transformation (\ref{one-fold-general})
yields the new solution to the mKdV equation (\ref{mKdV}) in the form:
\begin{equation}
\label{one-fold-new}
\hat{u} = u + \frac{4 \lambda_1 N_1}{D_1},
\end{equation}
where
\begin{eqnarray*}
N_1 & := & \frac{p_1 q_1}{p_1^2 + q_1^2} \left[ (p_1^2 + q_1^2)^2 \phi_1^2 - 4 \right] + 2 (p_1^2 - q_1^2) \phi_1, \\
D_1 & := & (p_1^2 + q_1^2)^2 \phi_1^2 + 4,
\end{eqnarray*}
with $\phi_1$ given by (\ref{phi-final}) for the choice of $\psi_1 = 0$.

Figure \ref{fig-1-fold-1} shows three new solutions computed from
the periodic wave (\ref{Jacob-1-intro}) with the parameter values:
\begin{equation}
\label{choice-1}
u_1 = 2, \quad u_2 = -0.25, \quad u_3 = -0.75, \quad u_4 = -1.
\end{equation}
The three solutions are generated with three different choices
for the eigenvalue $\lambda_1$ given by (\ref{eig-real-intro}).
Each solution displays an algebraic soliton propagating on the background
of the periodic travelling wave.

\begin{figure}[ht]
\centering
\includegraphics[scale=0.4]{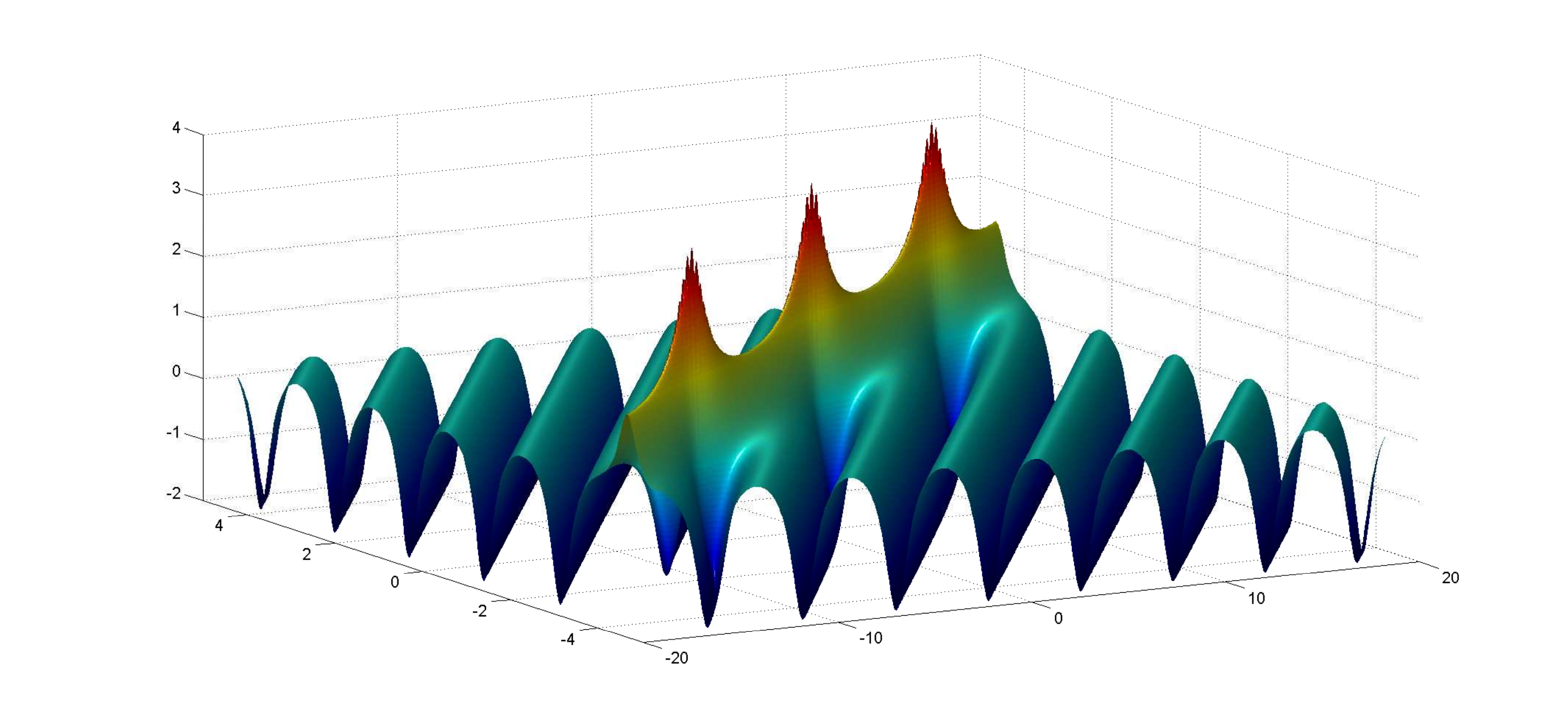}\\
\includegraphics[scale=0.4]{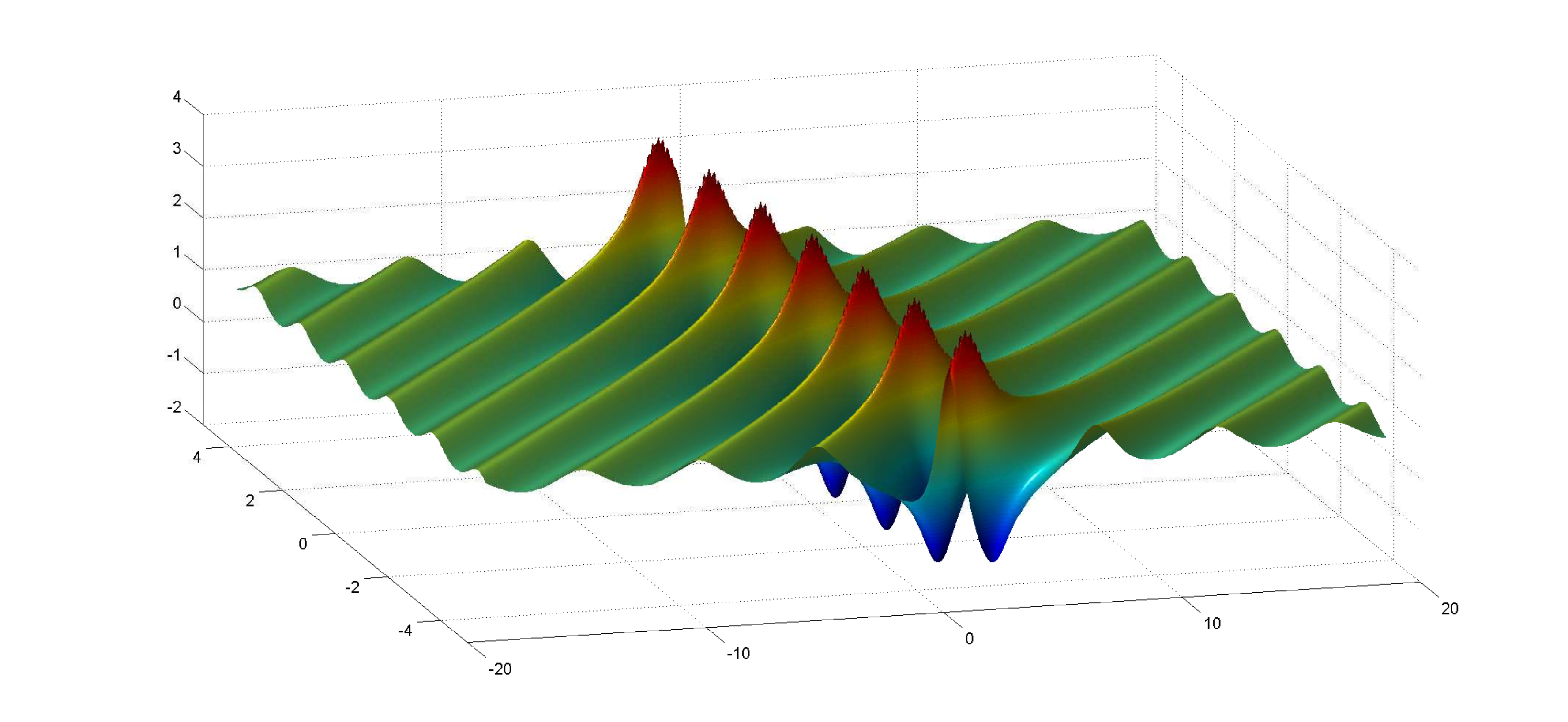}\\
\includegraphics[scale=0.4]{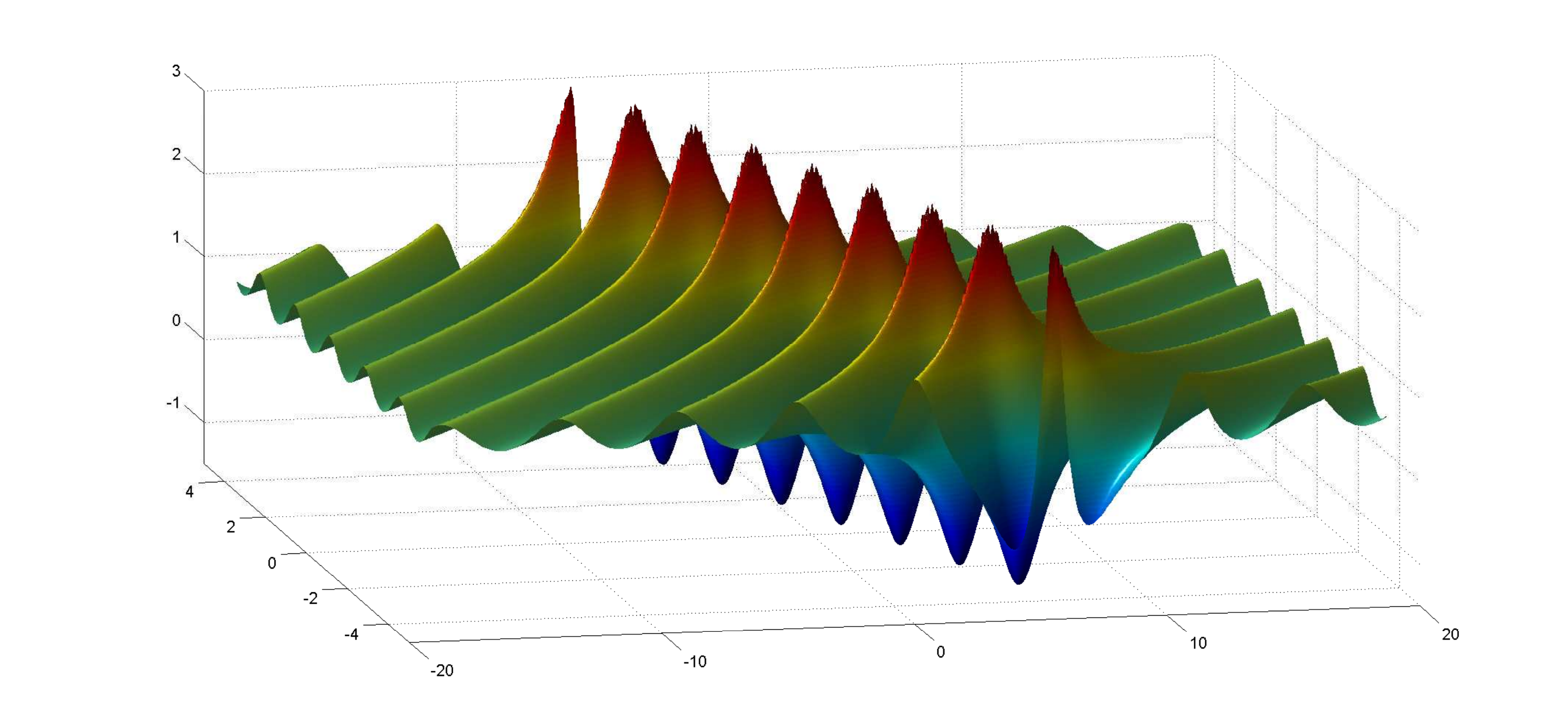}
\caption{Three outcomes of the one-fold transformation of the periodic solution (\ref{Jacob-1-intro}). } \label{fig-1-fold-1}
\end{figure}

Substituting (\ref{represent-new}) into the two-fold transformation (\ref{two-fold-general})
yields the new solution to the mKdV equation (\ref{mKdV}) in the form:
\begin{equation}
\label{two-fold-new}
\hat{u} = u + \frac{4 (\lambda_1^2-\lambda_2^2) (\lambda_1 N_1 D_2 - \lambda_2 N_2 D_1)}{(\lambda_1^2 + \lambda_2^2) D_1 D_2
- 8 \lambda_1 \lambda_2 N_1 N_2 - 2 \lambda_1 \lambda_2 S_1 S_2},
\end{equation}
where for $j = 1,2$
\begin{eqnarray*}
N_j & := & \frac{p_j q_j}{p_j^2 + q_j^2} \left[ (p_j^2 + q_j^2)^2 \phi_j^2 - 4 \right] + 2 (p_j^2 - q_j^2) \phi_j, \\
S_j & := & \frac{(p_j^2- q_j^2)}{p_j^2 + q_j^2} \left[ (p_j^2 + q_j^2)^2 \phi_j^2 - 4 \right] - 8 p_j q_j \phi_j, \\
D_j & := & (p_j^2 + q_j^2)^2 \phi_j^2 + 4.
\end{eqnarray*}
The expression for $\phi_1$ is given by (\ref{phi-final}) with $\psi_1 = 0$.
Note that the numerical factor $(\lambda_1^2 - \lambda_2^2)$ in (\ref{phi-final}) cancels with
the denominators in the expressions (\ref{relation-1-simple}), (\ref{relation-2-simple}),
and (\ref{relation-3-simple-better}) for the squared eigenfunctions. Therefore,
the expression for $\phi_2$ and $\phi_3$ can be obtained with the transformation
$\lambda_1 \mapsto \lambda_2$ and $\lambda_1 \mapsto \lambda_3$ respectively.

Three choices exist for an eigenvalue pair $(\lambda_1,\lambda_2)$ from the three real
eigenvalues in (\ref{eig-real-intro}). Figure \ref{fig-2-fold-1} shows three new solutions computed from
the periodic wave (\ref{Jacob-1-intro}) with the same choice of parameters as in (\ref{choice-1}).
The three solutions display three different choices of two algebraic solitons propagating on the
background of the periodic travelling wave.

\begin{figure}[ht]
\centering
\includegraphics[scale=0.4]{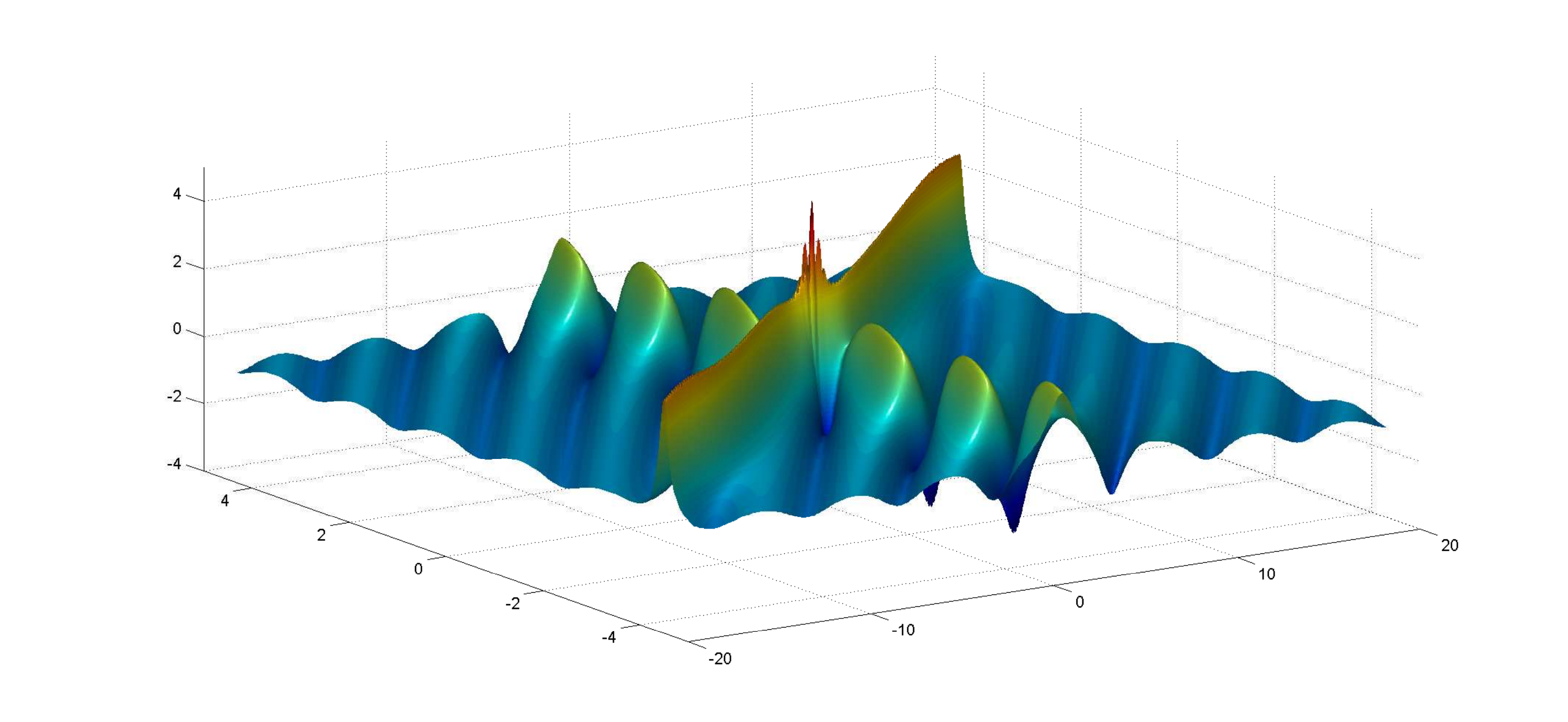}\\
\includegraphics[scale=0.4]{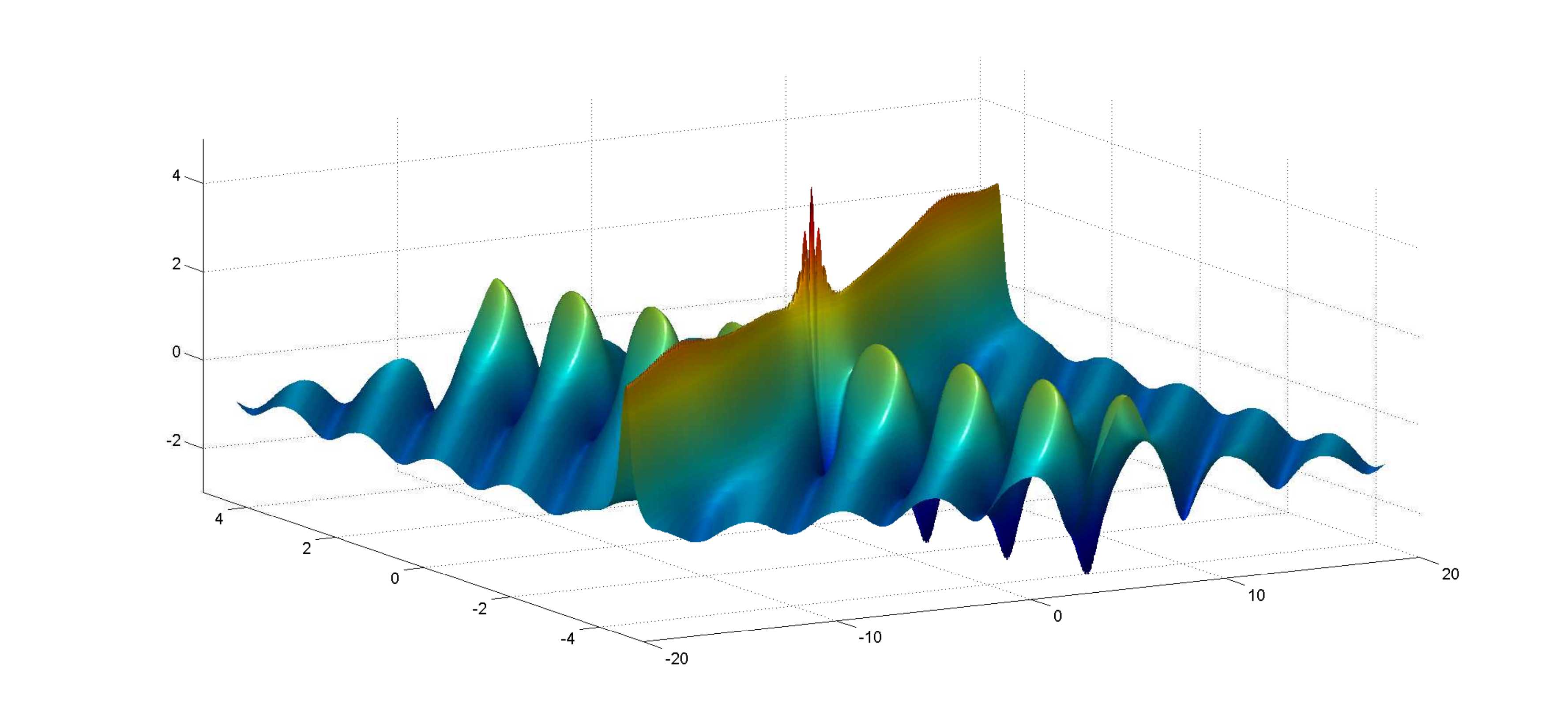}\\
\includegraphics[scale=0.4]{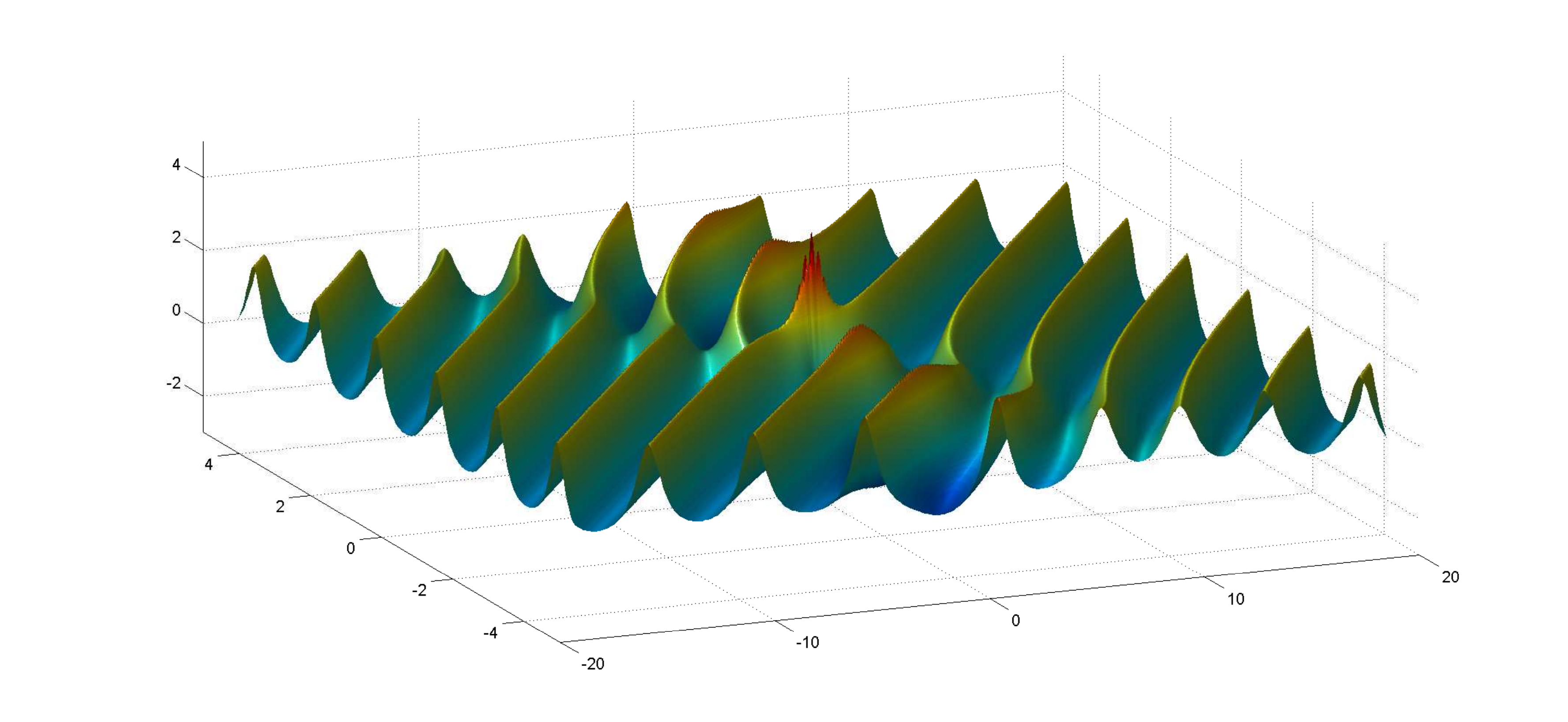}
\caption{Three outcomes of the two-fold transformation of the periodic solution (\ref{Jacob-1-intro}). } \label{fig-2-fold-1}
\end{figure}

Substituting (\ref{represent-new}) into the three-fold transformation (\ref{three-fold-general})
yields a new solution in the explicit form that is similar to (\ref{one-fold-new}) and (\ref{two-fold-new}).
There is only one choice for the three real eigenvalues $(\lambda_1,\lambda_2,\lambda_3)$ in (\ref{eig-real-intro}).
Figure \ref{fig-3-fold-1} shows the new solution computed from
the periodic wave (\ref{Jacob-1-intro}) with the same choice of parameters as in (\ref{choice-1}).
The new solution displays three algebraic solitons propagating on the background of the
periodic travelling wave.

\begin{figure}[ht]
\centering
\includegraphics[scale=0.5]{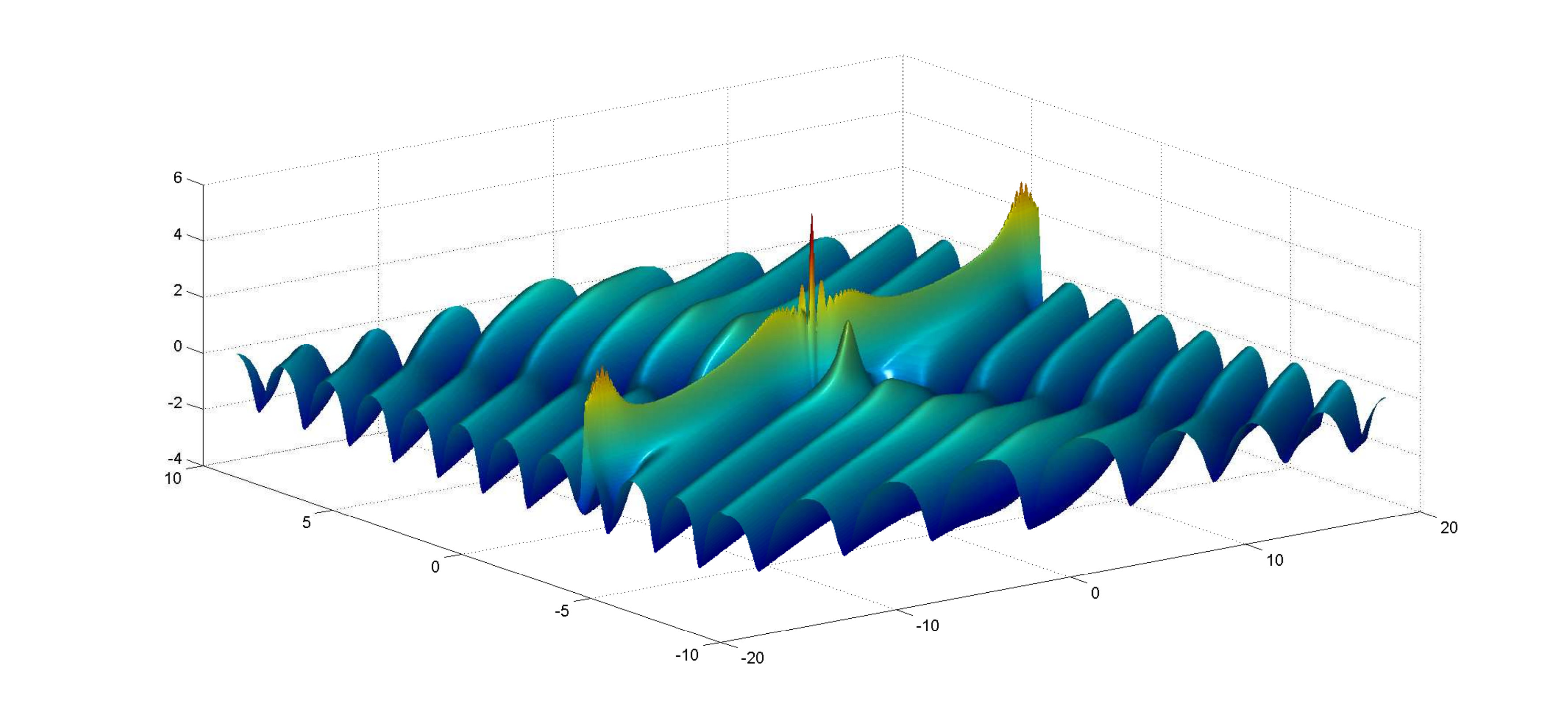}
\caption{Outcome of the three-fold transformation of the periodic solution (\ref{Jacob-1-intro}). } \label{fig-3-fold-1}
\end{figure}

The proof of Theorem \ref{theorem-rogue-dn} relies on the following two lemmas.

\begin{lemma}
\label{lemma-phi}
Let $(p_{1,2,3},q_{1,2,3})$ be the periodic solution to the Lax equations (\ref{3.2})--(\ref{3.3})
for $\lambda_{1,2,3}$ and $(\hat{p}_{1,2,3},\hat{q}_{1,2,3})$ be the second solution in Lemma \ref{lemma-time-second}.
Let $\tilde{u}$ be the Darboux transformations of $u$ with the periodic solution and
$\hat{u}$ be the corresponding Darboux transformations of $u$ with the second solution.
Then, we have
\begin{equation}
\label{limits-u}
\lim_{|\phi_{1,2,3}| \to \infty} \hat{u} = \tilde{u}, \quad
\lim_{|\phi_{1,2,3}| \to 0} \hat{u} = 2 u - \tilde{u}.
\end{equation}
\end{lemma}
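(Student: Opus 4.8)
The plan is to combine the explicit representation (\ref{represent-new}) of the second solution with a structural scaling property of the Darboux transformation formulas (\ref{one-fold-general}), (\ref{two-fold-general}), and (\ref{three-fold-general}). The key observation is that in each of these formulas both the numerator and the denominator are homogeneous of the same degree (degree two) in every eigenfunction $(p_j,q_j)$ \emph{separately}; hence the right-hand side is invariant under the independent rescalings $(p_j,q_j) \mapsto (\mu_j p_j, \mu_j q_j)$ for any $\mu_j \neq 0$. Since $\tilde u$ and $\hat u$ are obtained from the same Darboux formula applied, respectively, to the periodic eigenfunctions $(p_j,q_j)$ and to the second solutions $(\hat p_j,\hat q_j)$, it suffices to identify the projective limit of $(\hat p_j,\hat q_j)$ in the two regimes, up to a scalar multiple.

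First I would treat the limit $|\phi_{1,2,3}| \to \infty$. From (\ref{represent-new}),
\begin{equation*}
\frac{\hat p_j}{\phi_j} = p_j - \frac{2 q_j}{\phi_j (p_j^2 + q_j^2)}, \qquad
\frac{\hat q_j}{\phi_j} = q_j + \frac{2 p_j}{\phi_j (p_j^2 + q_j^2)},
\end{equation*}
so $(\hat p_j/\phi_j,\hat q_j/\phi_j) \to (p_j,q_j)$ as $|\phi_j| \to \infty$. Replacing $(\hat p_j,\hat q_j)$ by $(\hat p_j/\phi_j,\hat q_j/\phi_j)$ in (\ref{one-fold-general})--(\ref{three-fold-general}) leaves $\hat u$ unchanged by the scaling invariance, and the resulting rational expression is continuous at $(p_j,q_j)$ because the limiting denominator is precisely the one that defines $\tilde u$, which does not vanish. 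Passing to the limit gives $\hat u \to \tilde u$.

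Next I would treat the limit $|\phi_{1,2,3}| \to 0$. Here (\ref{represent-new}) gives directly
\begin{equation*}
(\hat p_j, \hat q_j) \to \left( -\frac{2 q_j}{p_j^2 + q_j^2},\; \frac{2 p_j}{p_j^2 + q_j^2} \right)
= \frac{2}{p_j^2 + q_j^2}\,(-q_j,\, p_j),
\end{equation*}
so, again by scaling invariance, $\hat u$ tends to the value of the same Darboux formula evaluated with $(-q_j,p_j)$ in place of $(p_j,q_j)$ for all $j$. The rotation $R:(p_j,q_j) \mapsto (-q_j,p_j)$ sends $p_j^2+q_j^2 \mapsto p_j^2+q_j^2$, $p_j q_j \mapsto -p_j q_j$, $p_j^2-q_j^2 \mapsto -(p_j^2-q_j^2)$, and interchanges $p_j^2 \leftrightarrow q_j^2$. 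A direct inspection of (\ref{one-fold-general}), (\ref{two-fold-general}), and (\ref{three-fold-general}) then shows that under $R$ every numerator changes sign while every denominator is unchanged: each monomial of the numerator carries an odd number of the sign-reversing factors $p_jq_j$, $(p_j^2-q_j^2)$, whereas each monomial of the denominator is an even combination (either a product of $p_j^2+q_j^2$'s and pairs of sign-reversing factors, or a term such as $p_1^2 q_2^2 q_3^2 + q_1^2 p_2^2 p_3^2$ that $R$ merely permutes). Consequently $\hat u - u \mapsto -(\hat u - u)$, i.e. in the limit $\hat u \to u - (\tilde u - u) = 2u - \tilde u$.

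The only genuinely laborious step is the parity check in the three-fold formula (\ref{three-fold-general}): one must verify that each of the four groups of terms in $N$ — including the term $\lambda_1\lambda_2\lambda_3\, p_1 p_2 p_3 q_1 q_2 q_3$ — is odd under $R$, and that each of the seven groups of terms in $D$ is even; the one- and two-fold cases are immediate. Besides this bookkeeping, the only point needing care is the non-vanishing of the limiting denominators, which is ensured because $\tilde u$ and $2u-\tilde u$ are bona fide finite solutions of the mKdV equation (\ref{mKdV}).
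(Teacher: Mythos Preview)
Your argument is correct and reaches the same conclusions as the paper, but the organization is somewhat different. The paper proceeds by substituting the representation (\ref{represent-new}) directly into the Darboux formulas to obtain explicit rational expressions in $\phi_j$ (as in (\ref{one-fold-new}) and (\ref{two-fold-new})), and then reads off the limits $|\phi_j|\to\infty$ and $|\phi_j|\to 0$ from the leading and constant terms; for the three-fold case it simply says the computation is similar. Your route---using the separate degree-two homogeneity of each $(p_j,q_j)$ to reduce to a projective limit, and then invoking the parity of numerator/denominator under the rotation $R:(p_j,q_j)\mapsto(-q_j,p_j)$---is a cleaner structural way to see the same thing, and it makes the three-fold case genuinely transparent rather than merely ``similar.'' The only substantive check either approach requires is the sign bookkeeping in (\ref{three-fold-general}), which you have identified correctly: every monomial in $N$ carries an odd number of the $R$-odd factors $p_jq_j$, $p_j^2-q_j^2$ (including $p_1p_2p_3q_1q_2q_3$), while every monomial in $D$ is $R$-even, either as a product of an even number of $R$-odd factors or as a sum $p_i^2 q_j^2 q_k^2 + q_i^2 p_j^2 p_k^2$ that $R$ merely permutes.
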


\begin{proof}
By using (\ref{one-fold}) and (\ref{one-fold-new}) for the one-fold transformation,
we obtain
$$
\lim_{|\phi_1| \to \infty} \hat{u} = u + \frac{4 \lambda_1 p_1 q_1}{p_1^2 + q_1^2} = \tilde{u}
$$
and
$$
\lim_{|\phi_1| \to 0} \hat{u} = u - \frac{4 \lambda_1 p_1 q_1}{p_1^2 + q_1^2} = 2u - \tilde{u}.
$$
By using (\ref{two-fold}) and (\ref{two-fold-new}) for the two-fold transformation,
we obtain similarly:
$$
\lim_{|\phi_1|,|\phi_2| \to \infty} \hat{u} = u +
\frac{4 (\lambda_1^2-\lambda_2^2) \left[ \lambda_1 p_1 q_1 (p_2^2 + q_2^2) - \lambda_2 p_2 q_2 (p_1^2 + q_1^2)\right]}{
(\lambda_1^2 + \lambda_2^2) (p_1^2 + q_1^2)(p_2^2 + q_2^2) - 2 \lambda_1 \lambda_2 [ 4 p_1 q_1 p_2 q_2 + (p_1^2 - q_1^2)(p_2^2 - q_2^2)]}
= \tilde{u}
$$
and
$$
\lim_{|\phi_1|,|\phi_2| \to 0} \hat{u} = u -
\frac{4 (\lambda_1^2-\lambda_2^2) \left[ \lambda_1 p_1 q_1 (p_2^2 + q_2^2) - \lambda_2 p_2 q_2 (p_1^2 + q_1^2)\right]}{
(\lambda_1^2 + \lambda_2^2) (p_1^2 + q_1^2)(p_2^2 + q_2^2) - 2 \lambda_1 \lambda_2 [ 4 p_1 q_1 p_2 q_2 + (p_1^2 - q_1^2)(p_2^2 - q_2^2)]}
= 2u - \tilde{u}.
$$
The proof for the three-fold Darboux transformation (\ref{three-fold-general}) is similar.
\end{proof}

\begin{lemma}
\label{lemma-phi-growth-dn}
Assume $u_4 < u_3 < u_2 < u_1$ such that $u_1 + u_2 + u_3 + u_4 = 0$,
$u_1 + u_2 \neq 0$, $u_1 + u_3 \neq 0$, and $u_2 + u_3 \neq 0$.
Under the following three non-degeneracy conditions,
\begin{equation}
\label{nondeg1}
\oint \frac{\lambda_1^2 (u_1 u_2 + u_3 u_4 + 2 u^2)}{(e + 4 \lambda_1^2 u)^2} dy \neq 0,
\end{equation}
\begin{equation}
\label{nondeg2}
\oint \frac{\lambda_2^2 (u_1 u_3 + u_2 u_4 + 2 u^2)}{(e + 4 \lambda_2^2 u)^2} dy \neq 0,
\end{equation}
and
\begin{equation}
\label{nondeg3}
\oint \frac{\lambda_3^2 (u_1 u_4 + u_2 u_3 + 2 u^2)}{(e + 4 \lambda_3^2 u)^2} dy \neq 0,
\end{equation}
where $\oint$ denotes the mean value integral,
there exist $c_1, c_2, c_3 \neq c$ such that the second solutions to the Lax system (\ref{3.2})--(\ref{3.3})
for $u$ in (\ref{Jacob-1-intro}) and eigenvalues $\lambda_1,\lambda_2,\lambda_3$ in (\ref{eig-real-intro})
are linearly growing in $x$ and $t$ everywhere on the $(x,t)$ plane except for the straight lines
$x - c_{1,2,3} t = \xi_{1,2,3}$, where
$\xi_1,\xi_2,\xi_3$ are phase parameters which are not uniquely defined.
\end{lemma}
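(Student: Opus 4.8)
The plan is to read off the growth of the second solution directly from Lemma~\ref{lemma-time-second}, reducing everything to the asymptotics of the scalar function $\phi_j$ in (\ref{phi-final}). From the representation (\ref{represent-new}) one computes
$$
\hat{p}_j^2 + \hat{q}_j^2 = (p_j^2 + q_j^2)\,\phi_j^2 + \frac{4}{p_j^2 + q_j^2},
$$
so $\|(\hat{p}_j,\hat{q}_j)\|$ is comparable to $|\phi_j| + 1$ as soon as $p_j^2 + q_j^2$ is bounded above and below by positive constants. The latter follows from (\ref{relation-1-simple}), which gives $p_j^2 + q_j^2 \propto e + 4\lambda_j^2 u$, together with the factorized expressions (\ref{formula-1}), (\ref{formula-3}), (\ref{formula-5}) from the proof of Lemma~\ref{lemma-DT-1}: under $u_1+u_2\neq 0$, $u_1+u_3\neq 0$, $u_2+u_3\neq 0$, both the numerator and the denominator of the elliptic fraction in each of these expressions are affine in ${\rm sn}^2(\nu x;\kappa)$ with endpoint values of constant sign (for $\lambda_1$ the numerator interpolates between $u_1-u_3>0$ and $u_2-u_3>0$, for $\lambda_2$ between $u_2-u_4>0$ and $u_2-u_3>0$, for $\lambda_3$ between $u_3-u_1<0$ and $u_4-u_1<0$), hence $e+4\lambda_j^2 u$ never vanishes on the range $[u_2,u_1]$ of the wave (\ref{Jacob-1-intro}). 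In particular the integrand of $\phi_j$ is a smooth $L$-periodic function of $\xi := x - ct$.

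Next I would use the elementary identities
$$
c - 4\lambda_j^2 - 2u^2 = -(\sigma_j + 2u^2), \qquad j = 1,2,3,
$$
with $\sigma_1 = u_1u_2 + u_3u_4$, $\sigma_2 = u_1u_3 + u_2u_4$, $\sigma_3 = u_1u_4 + u_2u_3$, which follow at once from $u_1+u_2+u_3+u_4 = 0$ and $4\lambda_j^2 = (2\lambda_j)^2$ with $2\lambda_j$ the relevant sum of two roots in (\ref{eig-real-intro}). Thus the integrand $g_j$ of $\phi_j$ equals, up to the constant $-\lambda_j^{-2}$, exactly the periodic function in the non-degeneracy conditions (\ref{nondeg1})--(\ref{nondeg3}). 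Splitting this $L$-periodic integrand as $g_j = \langle g_j\rangle + \tilde{g}_j$ with $\langle g_j\rangle := \oint g_j\,dy$ and $\int_0^L \tilde{g}_j\,dy = 0$, the primitive $G_j(\xi) := \int_0^\xi \tilde{g}_j(y)\,dy$ is continuous and $L$-periodic, hence bounded, and
$$
\phi_j(x,t) = -16(\lambda_1^2 - \lambda_2^2)\bigl[\lambda_j^2 \langle g_j\rangle\,(x - ct) + \lambda_j^2 G_j(x - ct) + t + \psi_j\bigr],
$$
with the prefactor replaced by the corresponding nonzero constant when $j = 2,3$.

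The hypotheses (\ref{nondeg1})--(\ref{nondeg3}) are precisely $\lambda_j^2 \langle g_j\rangle \neq 0$, so the linear part of $\phi_j$ is a non-degenerate affine functional of $(x,t)$ whose level sets are the lines of slope $c_j := c - (\lambda_j^2 \langle g_j\rangle)^{-1}$, which differs from $c$. On any such line the linear part is constant while the remainder $\lambda_j^2 G_j(x - ct) + \psi_j$ stays bounded, whereas transverse to it the linear part is unbounded; hence $|\phi_j(x,t)| \to \infty$ as $|x| + |t| \to \infty$ along every ray on which $x - c_j t$ is unbounded, and $\phi_j$ — and therefore $(\hat{p}_j,\hat{q}_j)$ by the first display — stays bounded along each line $x - c_j t = \xi_j$. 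The phase $\xi_j$ is not pinned down because the translational freedom in $(x,t)$ has been used to set $\psi_j = 0$ and because the bounded oscillatory term $G_j$ only localizes the exceptional region up to a strip of finite width. Running this argument for each of the three eigenvalues in (\ref{eig-real-intro}) yields the lemma. I expect the only non-routine point to be the non-vanishing of $e + 4\lambda_j^2 u$ on $[u_2,u_1]$, which is what legitimizes both the periodicity of $g_j$ and the comparability of $\|(\hat{p}_j,\hat{q}_j)\|$ with $|\phi_j| + 1$; the remainder is the mean-value splitting of a periodic function.
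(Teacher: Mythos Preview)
Your argument is correct and follows essentially the same route as the paper: both proofs reduce the growth of the second solution to the behaviour of the scalar $\phi_j$ from Lemma~\ref{lemma-time-second}, split the periodic integrand into its mean value plus a mean-zero part, identify the non-degeneracy conditions (\ref{nondeg1})--(\ref{nondeg3}) with the non-vanishing of that mean, and read off the exceptional speed $c_j = c - M_j^{-1}$. Your write-up is in fact more thorough than the paper's in two respects: you explicitly verify, via the factorized forms (\ref{formula-1}), (\ref{formula-3}), (\ref{formula-5}), that $e+4\lambda_j^2 u$ has a fixed sign on the range $[u_2,u_1]$ of the wave (the paper tacitly uses this but does not argue it), and you spell out the identity $\hat{p}_j^2+\hat{q}_j^2=(p_j^2+q_j^2)\phi_j^2+4/(p_j^2+q_j^2)$ linking the norm of the second solution to $|\phi_j|$, which the paper leaves implicit in the phrase ``the proof follows from analysis of the factor $\tilde{\phi}_1$''.
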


\begin{proof}
Thanks to the periodicity of $\varphi = (p_1,q_1)^t$ in $(x,t)$ and the explicit representation (\ref{represent-new})--(\ref{phi-final})
for the second solution $\varphi = (\hat{p}_1,\hat{q}_1)^t$, the proof follows from analysis of the factor
\begin{equation}
\label{tilde-phi}
\tilde{\phi}_1(x,t) = \int_0^{x-ct} \frac{\lambda_1^2 (c - 4 \lambda_1^2 - 2 u^2)}{(e + 4 \lambda_1^2 u)^2} dy + t + \psi_1.
\end{equation}
Let $M_1$ be the mean value of the factor under the integration sign and assume that $M_1 \neq 0$.
Then,
\begin{equation}
\label{tilde-phi-growth}
\tilde{\phi}_1(x,t) = M_1 (x-ct) + t + {\rm periodic \; function}.
\end{equation}
Hence, $|\tilde{\phi}_1(x,t)|\to \infty$ as a linear function of $(x,t)$ everywhere
on the $(x,t)$ plane except for the straight line $x - c_1 t = \xi_1$, where
$c_1 = c - M_1^{-1}$ and $\xi_1$ is not uniquely defined due to the phase parameter $\psi_1$.
Similar formulas are obtained for the second solutions $(\hat{p}_2,\hat{q}_2)$ and $(\hat{p}_3,\hat{q}_3)$.
The conditions $M_1, M_2, M_3 \neq 0$ are equivalent to the non-degeneracy conditions
(\ref{nondeg1}), (\ref{nondeg2}), and (\ref{nondeg3}) respectively, e.g.
$c - 4 \lambda_1^2 = 2 (\lambda_2^2 + \lambda_3^2 - \lambda_1^2) = -(u_1u_2 + u_3 u_4)$.
\end{proof}

\begin{remark}
The first limit in (\ref{limits-u}) tells us that the new solution $\hat{u}$ approaches
the transformed periodic wave $\tilde{u}$ almost everywhere on the $(x,t)$ plane as $|x| + |t| \to \infty$
except for the lines $x-c_{1,2,3}t = \xi_{1,2,3}$. Algebraic solitons propagate along these lines with the wave speeds $c_{1,2,3}$.
\end{remark}

\begin{remark}
The second limit in (\ref{limits-u}) tells us the amplitude of the algebraic soliton on the
periodic background $\tilde{u}$. The periodic wave $u$ in (\ref{Jacob-1}) has two extremal
points $u(0) = u_1$ and $u(L/2) = u_2$ and our convention is $u_4 < u_3 < u_2 < u_1$.
Table \ref{Table1} shows the periodic background $\tilde{u}$
and the new solution $\hat{u}$ at two extremal points of $u$.
\end{remark}

\begin{table}[ht]
\begin{center}
\begin{tabular}{|c|c|c|c|c|}
\hline
Transformation & $\tilde{u}(0)$ & $\hat{u}(0)$ & $\tilde{u}(L/2)$ & $\hat{u}(L/2)$  \\
\hline
one-fold with $\lambda_1$ & $-u_2$ & $2u_1 + u_2$ & $-u_1$ & $2 u_2 + u_1$\\
one-fold with $\lambda_2$ & $-u_3$ & $2u_1 + u_3$ & $-u_4$ & $2 u_2 + u_4$ \\
one-fold with $\lambda_3$ & $-u_4$ & $2u_1 + u_4$ & $-u_3$ & $2 u_2 + u_3$\\
two-fold with $(\lambda_1,\lambda_2)$ & $u_4$ & $2u_1 - u_4$ & $u_3$ & $2 u_2 - u_3$\\
two-fold with $(\lambda_1,\lambda_3)$ & $u_3$ & $2u_1 - u_3$ & $u_4$ & $2 u_2 - u_4$\\
two-fold with $(\lambda_2,\lambda_3)$ & $u_2$ & $2u_1 - u_2$ & $u_1$ & $2 u_2 - u_1$\\
three-fold with $(\lambda_1,\lambda_2,\lambda_3)$ & $-u_1$ & $3u_1$ & $-u_2$ & $3 u_2$\\
\hline
\end{tabular}
\end{center}
\caption{Characteristics of rogue waves on the periodic background (\ref{Jacob-1-intro}).}
\label{Table1}
\end{table}

We next represent graphically the outcomes of the one-fold, two-fold, and three-fold Darboux transformations
for the periodic wave (\ref{Jacob-2-intro}). For the one-fold transformation (\ref{one-fold-new}), only one choice
generates the real solution $\hat{u}$ to the mKdV equation (\ref{mKdV}). This choice
corresponds to the real eigenvalue $\lambda_3$ in (\ref{eig-complex-intro}).
Figure \ref{fig-1-fold-2} shows the new solution computed from
the periodic wave (\ref{Jacob-2-intro}) with the parameter values:
\begin{equation}
\label{choice-2}
a = 1.5, \quad b = -0.5, \quad \alpha = -0.5, \quad \beta = 2.
\end{equation}
The new solution displays the algebraic soliton propagating on the background of
the periodic travelling wave.

\begin{figure}[ht]
\centering
\includegraphics[scale=0.5]{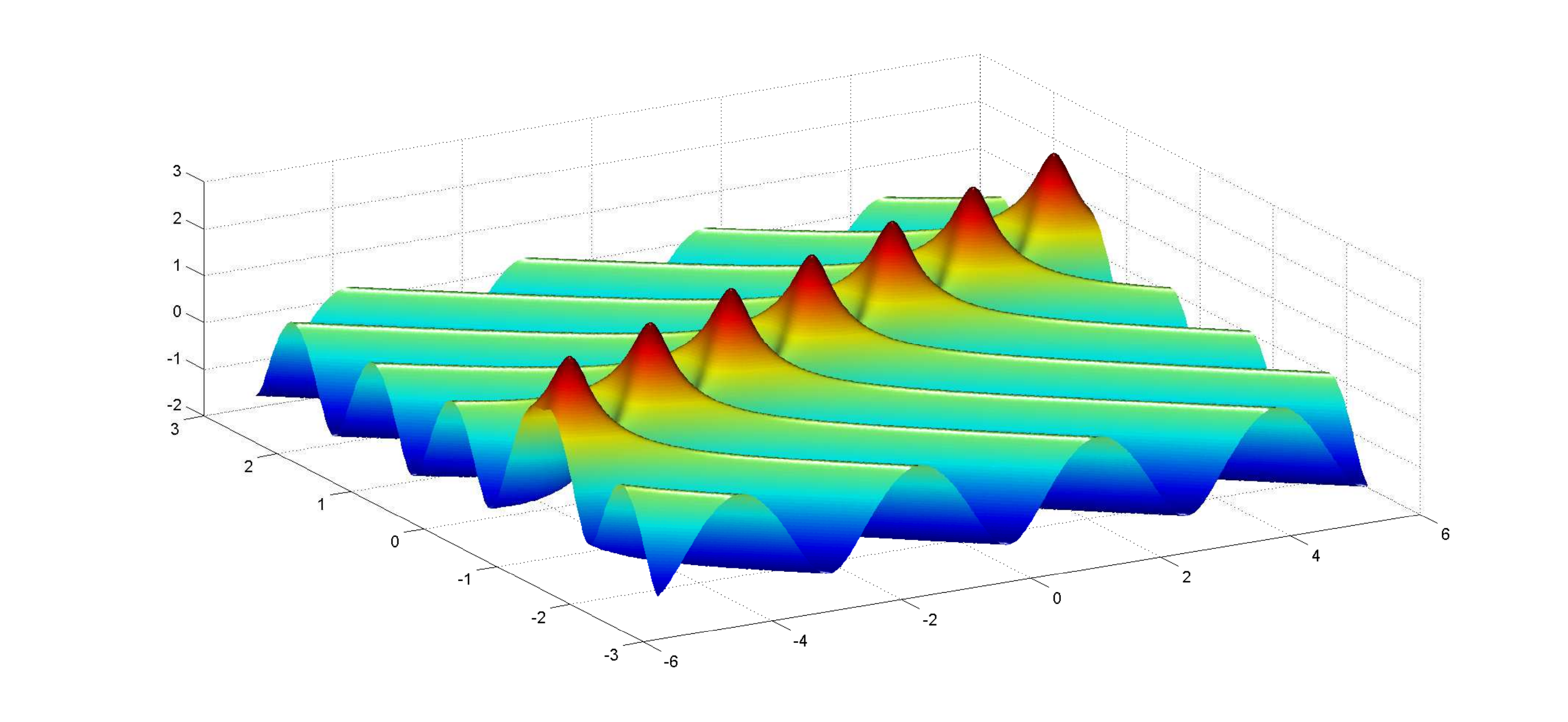}
\caption{Outcome of the one-fold transformation of the periodic solution (\ref{Jacob-2-intro}). } \label{fig-1-fold-2}
\end{figure}

For the two-fold transformation (\ref{two-fold-new}), only one choice
generates the real solution $\hat{u}$ to the mKdV equation (\ref{mKdV}). This choice
corresponds to the complex-conjugate eigenvalues $(\lambda_1,\lambda_2)$ in (\ref{eig-complex-intro}).
Figure \ref{fig-2-fold-2} shows the new solution computed from
the periodic wave (\ref{Jacob-2-intro}) with the same choice of parameters as in (\ref{choice-2}).
The new solution displays a fully localized rogue wave on the background of the periodic travelling wave.

\begin{figure}[ht]
\centering
\includegraphics[scale=0.5]{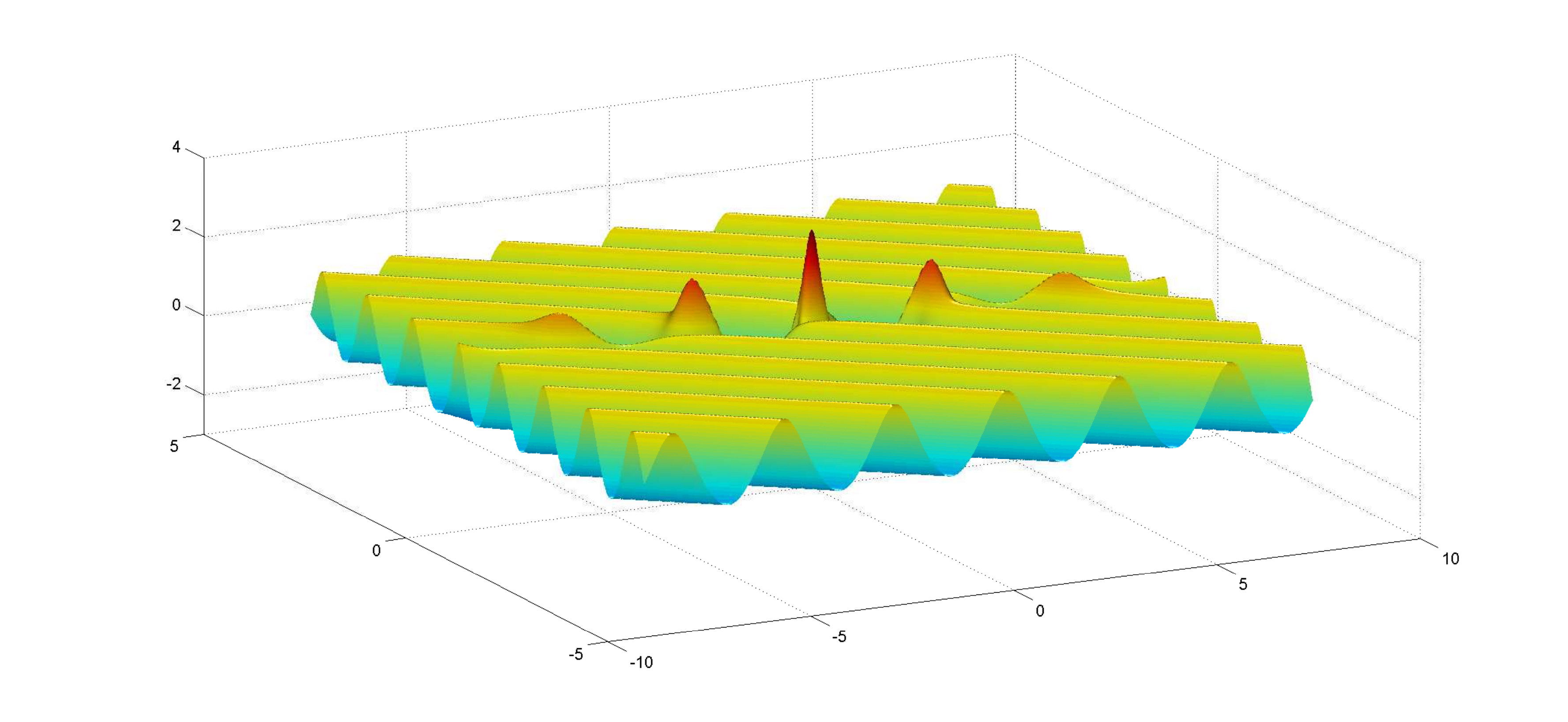}
\caption{Outcome of the two-fold transformation of the periodic solution (\ref{Jacob-2-intro}). } \label{fig-2-fold-2}
\end{figure}

Figure \ref{fig-3-fold-2} shows the outcome of the three-fold transformation
computed from the periodic wave (\ref{Jacob-2-intro}) with the same choice of parameters as in (\ref{choice-2}).
All three eigenvalues $(\lambda_1,\lambda_2,\lambda_3)$ are selected from the list (\ref{eig-complex-intro}).
The new solution displays both the algebraic soliton propagating on the background of the
periodic travelling wave and a localized rogue wave at the center.

\begin{figure}[ht]
\centering
\includegraphics[scale=0.5]{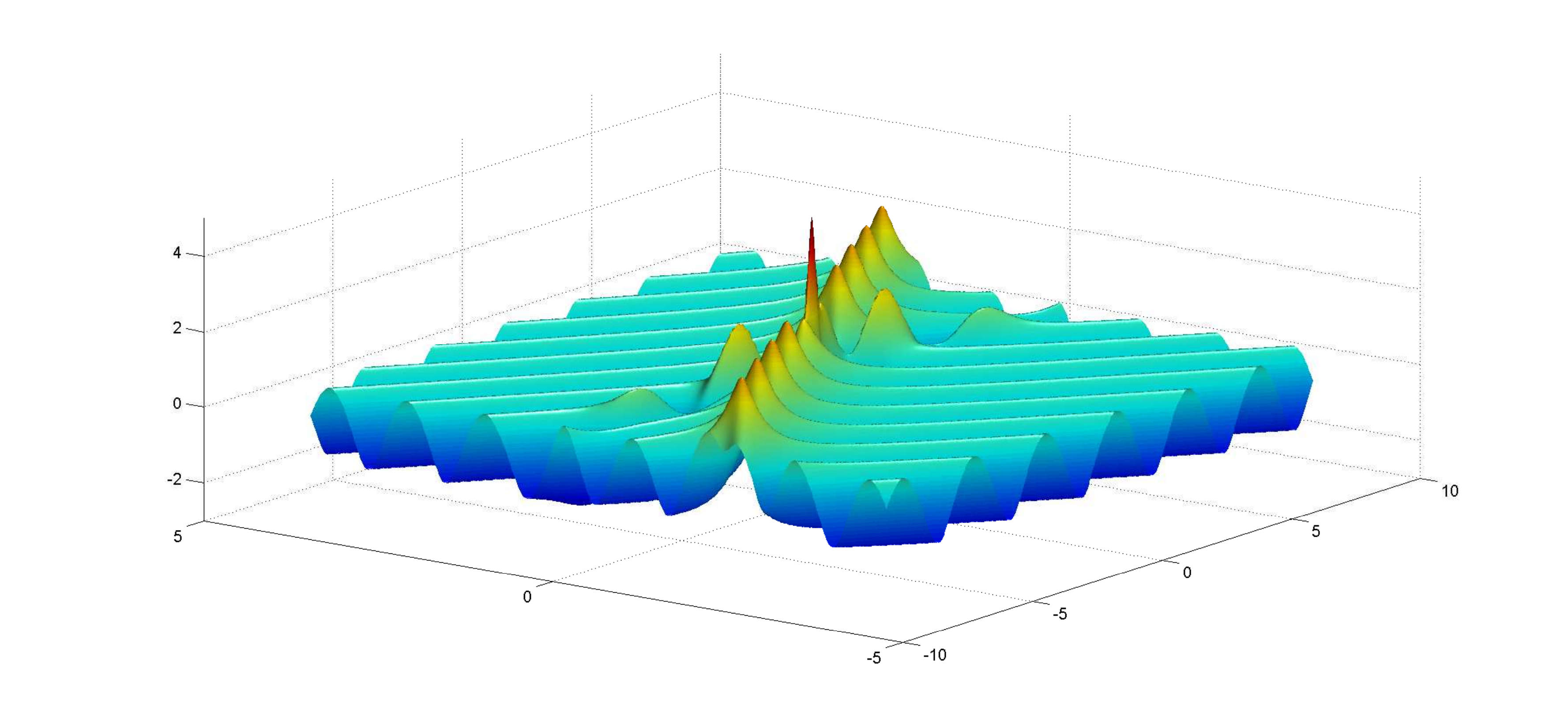}
\caption{Three-fold transformation of the periodic solution (\ref{Jacob-2}). } \label{fig-3-fold-2}
\end{figure}

The proof of Theorem \ref{theorem-rogue-cn} relies on Lemma \ref{lemma-phi} and the following
lemma.

\begin{lemma}
\label{lemma-phi-growth-cn}
Assume $a \neq \pm b$. Under the following two non-degeneracy conditions,
\begin{equation}
\label{nondeg4}
{\rm Im} \oint \frac{\lambda_1^2 (2 i \beta (a-b) - (a+b)^2 + 4 u^2)}{(e + 4 \lambda_1^2 u)^2} dy \neq 0
\end{equation}
and
\begin{equation}
\label{nondeg5}
\oint \frac{\lambda_3^2 (4 \beta^2 + a^2 + b^2 + 6ab + 8 u^2)}{(e + 4 \lambda_3^2 u)^2} dy \neq 0,
\end{equation}
the second solutions to the Lax system (\ref{3.2})--(\ref{3.3}) for
$u$ in (\ref{Jacob-2-intro}) and eigenvalues $\lambda_1,\lambda_2$ in (\ref{eig-complex-intro}) are linearly
growing in $x$ and $t$ everywhere, whereas there exists $c_0 \neq c$ such that the second solution
for $\lambda_3$ in (\ref{eig-complex-intro}) is linearly growing in $x$ and $t$ everywhere
except for the straight line $x - c_0 t = \xi_0$, where $\xi_0$ is the phase parameter
which is not uniquely defined.
\end{lemma}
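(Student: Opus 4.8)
The plan is to follow the scheme of the proof of Lemma \ref{lemma-phi-growth-dn}, treating the real eigenvalue $\lambda_3$ and the complex-conjugate pair $(\lambda_1,\lambda_2)$ separately, and reducing everything to the growth of the scalar factor $\tilde{\phi}_j$ defined as in (\ref{tilde-phi}) with $\lambda_1$ replaced by $\lambda_j$. First I would record that the denominator $e+4\lambda_j^2 u$ stays bounded away from zero along the orbit of $u$. Since $a\neq\pm b$ and $\beta\neq 0$, Lemma \ref{lemma-eig-cn} gives $\lambda_1\neq 0$ and $\lambda_3\neq 0$, hence $e=-4\lambda_1\lambda_2\lambda_3\neq 0$; for the real $\lambda_3$ the quantity $p_3^2+q_3^2$ equals $e+4\lambda_3^2 u$ up to a nonzero constant by (\ref{relation-1-simple}), and it is a sum of squares of a real nonvanishing periodic solution of (\ref{3.2}) (if such a solution vanished at a point it would vanish identically, as noted after Lemma \ref{lemma-time-second}), hence has constant sign and is bounded away from zero; for $\lambda_1$ the identity ${\rm Im}(4\lambda_1^2)=(a-b)\beta$ from Lemma \ref{lemma-eig-cn} yields ${\rm Im}(e+4\lambda_1^2 u)=(a-b)\beta\,u$, which vanishes only where $u=0$, and there $e+4\lambda_1^2 u = e\neq 0$. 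Consequently $|p_j|^2+|q_j|^2\geq|p_j^2+q_j^2|$ is bounded below by a positive constant and above by periodicity, so from (\ref{represent-new}) we get $\hat p_j=p_j\phi_j+O(1)$, $\hat q_j=q_j\phi_j+O(1)$, and $|\hat p_j|^2+|\hat q_j|^2\sim|\phi_j|^2(|p_j|^2+|q_j|^2)$ as $|\phi_j|\to\infty$. Since $\phi_j$ is a nonzero constant multiple of $\tilde\phi_j$, the second solution grows linearly in $(x,t)$ precisely where $\tilde\phi_j$ does.

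For the real eigenvalue $\lambda_3=\frac12(a+b)$ the argument is identical to that of Lemma \ref{lemma-phi-growth-dn}. Let $M_3$ be the mean value over a period of $u$ of the integrand in $\tilde{\phi}_3$. The identity $c-4\lambda_3^2=2(\lambda_1^2+\lambda_2^2-\lambda_3^2)=-\frac14(4\beta^2+a^2+b^2+6ab)$ from Lemma \ref{lemma-eig-cn} gives $c-4\lambda_3^2-2u^2=-\frac14(4\beta^2+a^2+b^2+6ab+8u^2)$, so $M_3\neq 0$ is precisely the non-degeneracy condition (\ref{nondeg5}). Under (\ref{nondeg5}) one has $\tilde{\phi}_3(x,t)=M_3(x-ct)+t+(\text{periodic function of }x-ct)$, exactly as in (\ref{tilde-phi-growth}); hence $|\tilde{\phi}_3|\to\infty$ linearly in $(x,t)$ everywhere except along the straight line $x-c_0 t=\xi_0$ with $c_0=c-M_3^{-1}\neq c$, where $\xi_0$ is fixed only up to the arbitrary constant $\psi_3$.

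For the complex-conjugate pair $\lambda_1=\frac14(a-b)+\frac i2\beta$ and $\lambda_2=\overline{\lambda_1}$, the integrand of $\tilde{\phi}_1$ is complex-valued; let $M_1\in\mathbb{C}$ be its mean value. The identity $c-4\lambda_1^2=\frac12(a+b)^2-i(a-b)\beta$ from Lemma \ref{lemma-eig-cn} gives $c-4\lambda_1^2-2u^2=-\frac12\bigl(2i\beta(a-b)-(a+b)^2+4u^2\bigr)$, so that (\ref{nondeg4}) is equivalent to ${\rm Im}(M_1)\neq 0$. Again $\tilde{\phi}_1(x,t)=M_1(x-ct)+t+(\text{bounded periodic remainder})$; but now, restricting to an arbitrary straight line $x-c't=\xi'$, one gets $\tilde{\phi}_1=M_1\xi'+\bigl(M_1(c'-c)+1\bigr)t+(\text{bounded})$, and since $M_1\notin\mathbb{R}$ the coefficient $M_1(c'-c)+1$ is nonzero for every real $c'$. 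Hence $|\tilde{\phi}_1|$ grows linearly in $t$ along every line, i.e.\ $|\tilde{\phi}_1|\to\infty$ everywhere on the $(x,t)$ plane. Since $u$ is real, complex conjugation of (\ref{phi-final}) gives $M_2=\overline{M_1}$, so ${\rm Im}(M_2)=-{\rm Im}(M_1)\neq 0$ and the same conclusion holds for $\lambda_2$. Together with the first paragraph this gives all the growth claims of Lemma \ref{lemma-phi-growth-cn}.

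The main obstacle is the bookkeeping identifying the non-degeneracy conditions (\ref{nondeg4})--(\ref{nondeg5}) with the analytic conditions ${\rm Im}(M_1)\neq 0$ and $M_3\neq 0$, which relies on the explicit eigenvalue formulas of Lemma \ref{lemma-eig-cn}, together with the conceptually crucial observation that a nonzero imaginary part of $M_1$ obstructs the existence of any exceptional line. This is exactly what distinguishes the behaviour of the second solution for the complex eigenvalues $\lambda_{1,2}$ (growth everywhere, hence a fully localized rogue wave after the two-fold transformation) from that for the real eigenvalue $\lambda_3$ (one exceptional line, hence a propagating algebraic soliton) in Theorem \ref{theorem-rogue-cn}.
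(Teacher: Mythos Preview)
Your proof is correct and follows essentially the same approach as the paper: reduce to the growth of $\tilde{\phi}_j$ as in (\ref{tilde-phi})--(\ref{tilde-phi-growth}), identify the non-degeneracy conditions (\ref{nondeg4})--(\ref{nondeg5}) with ${\rm Im}(M_1)\neq 0$ and $M_3\neq 0$ via the explicit formulas for $c-4\lambda_j^2$ from Lemma \ref{lemma-eig-cn}, and observe that a nonreal $M_1$ forbids any exceptional line while a real nonzero $M_3$ leaves exactly one. You supply more detail than the paper does (non-vanishing of $e+4\lambda_j^2 u$ along the orbit, the explicit restriction-to-lines argument showing $M_1(c'-c)+1\neq 0$ for all real $c'$, and the conjugation $M_2=\overline{M_1}$), but the skeleton is identical.
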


\begin{proof}
In the representations (\ref{tilde-phi}) and (\ref{tilde-phi-growth}), the value of $M_1$ is complex
since $\lambda_1$ in (\ref{eig-complex-intro}) is complex. If ${\rm Im}(M_1) \neq 0$, then
$|\tilde{\phi}_1(x,t)|\to \infty$ as a linear function of $(x,t)$ everywhere
on the $(x,t)$ plane. The same is true for the solution $(\hat{p}_2,\hat{q}_2)$
since $\lambda_2 = \bar{\lambda}_1$. However, the solution $(\hat{p}_3,\hat{q}_3)$
with real $\lambda_3$ is real and the representation (\ref{tilde-phi-growth}) with $M_3 \neq 0$
shows that $|\tilde{\phi}_1(x,t)|$ remains bounded on the straight line $x - c_0 t = \xi_0$, where
$c_0 = c - M_3^{-1}$ and $\xi_0$ is not uniquely defined due to the phase parameter $\psi_3$.
The conditions ${\rm Im}(M_1) \neq 0$ and $M_3 \neq 0$ are equivalent to the non-degeneracy conditions
(\ref{nondeg4}) and (\ref{nondeg5}) respectively, e.g.
$c - 4 \lambda_3^2 = 2 (\lambda_1^2 + \lambda_2^2 - \lambda_3^2) = -(\beta^2 + (a^2 + b^2 + 6ab)/4)$.
\end{proof}

\begin{remark}
By Lemma \ref{lemma-phi-growth-cn}, the new solution $\hat{u}$ after
the one-fold transformation with eigenvalue $\lambda_3$ approaches
the transformed periodic wave $\tilde{u}$ everywhere on the $(x,t)$ plane as $|x| + |t| \to \infty$
except for the line $x-c_0 t = \xi_0$, where the algebraic soliton propagates.
The periodic wave $u$ in (\ref{Jacob-2}) has two extremal
points $u(0) = a$ and $u(L/2) = b$ and our convention is $b < a$.
Table \ref{Table2} shows the periodic background $\tilde{u}$
and the new solution $\hat{u}$ at two extremal points of $u$.
\end{remark}

\begin{remark}
By Lemma \ref{lemma-phi-growth-cn}, the new solution $\hat{u}$ after
the two-fold transformation with eigenvalues $(\lambda_1,\lambda_2)$ approaches
the transformed periodic wave $\tilde{u}$ everywhere on the $(x,t)$ plane as $|x| + |t| \to \infty$.
Hence, this is the proper rogue wave in the sense of Definition \ref{def-rogue}.
Table \ref{Table2} shows the magnification factors of the rogue wave
defined as a ratio between the extremal values of $\hat{u}$ and $\tilde{u}$, e.g.
$M = \max\{|2a-b|/|b|,|2b-a|/|a|\}$.
\end{remark}

\begin{remark}
The three-fold transformation with all three eigenvalues $(\lambda_1,\lambda_2,\lambda_3)$
produces both the rogue wave and the algebraic soliton with the wave speed $c_0$.
Table \ref{Table2} shows that the new wave $\hat{u}$ has triple magnification compared to the
periodic background $\tilde{u}$.
\end{remark}

\begin{table}[ht]
\begin{center}
\begin{tabular}{|c|c|c|c|c|}
\hline
Transformation & $\tilde{u}(0)$ & $\hat{u}(0)$ & $\tilde{u}(L/2)$ & $\hat{u}(L/2)$ \\
\hline
one-fold with $\lambda_3$ & $-b$ & $2a + b$ & $-a$ & $2 b + a$\\
two-fold with $(\lambda_1,\lambda_2)$ & $b$ & $2a - b$ & $a$ & $2 b - a$\\
three-fold with $(\lambda_1,\lambda_2,\lambda_3)$ & $-a$ & $3a$ & $-b$ & $3 b$\\
\hline
\end{tabular}
\end{center}
\caption{Characteristics of rogue waves on the periodic background (\ref{Jacob-2-intro}).}
\label{Table2}
\end{table}

\section{Conclusion}
\label{sec-8}

We have addressed the most general periodic travelling wave solutions to the mKdV equation
(\ref{mKdV}) and obtained the periodic solutions to the Lax system (\ref{3.2})--(\ref{3.3})
for three particular pairs of eigenvalues $\lambda$ away from the imaginary axis.
For the family of periodic waves (\ref{Jacob-1-intro}) generalizing the
{\em dn}-periodic wave (\ref{dn-intro}), the three pairs of eigenvalues
$\pm \lambda_1$, $\pm \lambda_2$, $\pm \lambda_3$ are real.
For the family of periodic waves (\ref{Jacob-2-intro}) generalizing the {\em cn}-periodic
wave (\ref{cn-intro}), one pair $\pm \lambda_3$ is real and two pairs $\pm \lambda_1$, $\pm \lambda_2$
form a quadruplet of complex eigenvalues.
By using the Darboux transformations, we have showed that transformations
involving the periodic eigenfunctions remain in the class of the same periodic wave solutions,
whereas transformations involving second nonperiodic solutions to the Lax system
(\ref{3.2})--(\ref{3.3}) for the same eigenvalues generate new solutions
on the background of the periodic waves. Among new solutions, one solution is a rogue wave
on the periodic background satisfying (\ref{rogue-wave-def}),
whereas all others are algebraic solitons propagating on the periodic background.
The rogue wave exist on the background of the periodic wave (\ref{Jacob-2-intro})
which is expected to be modulationally unstable with respect to perturbations of long periods.

Let us summarize the outcomes of the algebraic method
for periodic solutions to the mKdV equation (\ref{mKdV})
expressed by the Riemann Theta functions of genus $g$ with $g = 0,1,2$.
These solutions are obtained by degeneration of the three Dubrovin variables
$\mu_1$, $\mu_2$, and $\mu_3$ related to the periodic solution $u$
in (\ref{mu1}), (\ref{mu2}), and (\ref{mu3}).

\begin{itemize}
\item[$g=0$] If $\mu_1 = \mu_2 = \mu_3 = 0$, then $u(x,t) = u_1$ is a constant wave with $u_1 \in \mathbb{R}$.
This corresponds to only one pairs of real eigenvalues $\pm \lambda_1$ with $\lambda_1 = u_1$ and
the constant solution to the Lax system (\ref{3.2})--(\ref{3.3}).

\item[$g=1$] If $\mu_2 = \mu_3 = 0$, then $u(x,t) = u(x-ct)$ satisfies the second-order equation
$$
\frac{d^2 u}{dx^2} + 2 u^3 - c u = 0,
$$
which is solved by two families of the solutions (\ref{Jacob-1-intro}) and (\ref{Jacob-2-intro})
in the case $e = 0$. One solution corresponds to $u_4 = -u_1$, $u_3 = -u_2$ and it generalizes the
{\em dn}-periodic wave (\ref{dn-intro}), whereas the other solution corresponds to $b = -a$, $\alpha = 0$, $\beta \neq 0$
and it generalizes the {\em cn}-periodic wave (\ref{cn-intro}). As is shown in \cite{CPkdv}, the algebraic method
produces only two pairs of eigenvalues $\pm \lambda_1$, $\pm \lambda_2$
with the periodic eigenfunctions of the Lax equations (\ref{3.2})--(\ref{3.3}),
where
$$
\lambda_1 = \frac{1}{2}(u_1+u_2), \quad \lambda_2 = \frac{1}{2}(u_1-u_2)
$$
for the periodic solution (\ref{Jacob-1-intro}) and
$$
\lambda_1 = \frac{1}{2}(a + i \beta), \quad \lambda_2 = \frac{1}{2}(a - i \beta)
$$
for the periodic solution (\ref{Jacob-2-intro}).

\item[$g=2$] If $\mu_3 = 0$, then $u(x,t) = u(x-ct)$ satisfies the third-order equation
$$
\frac{d^3 u}{dx^3} + 6 u^2 \frac{du}{dx} - c \frac{du}{dx} = 0,
$$
which is solved by two families of the solutions (\ref{Jacob-1-intro}) and (\ref{Jacob-2-intro})
in the general case $e \neq 0$.
As is shown here, the algebraic method produces only three pairs of eigenvalues $\pm \lambda_1$, $\pm \lambda_2$, $\pm \lambda_3$
with the periodic eigenfunctions of the Lax equations (\ref{3.2})--(\ref{3.3}).
\end{itemize}

Based on the summary above, it is natural to conjecture that the solution $u$ to the mKdV equation
(\ref{mKdV}) expressed by quasi-periodic Riemann Theta function of genus $g$
is related to exactly $g+1$ pairs of eigenvalues in the spectral problem (\ref{3.2}) with
the quasi-periodic eigenfunctions of the same periods. Moreover, location of these eigenvalues is related
to parameters of the Riemann Theta functions. It is also natural
to conjecture that no other eigenvalues $\lambda$ with the quasi-periodic eigenfunctions exist away from the imaginary axis.
To the best of our knowledge, these mathematical questions have not been solved in the literature, in spite of the large amount of
publications on the mKdV equation. Solving these problems in future looks an interesting
question of fundamental significance with many potential applications.

\vspace{0.5cm}

{\bf Acknowledgements.} The work of J.C. was supported by the National Natural Science Foundation
of China (No.11471072). The work of D.P. is supported by the State task program in the sphere
of scientific activity of Ministry of Education and Science of the Russian Federation
(Task No. 5.5176.2017/8.9) and from the grant of President of Russian Federation
for the leading scientific schools (NSH-2685.2018.5).

\end{document}